\newcommand{\myexpnull}[1]{\mathsf{E}_{\infty}\left[#1\right]}
\newcommand{\probnull}[1]{\mathsf{P}_{\infty}\left\{#1\right\}}
\newcommand{\myall}{\mathsf{ALL}}
\newcommand{\mymax}{\mathsf{MAX}}
\newcommand{\myhall}{\mathsf{HALL}}
\newcommand{\DetRange}{r_d}
\newcommand{\InfRange}{r_i}
\newcommand{\ROI}{\mathsf{ROI}}
\newcommand{\mx}{\mathsf{MAX}}
\newcommand{\all}{\mathsf{ALL}}
\newcommand{\hall}{\mathsf{HALL}}
\newcommand{\CUSUM}{\mathsf{CUSUM}}
\newcommand{\add}{\mathsf{SADD}}
\newcommand{\tfa}{\mathsf{ARL2FA}}
\newcommand{\pfi}{\mathsf{PFI}}
\newcommand{\wsn}{\mathsf{WSN}}
\newtheorem{theorem}{Theorem}
\newtheorem{proof}{Proof}
\newcommand{\pbmeasure}[1]{{\sf P}_1^{(i)}\left\{#1\right\}}
\newcommand{\pmeasure}[1]{{\sf P}_1^{({\bf d}(\ell_{e}))}\left\{#1\right\}}
\renewcommand{\leq}{\leqslant}
\renewcommand{\geq}{\geqslant}
\newcommand{\prob}[1]{\mathsf{P}\left\{#1\right\}}
\def\smallskip{\vskip\smallskipamount}
\def\medskip{\vskip\medskipamount}
\def\bigskip{\vskip\bigskipamount}
\newcommand{\qed}{\hfill \rule{2.5mm}{2.5mm}}
{\theorembodyfont{\rmfamily}
\newtheorem{definition}{Definition}}
\newtheorem{lemma}{Lemma}
{\theorembodyfont{\rmfamily}
\newtheorem{example}{Example}} %\label{exp:}
\newcommand{\nn}{\nonumber \\}
\begin{document}
%======================================================================%
\title{Distributed Detection/Isolation Procedures 
       for Quickest Event Detection in \\
	   Large Extent Wireless Sensor Networks}
%======================================================================%
%
%
% author names and IEEE memberships
% note positions of commas and nonbreaking spaces ( ~ ) LaTeX will not break
% a structure at a ~ so this keeps an author's name from being broken across
% two lines.
% use \thanks{} to gain access to the first footnote area
% a separate \thanks must be used for each paragraph as LaTeX2e's \thanks
% was not built to handle multiple paragraphs
%
%======================================================================%
\author{K.~Premkumar$^\S$,      %~\IEEEmembership{Member,~IEEE,}
        Anurag~Kumar$^\dagger$, %~\IEEEmembership{Fellow,~IEEE,}
        and Joy~Kuri$^\ddagger$ %,~\IEEEmembership{}% <-this % stops a space
\thanks{$\S$ K.~Premkumar's work in this paper was done during his 
        doctoral work at the Indian Institute of Science, Bangalore,
		India. He is currently with the Hamilton Institute, National
		University of Ireland, Maynooth, Ireland \newline (e--mail:
		Premkumar.Karumbu@nuim.ie).}
\thanks{$\dagger$ Anurag~Kumar is with the Department of Electrical
        Communication Engineering, Indian Institute of Science,
		Bangalore -- 560 012, India (e--mail:
		anurag@ece.iisc.ernet.in).}
\thanks{$\ddagger$ Joy Kuri is with the Centre for Electronics Design
        and Technology, Indian Institute of Science, Bangalore -- 560
		012, India (e--mail: kuri@cedt.iisc.ernet.in).}
\thanks{This is a revised and expanded version of a paper that was 
        presented in the 47th Annual Allerton Conference on 
		Communication, Control, and Computing, 2009. This work was 
		supported by a Project on Wireless Sensor Networks, funded by 
		DRDO, Government of India. The work of the second author was 
		also supported, in part, by the Department of Science and 
		Technology, through a J.C. Bose Fellowship.
		}
%\doublespacing
%\thanks{Manuscript received April 19, 2005; revised January 11, 2007.}
}
\maketitle

%\doublespacing
%======================================================================|
\begin{abstract}
%======================================================================|
We study a problem of distributed detection of a stationary point event
in a large extent wireless sensor network ($\wsn$), where the event
influences the observations of the sensors only in the vicinity of where
it occurs. An event occurs at an unknown time and at a random location
in the coverage region (or region of interest ($\ROI$)) of the $\wsn$.
We consider a general sensing model in which the effect of the event at
a sensor node depends on the distance between the event and the sensor
node; in particular, in the Boolean sensing model, all sensors in a disk
of a given radius around the event are equally affected. Following the
prior work reported in 
\cite{nikiforov95change_isolation},
\cite{nikiforov03lower-bound-for-det-isolation},
\cite{tartakovsky08multi-decision}, {\em the 
problem is formulated as that of detecting the event and locating it to
a subregion of the $\ROI$ as early as possible under the constraints
that the average run length to false alarm ($\tfa$) is bounded below by
$\gamma$, and the probability of false isolation ($\pfi$) is bounded
above by $\alpha$}, where $\gamma$ and $\alpha$ are target performance
requirements.  In this setting, we propose distributed procedures for
event detection and isolation (namely $\mx$, $\all$, and $\hall$), based
on the local fusion of $\CUSUM$s at the sensors. For these procedures,
we obtain bounds on the maximum mean detection/isolation delay ($\add$),
and on $\tfa$ and $\pfi$, and thus provide an upper bound on $\add$ as
$\min\{\gamma,1/\alpha\} \to \infty$.  For the Boolean sensing model,
we show that an asymptotic upper bound on the maximum mean
detection/isolation delay of our distributed procedure scales with
$\gamma$ and $\alpha$ in the same way as the asymptotically optimal
centralised procedure 
\cite{nikiforov03lower-bound-for-det-isolation}.
%======================================================================|
\end{abstract}
%======================================================================|

\vspace{-3mm}

%======================================================================|
\begin{IEEEkeywords}
%======================================================================|
Disorder problem, distributed quickest change detection, detection with
distance dependent sensing, fusion of $\CUSUM$s, multi--decision
change--point detection, multi--hypothesis change detection
%======================================================================|
\end{IEEEkeywords}
%======================================================================|

\IEEEpeerreviewmaketitle

%======================================================================|
\section{Introduction}
\label{sec:introduction}
%======================================================================|
Event detection is an important application for which a wireless sensor
network ($\wsn$) is deployed. A number of sensor nodes (or ``motes'')
that can sense, compute, and communicate are deployed in a region of
interest ($\ROI$) in which the occurrence of an event (e.g., crack in a
structure) has to be detected. In our work, we view {\em an event as
being associated with a change in the distribution (or cumulative
distribution function) of a physical quantity that is sensed by the
sensor nodes}. Thus, the work we present in this paper is in the
framework of quickest detection of change in a random process. In the
case of small extent networks, where the coverage of every sensor spans
the whole $\ROI$, and where we assume that an event affects all the
sensor nodes in a statistically equivalent manner, we obtain the
classical change detection problem whose solution is well known 
(see, for example, \cite{shiryayev63},
\cite{stat-sig-proc.page54continuous-inspection-schemes},
\cite{stat-sig-proc.tartakovsky-veeravalli03quickest-change-detection}).
In \cite{premkumar_etal10det_over_mac} and
\cite{premkumar_kumar08infocom}, we have studied variations of the
classical problem in the $\wsn$ context, where there is a wireless
communication network between the sensors and the fusion
centre~\cite{premkumar_etal10det_over_mac}, and where there is a cost
for taking sensor measurements~\cite{premkumar_kumar08infocom}.

However, in the case of large extent networks, where the $\ROI$ is large
compared to the coverage region of a sensor, an event (e.g., a crack in
a huge structure, gas leakage from a joint in a storage tank) affects
sensors that are in its proximity; further the effect depends on the
distances of the sensor nodes from the event. Since the location of the
event is unknown, {\em the post--change distribution of the observations
of the sensor nodes are not known}. In this paper, we are interested in
obtaining procedures for detecting and locating an event in a large
extent network. This problem is also referred to as {\em change
detection and isolation} 
(see \cite{nikiforov95change_isolation},
\cite{nikiforov03lower-bound-for-det-isolation},
\cite{tartakovsky08multi-decision}, 
\cite{stat-sig-proc.mei05information-bounds},
\cite{lai00multi-hypothesis-testing}). 
Since the $\ROI$ is large, a large number of sensors are deployed to
cover the $\ROI$, making a centralised solution infeasible. In our work,
{\em we seek distributed algorithms for detecting and locating an event,
with small detection delay, subject to constraints on false alarm and
false isolation}. The distributed algorithms require only local
information from the neighborhood of each node.
%======================================================================|

%======================================================================|
%Pollak, Tartakovsky \cite{pollak-tartakovsky08asymp-exp},\\
%Siegmund \cite{siegmund85seq-analysis},\\
%Tartakovsky \cite{tartakovsky05asymp-perf-mutichart-cusum},\\
%The previous work on decentralised change detection
%\cite{stat-sig-proc.tartakovsky-veeravalli03quickest-change-detection},
%\cite{stat-sig-proc.mei05information-bounds},
%\cite{stat-sig-proc.tartakovsky-kim06decentralized}, etc., focuses on
%collocated networks where the postchange distribution of observations is
%known.  However, 
%======================================================================|
\subsection{Discussion of Related Literature}
%======================================================================|
The problem of sequential change detection/isolation with a finite set
of post--change hypotheses was introduced by Nikiforov
\cite{nikiforov95change_isolation}, where he 
studied the  change
detection/isolation problem with the observations being conditionally
independent, and proposed a non--Bayesian procedure which is shown to be
maximum mean detection/isolation delay optimal, as the average run
lengths to false alarm and false isolation go to $\infty$. Lai
\cite{lai00multi-hypothesis-testing} considered the multi--hypothesis
change detection/isolation problem with stationary pre--change and
post--change observations, and obtained asymptotic lower bounds for the
maximum mean detection/isolation delay.

Nikiforov also studied a change detection/isolation problem under the
average run length to false alarm ($\tfa$) and the probability of false isolation
($\pfi$) constraints \cite{nikiforov03lower-bound-for-det-isolation}, in which he
showed that a ${\sf CUSUM}$--like {\em recursive} procedure is asymptotically
maximum mean detection/isolation delay optimal among the procedures that
satisfy $\tfa\geq\gamma$ and $\pfi\leq\alpha$ asymptotically, as  
$\min\{\gamma,1/\alpha\}\to\infty$ . Tartakovsky in
\cite{tartakovsky08multi-decision} also studied the change
detection/isolation problem where he proposed recursive matrix ${\sf
CUSUM}$ and recursive matrix Shiryayev--Roberts tests, and showed that
they are asymptotically maximum mean delay optimal 
over the constraints $\tfa\geq\gamma$ and $\pfi\leq\alpha$ asymptotically, as  
$\min\{\gamma,1/\alpha\}\to\infty$.

Malladi and Speyer \cite{malladi_speyer99shiryayev_isolation} 
studied a Bayesian change detection/isolation problem and obtained a
mean delay optimal centralised procedure which is a threshold based rule
on the a posteriori probability of change corresponding to each
post--change hypothesis.	

%To the best of our knowledge, all the previous work in the literature
%considered centralised procedures for change detection/isolation.
Centralised procedures incur high communication costs and distributed
procedures would be desirable. In this paper, we study distributed
procedures based on $\CUSUM$ detectors at the sensor nodes where the
$\CUSUM$ detector at sensor node $s$ is driven only by the observations
made at node $s$. Also, in the case of large extent networks, the
post--change distribution of the observations of a sensor node, in
general, depends on the distance between the event and the sensor node
which is unknown.
%======================================================================|

%======================================================================|
\subsection{Summary of Contributions}
%======================================================================|
\begin{enumerate}
\item As the ${\sf WSN}$ considered is of large extent, the post--change
	  distribution is unknown, and could belong to a set of
	  alternate hypotheses.  
	  In Section~\ref{sec:problem_formulation}, we formulate the event
      detection/isolation problem in a large extent network in the
	  framework of 
      \cite{nikiforov03lower-bound-for-det-isolation},
      \cite{tartakovsky08multi-decision}
	  as a maximum mean detection/isolation delay minimisation problem subject to an
	  average run length to false alarm ($\tfa$) and probability of
	  false isolation ($\pfi$) constraints. 
	  
\item We propose distributed detection/isolation procedures ${\sf
      MAX}$, ${\sf ALL}$, and ${\sf HALL}$ ({\bf H}ysteresis modified 
	  {\bf ALL}) for large extent networks in
	  Section~\ref{sec:distributed_change_detection_isolation_procedures}.
	  The procedures 
	  ${\sf MAX}$ and ${\sf ALL}$ are extensions of the
	  decentralised procedures $\sf{MAX}$
	  \cite{stat-sig-proc.tartakovsky-veeravalli03quickest-change-detection}
	  and $\sf{ALL}$ \cite{stat-sig-proc.mei05information-bounds},
	  \cite{agt-vvv08}, which were developed for small extent networks.
	  The distributed procedures %$\sf{MAX}$, $\sf{ALL}$, and $\sf{HALL}$
	  are energy--efficient compared to the centralised procedures.
	  Also, the known centralised procedures are applicable only for the
	  Boolean sensing model. 
	  
\item In 
      Section~\ref{sec:distributed_change_detection_isolation_procedures},
	  we first obtain bounds on $\tfa$, $\pfi$, and maximum mean 
	  detection/isolation delay ($\add$) for the distributed procedures
	  $\sf{MAX}$, $\sf{ALL}$, and $\sf{HALL}$. These bounds are then
	  applied to get an upper bound on the $\add$ for the procedures
	  when $\tfa \geq \gamma$, and $\pfi \leq \alpha$, where $\gamma$
	  and $\alpha$ are some performance requirements. For the case of
	  the Boolean sensing model, we compare the $\add$ 
	  of the distributed procedures with
	  that of Nikiforov's procedure
      \cite{nikiforov03lower-bound-for-det-isolation}
	  (a centralised asymptotically optimal procedure) 
	  and show that the
an asymptotic upper bound on the maximum mean
detection/isolation delay of our distributed procedure scales with
$\gamma$ and $\alpha$ in the same way as that of  
\cite{nikiforov03lower-bound-for-det-isolation}.
\end{enumerate}
%======================================================================|

%======================================================================|
\section{System Model}
\label{sec:system_model}
%======================================================================|
Let $\mathcal{A} \subset \mathbb{R}^2$ be the region of interest
($\ROI$) in which $n$ sensor nodes are deployed. All nodes are equipped
with the same type of sensor (e.g., acoustic). %and we assume that all the
%deployed sensors have identical characteristics). 
Let
$\ell^{(s)}\in\mathcal{A}$ be the location of sensor node $s$, and
define ${\bm \ell} :=[\ell^{(1)},\ell^{(2)},\cdots,\ell^{(n)}]$. We
consider a discrete--time system, with the basic unit of time being one
slot, indexed by $k=0,1,2,\cdots,$ the slot $k$ being the time interval
$[k,k+1)$. The sensor nodes are assumed to be time--synchronised (see,
for example, \cite{solis-etal06time-synch}), and at the beginning of
every slot $k \geqslant 1$, each sensor node $s$ samples its environment
and obtains the observation $X_k^{(s)}\in\mathbb{R}$.
%======================================================================|

%======================================================================|
\subsection{Change/Event Model}
%======================================================================|
An event (or change) occurs at an unknown time $T \in \{1,2,\cdots\}$
and at an unknown location $\ell_e \in \mathcal{A}$. We consider only
stationary (and permanent or persistent) point events, i.e., an event
occurs at a point in the region of interest, and {\em having occurred,
stays there forever}. Examples that would motivate such a model are 1)
gas leakage in the wall of a large storage tank, 2) excessive strain at
a point in a large 2--dimensional structure. In
\cite{tartakovsky-report} and \cite{premkumar_etal10iwap}, the authors
study change detection problems in which the event stays only for a
finite random amount of time.

An event is viewed as a source of some physical signal that can be
sensed by the sensor nodes.  Let $h_e$ be the signal strength of the
event\footnote{In case, the signal strength of the event is not known,
but is known to lie in an interval $[\underline{h}, \overline{h}]$, we
work with $h_e = \underline{h}$ as this corresponds to the least
Kullback--Leibler divergence between the ``{\em event not occurred}''
hypothesis and the ``{\em event occurred}'' hypothesis. See
\cite{stat-sig-proc.tartakovsky-polunchenko08change-det-with-unknown-param}
for change detection with unknown parameters for a collocated network.}. 
A sensor at a distance $d$ from the event senses a signal  $h_e \rho(d)
+ W$, where $W$ is a random zero mean noise, and $\rho(d)$ is
the distance dependent loss in signal strength which is a
decreasing function of the distance $d$, with $\rho(0)=1$. We
assume an isotropic distance dependent loss model, whereby the
signal received by all sensors at a distance $d$ (from the event)
is the same.

\begin{example} {\bf The Boolean model} (see \cite{sensing-models}): In
this model, the signal strength that a sensor receives is the same
(which is given by $h_e$) when the event occurs within a distance of
$\DetRange$ from the sensor and is 0 otherwise. Thus, for a Boolean 
sensing model,
\begin{eqnarray*}  
\rho(d) & = & \left\{
\begin{array}{ll}
1, & \text{if} \ d \leqslant \DetRange\\
0, & \text{otherwise}. 
\end{array}
\right.
\end{eqnarray*}
\end{example}
\begin{example} {\bf The power law path--loss model} (see
\cite{sensing-models}) is given by
\begin{eqnarray*}  
\rho(d) & = & d^{-\eta}, 
\end{eqnarray*} 
for some path loss exponent $\eta > 0$. For free space, $\eta = 2$.
\end{example}

%----------------------------------------------------------------------|
\subsection{Detection Region and Detection Partition}
\label{sec:detection-range}
%----------------------------------------------------------------------|
In Example 2, we see that the signal from an event varies continuously
over the region. Hence, unlike the Boolean model, there is no clear
demarcation between the sensors that observe the event and those that do
not. Thus, in order to facilitate the design of a distributed detection
scheme with some performance guarantees, in the remainder of this
section, we will define certain regions around each sensor.
\begin{definition}
Given $0 < \mu_1 \leqslant h_e$, the {\bf Detection Range} $\DetRange$
of a sensor is defined as
the distance from the sensor within which the occurrence of an event
induces a signal level of at least $\mu_1$, i.e.,
\begin{align*}
\DetRange &:=  \sup\left\{d : h_e \rho(d) \ge \mu_1\right\}. 
\end{align*}
\hfill\qed
\end{definition}

\vspace{-4mm}

In the above definition, $\mu_1$ is a design parameter that defines
the acceptable detection delay. For a given signal strength $h_e$, a
large value of $\mu_1$ results in a small detection range $\DetRange$ (as
$\rho(d)$ is non--increasing in $d$). We will see in
Section~\ref{sec:average_detection_delay}
(Eqn.~\eqref{eqn:sadd_max_all_hall}) that the $\add$
of the distributed change detection/isolation procedures we propose,
depends on the detection range $\DetRange$, and that a small $\DetRange$ (i.e., a
large $\mu_1$) results in a small $\add$, while
requiring more sensors to be deployed in order to achieve coverage of
the ${\sf ROI}$.

%======================================================================|
\begin{figure}[t]
\centering
\includegraphics[width = 55mm, height = 50mm]{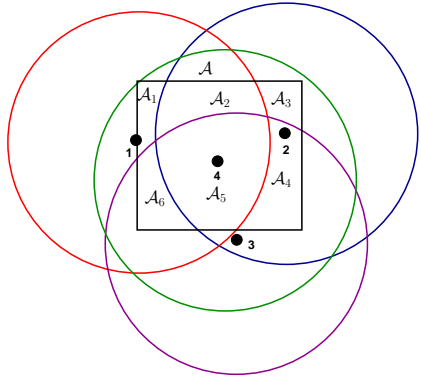}
\caption{{\bf Partitioning of $\mathcal{A}$ in a large $\wsn$ by
         detection regions}: (a simple example) The coloured solid 
		 circles
		 around each sensor node denote their detection regions. The
		 four sensor nodes divide the $\ROI$, indicated
		 by the square region, into regions $\mathcal{A}_1, \cdots,
		 \mathcal{A}_6$ such that region $\mathcal{A}_i$ is
		 detection--covered by a unique set of sensors $\mathcal{N}_i$.
		 For example, ${\cal A}_1$ is detection covered by the set of
		 sensors ${\cal N}_1 = \{1,2,4\}$, etc.	 
         }
\label{fig:coverage}
\end{figure}
%======================================================================|
We say that a location $x \in \ROI$ is {\em detection--covered} by
sensor node $s$, if $\|\ell^{(s)}-x\| \leqslant \DetRange$.  For any
sensor node $s$, $\mathcal{D}^{(s)} := \{x \in \mathcal{A} :
\|\ell^{(s)}-x\| \leqslant \DetRange \}$ is called its {\em
detection--coverage region} (see Fig.~\ref{fig:coverage}).  {\em We
assume that the sensor deployment is such that every $x \in \mathcal{A}$
is detection--covered by at least one sensor} (Fig.~\ref{fig:coverage}). For each $x \in \mathcal{A}$, define
$\mathcal{N}(x)$ to be the largest set of sensors by which $x$ is
detection--covered, i.e., $\mathcal{N}(x) := \{s : x \in {\cal
D}^{(s)}\}$. Let $\mathcal{C}(\mathcal{N}) = \{\mathcal{N}(x) : x \in
{\cal A} \}$. $\mathcal{C}(\mathcal{N})$ is a finite set and
can have at most $2^n-1$ elements.  Let $N =
|\mathcal{C}(\mathcal{N})|$. For each $\mathcal{N}_i \in
\mathcal{C}(\mathcal{N})$, we denote the corresponding
detection--covered region by $\mathcal{A}_i = \mathcal{A}(\mathcal{N}_i)
:= \{x \in \ROI : \mathcal{N}(x) = \mathcal{N}_i \}$.  Evidently, the
${\cal A}_i, 1 \leqslant i \leqslant N$, partition the $\ROI$.  We say
that the $\ROI$ is {\em detection--partitioned} into a {\em minimum
number of subregions}, $\mathcal{A}_1, \mathcal{A}_2, \cdots,
\mathcal{A}_N$, such that the subregion $\mathcal{A}_i$ is
detection--covered by a unique set of sensors $\mathcal{N}_i$, and
$\mathcal{A}_i$ is the maximal detection--covered region of
$\mathcal{N}_i$, i.e., $\forall i \neq i'$, $\mathcal{N}_i \neq
\mathcal{N}_{i'}$ and $\mathcal{A}_i\cap \mathcal{A}_{i'}=\emptyset$.
See Fig.~\ref{fig:coverage} for an example.  
%----------------------------------------------------------------------|

%----------------------------------------------------------------------|
\subsection{Sensor Measurement Model}
\label{subsec:measurement_model}
%----------------------------------------------------------------------|
Before change, i.e., for $k < T$, the observation $X_k^{(s)}$ at the
sensor $s$ is just the zero mean sensor noise $W_k^{(s)}$, the
probability density function (pdf) of which is denoted by $f_0(\cdot)$
({\em pre--change pdf}). After change, i.e., for $k \geqslant T$ with the
location of the event being $\ell_e$, the observation of sensor $s$ is
given by $X_k^{(s)} = h_e\rho(d_{e,s}) + W_k^{(s)}$ where $d_{e,s} :=
\|\ell^{(s)} - \ell_e\|$, the pdf of which is denoted by
$f_1(\cdot;d_{e,s})$ ({\em post--change pdf}). The noise processes
$\{W_k^{(s)}\}$ are independent and identically distributed (iid) across
time and across sensor nodes. In the rest of the paper, {\em we consider
$f_0(\cdot)$ to be Gaussian with mean 0 and variance
$\sigma^2$.}

We denote the probability measure when the change happens at time $T$
and at location $\ell_e$ by ${\sf P}^{({\bf
d}(\ell_e))}_{T}\left\{\cdot\right\}$, where ${\bf d}(\ell_e) =
[d_{e,1},d_{e,2},\cdots,d_{e,n}]$, and the corresponding expectation
operator by ${\sf E}^{({\bf d}(\ell_e))}_{T}\left[\cdot\right]$.  In the
case of Boolean sensing model, the post--change pdfs depend only on the
detection subregion where the event occurs, and hence, we denote the
probability measure when the event occurs at $\ell_e \in {\cal A}_i$ and
at time $T$ by ${\sf P}^{(i)}_{T}\left\{\cdot\right\}$, and the
corresponding expectation operator by ${\sf
E}^{(i)}_{T}\left[\cdot\right]$.

%----------------------------------------------------------------------|

%----------------------------------------------------------------------|
\subsection{Local Change Detectors}
%----------------------------------------------------------------------|
We compute a $\CUSUM$ statistic $C_k^{(s)}, k\geq 1$ at each sensor $s$
based only on its own observations. 
The $\CUSUM$ procedure was proposed by Page 
{\cite{stat-sig-proc.page54continuous-inspection-schemes} as a solution
to the classical change detection problem (${\sf CDP}$, in which there
is one pre--change hypothesis and only one post--change hypothesis).
The optimality of $\CUSUM$ was shown for conditionally iid observations
by Moustakides in \cite{moustakides86optimal-stopping-times} for a 
maximum mean delay metric introduced by Pollak \cite{pollak85} which is 
$\mathsf{SADD}(\tau) :=$ $\underset{T \geqslant 1}{\sup} \ \ 
{\mathsf E}_T\left[\tau-T|\tau \geq T\right]$.

The driving term of $\CUSUM$ should
be the log likelihood--ratio (LLR) of $X_k^{(s)}$ defined as
$Z^{(s)}_k(d_{e,s}) :=
\ln\left(\frac{f_1(X_k^{(s)};d_{e,s})}{f_0(X_k^{(s)})}\right)$. As the
location of the event $\ell_e$ is unknown, the distance $d_{e,s}$ is
also unknown. Hence, one cannot work with the pdfs $f_1(\cdot;d_{e,s})$.
We propose to drive the $\CUSUM$ at each node $s$ with
$Z^{(s)}_k(\DetRange)$, where we recall that $\DetRange$ is the
detection range of a sensor.  Based on the $\CUSUM$ statistic
$C_k^{(s)}, k\geq 1$, sensor $s$ computes a sequence of local decisions
$D_k^{(s)} \in \{0,1\}, k \geq 1$, where 0 represents no--change and 1
represents change. For each set of sensor nodes ${\cal N}_i$ that
detection partitions the $\ROI$, we define $\tau^{({\cal N}_i)}$, the
stopping time (based on the sequence of local decisions $D_k^{(s)}$s for
all $s \in {\cal N}_i$) at which the set of sensors ${\cal N}_i$ detects
the event. The way we obtain the local decisions $D_k^{(s)}$ from the
$\CUSUM$ statistic $C_k^{(s)}, k \geq 1$, and the way these local
decisions determine the stopping times $\tau^{({\cal N}_i)}$, varies 
from rule to rule. Specific rules for local decision and the fusion of 
local decisions will be described in
Section~\ref{sec:distributed_change_detection_isolation_procedures} 
(also see \cite{mandar-etalfusum}). 

An implementation strategy for our distributed event detection/isolation
procedure can be the following. We assume that the sensors know to
which detection sensor sets ${\cal N}_i$s they belong. This could be
done by initial configuration or by self--organisation. When the local
decision of sensor $s$ is 1, it broadcasts this fact to
all sensors in its detection neighbourhood.  In practise, the broadcast
range of these radios is substantially larger than the detection range.
Hence, the local decision of $s$ is learnt by all sensors $s'$ that
belong to ${\cal N}_i$ to which $s$ belongs. When any node learns that
all the sensors in ${\cal N}_i$ have reached the local decision 1, it
transmits an alarm message to the base station \cite{thuli09thesis}. A
distributed leader election algorithm can be implemented so that only
one, or a controlled number of alarms is sent. 
This alarm message is
carried by geographical forwarding \cite{naveen-kumar10geo-forwarding}.
A system that utilises such local fusion (but with a different sensing
and detection model) was developed by us and is reported in
\cite{wsn10smart-detect}.

\begin{comment}
Event detection with local fusion rules has been studied
in a variety of contexts in 
\cite{mandar-etalfusum}, 
\cite{thuli09thesis}, and
In \cite{mandar-etalfusum}, Nadgir et al.  proposed various local
decision rules and fusion strategies for a large extent ${\sf WSN}$
which is deployed for event detection.  Tuli studied the problem of
fusion of local alarms in a large extent ${\sf WSN}$ in
\cite{thuli09thesis}.  Naveen and Kumar
\cite{naveen-kumar10geo-forwarding} studied the problem of optimal
forwarding of alarms in a ${\sf WSN}$ with sleep--wake cycling.  A
research report of the activities of a practical ${\sf WSN}$
prototype/testbed for human intruder detection is described by Kumar et
al. in \cite{wsn10smart-detect}.
\end{comment}
%----------------------------------------------------------------------| 

%----------------------------------------------------------------------| 
\subsection{Influence Region}
\label{subsec:influence_region}
%----------------------------------------------------------------------|
After a set of nodes ${\cal N}_i$ declares an event, the
event is {\em isolated} to a region associated with ${\cal N}_i$ 
called the influence region. 
In the Boolean sensing model, if an event occurs in ${\cal A}_i$, then
only the sensors $s \in {\cal N}_i$ observe the event, while the other
sensors ${s' \notin {\cal N}_i}$ only observe noise. On the other hand,
in the power law path--loss model, sensors ${s' \notin {\cal
N}_i}$ can also observe the event, and the driving term of the $\CUSUM$s of sensors
$s'$ may be affected by the event. The mean of the driving term of
$\CUSUM$ of any sensor $s$ is given by
\begin{eqnarray}\label{eqn:proof_of_lemma_1}
{\mathsf E}_{f_1(\cdot;d_{e,s})} [Z_k^{(s)}(\DetRange)] 
&  = & \frac{(h_e\rho(\DetRange))^2}{2\sigma^2}
	   \left(\frac{2\rho(d_{e,s})}{\rho(\DetRange)} - 1\right).
\end{eqnarray}
Thus, the mean of the increment that drives $\CUSUM$ of node
$s$ decreases with $d_{e,s}$ and becomes negative when $2\rho(d_{e,s}) <
\rho(\DetRange)$. In this region, we are interested in finding $T_E$, the
expected time for the $\CUSUM$ statistic $C_k^{(s)}$ to cross the
threshold $c$. Define $\tau^{(s)} := \inf \left\{k: C_k^{(s)} \geqslant
c\right\}$, and hence, $T_E =  {\sf E}_{1}^{({\bf
d}(\ell_e))}\left[\tau^{(s)}\right]$.

\begin{lemma}
\label{lem:mean-time-tfi}
If the distance between sensor node $s$ and the event, $d_{e,s}$ is such
that $2\rho(d_{e,s}) < \rho(\DetRange)$, then  
\begin{eqnarray*}
T_E & \geqslant &\exp(\omega_0 c) 
\end{eqnarray*}
where 
$\omega_0 = 1 - \frac{2\rho(d)}{\rho(\DetRange)}$. 
\end{lemma}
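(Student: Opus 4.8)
The plan is to analyse the random walk that drives the local $\CUSUM$ at node $s$ and to exploit the exponential change of measure associated with $\omega_0$. Write $Z_k := Z_k^{(s)}(\DetRange)$ for the driving log--likelihood ratios, put $S_k := \sum_{j=1}^k Z_j$ with $S_0 := 0$, and recall that $C_k = (C_{k-1}+Z_k)^+$, equivalently $C_k = S_k - \min_{0\leq j\leq k} S_j$, so that $\tau^{(s)} = \inf\{k : C_k \geq c\}$. Under the post--change measure with the event at $\ell_e$ the $\{Z_k\}$ are iid; since $f_0 = \mathcal{N}(0,\sigma^2)$ and $X_k^{(s)} = h_e\rho(d_{e,s}) + W_k^{(s)}$, each $Z_k$ is Gaussian with mean given by \eqref{eqn:proof_of_lemma_1}, which is strictly negative exactly when $2\rho(d_{e,s}) < \rho(\DetRange)$. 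A one--line computation of the Gaussian moment generating function then shows that $\mathsf{E}[e^{\theta Z_k}] = 1$ has precisely the two roots $\theta = 0$ and $\theta = \omega_0 = 1 - 2\rho(d_{e,s})/\rho(\DetRange)$, and that $\omega_0 > 0$ under the hypothesis of the lemma; hence $Y_k := e^{\omega_0 S_k}$ is a nonnegative martingale with $Y_0 = 1$.

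The next step is a single--excursion bound. Let $T_c := \inf\{k\geq 1 : S_k \geq c\}$ and $T_0 := \inf\{k\geq 1 : S_k \leq 0\}$, and note that $T_0 < \infty$ almost surely because $S_k \to -\infty$. Applying optional sampling to $Y_k$ at the bounded times $T_c\wedge T_0\wedge n$, letting $n\to\infty$, and using $e^{\omega_0 S_{T_c}}\geq e^{\omega_0 c}$ on $\{T_c < T_0\}$ together with $Y\geq 0$ elsewhere, yields $p^{*} := \mathsf{P}\{T_c < T_0\}\leq e^{-\omega_0 c}$. Here $\{T_c < T_0\}$ is exactly the event that the $\CUSUM$, started from $0$, reaches level $c$ before returning to $0$.

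Finally I would use the regenerative structure of the $\CUSUM$: at each epoch at which $C_k = 0$ the process restarts independently, so the number $R$ of such rounds until the round in which $c$ is first reached is geometric with parameter $p^{*}$ (and $p^{*} > 0$ since the Gaussian increments are unbounded), whence $\mathsf{E}[R] = 1/p^{*} \geq e^{\omega_0 c}$. Since every round lasts at least one slot, $\tau^{(s)}\geq R$ pointwise, and therefore $T_E = \mathsf{E}[\tau^{(s)}]\geq \mathsf{E}[R]\geq e^{\omega_0 c} = \exp(\omega_0 c)$, which is the claim.

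I expect the only delicate points to be the optional--sampling step and the formalisation of the renewal decomposition. For the former, the passage $n\to\infty$ is immediate once one observes that $Y_{T_c\wedge T_0\wedge n}$ is bounded above by $e^{\omega_0 c}$ on $\{T_c\wedge T_0 > n\}$, so no uniform--integrability subtlety arises. For the latter, one invokes the strong Markov property at the successive return times of $C_k$ to $0$ to conclude that the rounds are iid with crossing probability $p^{*}$; a coarser argument using only the supermartingale $e^{\omega_0 C_k} - k$ would give merely $T_E \geq e^{\omega_0 c} - 1$, which is why the renewal argument is worth the extra care. The moment generating function computation and the identification of $\omega_0$ are routine.
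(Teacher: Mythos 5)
Your proof is correct, and it is genuinely more self-contained than the paper's. The paper disposes of this lemma by citing a textbook result (Basseville--Nikiforov, Eq.~5.2.79) for the bound ${\sf E}_1^{({\bf d}(\ell_e))}[\tau^{(s)}]\geq \exp(\omega_0 c)$ and only identifying $\omega_0$ as the nonzero root of the moment generating function equation for the increment $Z_k^{(s)}(\DetRange)$ (which, incidentally, the paper writes as ${\sf E}[e^{\omega_0 Z}]=0$; your computation makes clear it should read $=1$, and for the Gaussian increments this root is exactly $1-2\rho(d_{e,s})/\rho(\DetRange)$, matching the paper's Eqn.~(1)). You instead reprove that classical bound from scratch: the exponential martingale $e^{\omega_0 S_k}$ with optional stopping on a single excursion gives the per-round crossing probability bound $p^*\leq e^{-\omega_0 c}$, and the regeneration of the $\CUSUM$ at its zeros turns this into ${\sf E}[\tau^{(s)}]\geq {\sf E}[R]=1/p^*\geq e^{\omega_0 c}$, since every round lasts at least one slot. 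This is essentially the argument underlying the cited textbook inequality (the repeated-SPRT representation of the $\CUSUM$ plus a Wald-type exponential bound on the probability of exiting at the upper threshold), so the two routes agree in substance; what yours buys is a complete verification, including the correct form of the defining equation for $\omega_0$, the positivity of $\omega_0$ under the hypothesis $2\rho(d_{e,s})<\rho(\DetRange)$, and the observation that the cruder supermartingale argument would only give $e^{\omega_0 c}-1$, which is why the excursion decomposition is worth doing. The only point worth stating explicitly if you wrote this up is that the excursions are i.i.d.\ because the increments are i.i.d.\ under ${\sf P}_1^{({\bf d}(\ell_e))}$ (change at time $1$) and the $\CUSUM$ regenerates at the exact state $0$, which you do note via the strong Markov property.
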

%----------------------------------------------------------------------|
\begin{proof}
%----------------------------------------------------------------------|
From (Eqn.~5.2.79 pg. 177 of) \cite{basseville-nikiforov93detection}, we
can show that ${\sf E}_{1}^{({\bf d}(\ell_e))}\left[\tau^{(s)}\right]
\geqslant \exp(\omega_0 c)$ where $\omega_0$ is the solution to the
equation
\[
{\sf E}_{1}^{({\bf d}(\ell_e))}\left[ e^{\omega_0 Z_k^{(i)}(\DetRange)}\right] = 0,
\] 
which is given by $\omega_0 = 1 - \frac{2\rho(d)}{\rho(\DetRange)}$ 
(see Eqn.~\eqref{eqn:proof_of_lemma_1}). 
\end{proof}
%----------------------------------------------------------------------|
We would be interested in ${\sf T_E} \geqslant
\exp(\underline{\omega}_0\cdot c)$ for some $0 < \underline{\omega}_0 < 1$.
We now define the {\em influence
range} of a sensor as follows. 

\begin{definition}{\bf Influence Range} of a sensor, $\InfRange$, is 
defined as the distance from the sensor within which the occurrence of 
an event can be detected within a mean delay of 
$\exp{(\underline{\omega}_0 c)}$ where $\underline{\omega}_0$ 
is a parameter of interest and $c$ is the threshold of the local $\CUSUM$ detector.
Using Lemma~\ref{lem:mean-time-tfi}, we see that 
$\InfRange =\min\{d' : 2\rho(d') \leqslant (1 - \underline{\omega}_0)
\rho(\DetRange)\}$. 
\qed
\end{definition}
A location $x \in {\cal A}$ is influence covered by a sensor
$s$ if $\|\ell^{(s)}-x\| \leq \InfRange$, and a set of sensors ${\cal
N}_j$ is said to influence cover $x$ if each sensor $s \in {\cal N}_j$
influence covers $x$.

From Lemma~\ref{lem:mean-time-tfi}, we see that by having a large value
of $\underline{\omega}_0$, i.e., $\underline{\omega}_0$ close to 1, the
sensors that are beyond a distance of $\InfRange$ from the event take a
long time to cross the threshold.  However, we see from the definition
of influence range that a large value of $\underline{\omega}_0$ gives a
large influence range $\InfRange$. We will see from the discussion in
Section~\ref{subsec:discussion} that a large influence range results in
the isolation of the event to a large subregion of ${\cal A}$. On the
other hand, from Section~\ref{sec:average_time_to_false_isolation}, we
will see that a large $\underline{\omega}_0$ decreases the probability
of false isolation, a performance metric of change detection/isolation
procedure, which we define in Section~\ref{sec:problem_formulation}.

We define the {\em influence--region} of sensor $s$ as
$\mathcal{T}^{(s)} \ := \ \{x \in \mathcal{A} : \|\ell^{(s)}-x\|
\leqslant \InfRange    \}$. For the Boolean sensing model, $\InfRange
= \DetRange$, and hence, ${\cal D}^{(s)} = {\cal T}^{(s)}$ for all $1
\leqslant s \leqslant n$, and for the power law path--loss sensing
model, $\InfRange > \DetRange$, and hence, ${\cal D}^{(s)} \subset {\cal
T}^{(s)}$ for all $1 \leqslant s \leqslant n$ 
(see Fig.~\ref{fig:tfi_coverage}).

\begin{figure}[t]
\centering
\subfigure[Detection and influence regions of the Boolean model]
{ 
\includegraphics[width = 45mm, height = 45mm]{coverage_new}
\label{fig:coverage_new_left}
}
	\hspace{10mm}
\subfigure[Detection and influence regions of the power law path loss model]
{ 
\includegraphics[width = 45mm, height = 45mm]{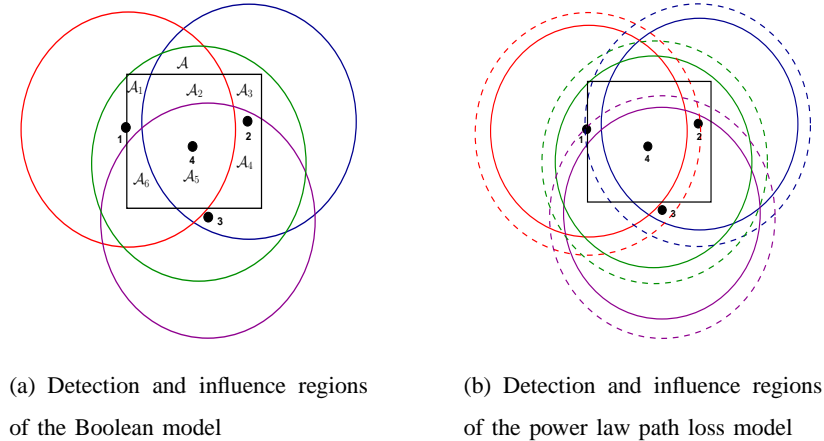}
}
\caption{{\bf Influence and detection regions}:
         A simple example of partitioning of $\mathcal{A}$ in a large
		 $\wsn$. The coloured solid circles around each sensor node denote
		 their detection regions. The four sensor nodes, in the figure, 
		 divide the $\ROI$, indicated by the square region, into regions
		 $\mathcal{A}_1, \cdots, \mathcal{A}_6$ such that region
		 $\mathcal{A}_i$ is detection--covered by a unique set of
		 sensors $\mathcal{N}_i$. The
		 dashed circles represent the influence regions. In the Boolean
		 model, the influence region of a sensor coincides with
		 its detection region.
         }
\label{fig:tfi_coverage}
\end{figure}
Recalling the sets of sensors ${\cal N}_i$, $1 \leqslant i \leqslant N$,
defined in Section~\ref{sec:detection-range}, we define the {\em
influence region of the set of sensors} $\mathcal{N}_i$ as the region
$\mathcal{B}_i$ such that each $x \in \mathcal{B}_i$ is within the
influence range of all the sensors in $\mathcal{N}_i$, i.e.,
$\mathcal{B}_i \ := \ {\cal B}({\cal N}_i) \ 
 :=  \bigcap_{s \in \mathcal{N}_i } \mathcal{T}^{(s)}$.  
Note that $\mathcal{A}(\mathcal{N}_i) =  \left(\underset{s \in
\mathcal{N}_i }{\bigcap} \mathcal{D}^{(s)}\right) \bigcap
\left(\underset{s' \notin \mathcal{N}_i }{\bigcap}
\overline{\mathcal{D}^{(s')}}\right)$, where $\overline{{\cal D}}$ is
the complement of the set ${\cal D}$, and ${\cal D}^{(s)} \subseteq
{\cal T}^{(s)}$. Hence, $\mathcal{A}(\mathcal{N}_i) \subseteq
\mathcal{B}(\mathcal{N}_i)$. For the power law path--loss sensing model,
${\cal D}^{(s)} \subset {\cal T}^{(s)}$ for all $1 \leqslant s \leqslant
n$, and hence, $\mathcal{A}(\mathcal{N}_i) \subset
\mathcal{B}(\mathcal{N}_i)$ for all $1 \leqslant i \leqslant N$. For the
Boolean sensing model,
$\mathcal{A}(\mathcal{N}_i) =  
\mathcal{B}(\mathcal{N}_i)   
\bigcap
\left(\underset{s' \notin \mathcal{N}_i }{\bigcap}
\overline{\mathcal{D}^{(s')}}\right)$, and hence 
$\mathcal{A}(\mathcal{N}_i) =
\mathcal{B}(\mathcal{N}_i)$ only when ${\cal N}_i = \{1,2,\cdots,n\}$.
Thus, for a general sensing model, $\mathcal{A}(\mathcal{N}_i) \subseteq
\mathcal{B}(\mathcal{N}_i)$. We note here that 
in the Boolean and the power law path loss models, 
an event which does not lie in the detection subregion of ${\cal N}_i$,
but lies in its influence subregion (i.e., $\ell_e \in
\mathcal{B}(\mathcal{N}_i)\setminus\mathcal{A}(\mathcal{N}_i)$) can
be detected due to ${\cal N}_i$ because of the stochastic nature
of the observations; in the power law path loss sensing model, this is
also because of  the difference in losses $\rho(d_{e,s})$ between different
sensors.%, where we recall that $d_{e,s}$ is the distance of the event from
%sensor $s$, and $\rho(\cdot)$ is the distance dependent loss function.

\noindent
{\bf Remark:} The definition of the 
detection and influence ranges have involved two design
parameters $\mu_1$ and $\underline{\omega}_0$ which can be used to
``tune'' the performance of the distributed detection schemes
that we develop.

\hfill\qed

\vspace{-4mm}

%======================================================================|
\subsection{Isolating the Event}
\label{subsec:discussion}
%======================================================================|
In Section II D, we provided an outline of a class of distributed
detection procedures that will yield a stopping rule. On stopping, a
decision for the location of the event is made, which is called {\em
isolation.} In 
Section~\ref{sec:distributed_change_detection_isolation_procedures}, we
will provide specific distributed detection/isolation procedures in
which stopping will be due to one of the sensor sets ${\cal N}_i$.

An event occurring at location $\ell_e \in {\cal A}_i$ can influence
sensors $s'$ which influence cover $\ell_e$, and hence, the detection
can be due to sensors ${\cal N}_i \neq {\cal N}_j$ which influence cover
$\ell_e$. Thus, we isolate the event to the influence region of the
sensors that detect the event. 
Because of noise,
detection can be due to a sensor set ${\cal N}_{h}$ which does not
influence cover the event. Such an error event is called false
isolation. 
\begin{comment}
In the Boolean sensing model, an event occurring at $\ell_e \in {\cal
A}_i$ is influence covered only by sensors $s' \in {\cal N}_i$. Hence,
the detection due to any ${\cal N}_j \subseteq {\cal N}_i$ corresponds
to the isolation of the event, and that due to  
${\cal N}_j \not\subseteq {\cal N}_i$ corresponds
to false isolation. 
\end{comment}

An event occurring at $\ell_e \in {\cal
A}_i$ is influence covered by sensors $s' \in {\cal N}(\ell_e) : = \{s:
\|\ell^{(s)}-\ell_e\| \leq \InfRange\}$. Hence, the detection due to
any ${\cal N}_j \subseteq {\cal N}(\ell_e)$ corresponds to the isolation of the event, and that due to  
${\cal N}_j \not\subseteq {\cal N}(\ell_e)$ corresponds
to false isolation. Note that in the case of Boolean sensing model 
${\cal N}(\ell_e) = {\cal N}_i$.

In Section~\ref{sec:problem_formulation}, we formulate the problem of
quickest detection of an event and {\em isolating the event to one of
the influence subregions $\mathcal{B}_1,
\mathcal{B}_2,\cdots,\mathcal{B}_N$} under a false alarm and false
isolation constraint. 
%======================================================================| 

%======================================================================| 
\section{Problem Formulation}
\label{sec:problem_formulation}
%======================================================================| 
We are interested in studying the {\em problem of distributed event
detection/isolation} in the setting developed in
Section~\ref{sec:system_model}. Given a sample node deployment (i.e.,
given ${\bm\ell}$), and {\em having chosen a value of the detection
range, $\DetRange$}, we partition the $\ROI$, $\mathcal{A}$ into the
detection--subregions, $\mathcal{A}_1, \mathcal{A}_2, \cdots,
\mathcal{A}_N$. Let $\mathcal{N}_i$ be the set of sensors that
detection--cover the region $\mathcal{A}_i$. Having chosen the influence
range $\InfRange$, the influence region $\mathcal{B}_i$ of the set of
sensor nodes $\mathcal{N}_i$ can be obtained. We define the following
set of hypotheses
\begin{eqnarray*}
{\bf H}_0     & : & \text{event not occurred}, \\ 
{\bf H}_{T,i} & : & \text{event occurred at time $T$ in subregion} 
                    \ \mathcal{A}_i, \ \ T=1,2,\cdots, \ i=1,2,\cdots,N.
\end{eqnarray*}

The event occurs in one of the detection subregions ${\cal A}_i$, but we
will only be able to isolate it to one of the influence subregions
${\cal B}_i$ that is consistent with the ${\cal A}_i$ (see 
Section~\ref{subsec:discussion}).  We study
distributed procedures described by a stopping time $\tau$, and an
isolation decision $L(\tau) \in \{1,2,\cdots,N\}$ (i.e., the tuple
$(\tau, L)$) that detect an event at time $\tau$ and locate it to
$L(\tau)$ (i.e., to the influence region $\mathcal{B}_{L(\tau)}$) subject to a false alarm
and false isolation constraint.  The {\em false alarm constraint}
considered is the average run length to false alarm  $\tfa$, and the
{\em false isolation constraint} considered is the probability of false
isolation  $\pfi$, each of which we define as follows.
\begin{definition} 
\label{def:overall_tfa}
The {\bf Average Run Length to False Alarm} $\tfa$ of a change
detection/isolation procedure $\tau$ is defined as the
expected number of samples taken under null hypothesis ${\bf H}_0$
to raise an alarm, i.e., 
\begin{eqnarray*}
\tfa(\tau) & := & {\sf E}_\infty\left[\tau\right], 
\end{eqnarray*}
where ${\sf E}_\infty[\cdot]$ is the expectation operator (with the
corresponding probability measure being ${\sf P}_\infty\{\}$) when the
change occurs at infinity. 
\qed
\end{definition}

\begin{comment}
We define the pairwise probability of false isolation when the event
occurs at $\ell_e \in {\cal A}_i$ and is detected by ${\cal N}_j$ which
is not in consistent with ${\cal N}_i$
(see Section~\ref{subsec:discussion}).  
\begin{definition} 
\label{def:overall_pfi}
The {\bf Pairwise Probability of False Isolation} $\pfi_{ij}$ of a change
detection/isolation procedure $\tau$ is defined as the supremum of the
probabilities of making an isolation decision $j$ under  
hypothesis ${\bf H}_{1,i}$ (i.e., the event occurs at time 1 and at
location $\ell_e \in {\cal A}_i$), where the supremum is taken over all
possible $\ell_e \in {\cal A}_i$, i.e.,
\begin{eqnarray}
\pfi_{ij}(\tau) 
& := & \sup_{\ell_e \in {\cal A}_i} \ 
{\mathsf P}_1^{({\bf d}(\ell_{e}))}\left\{L(\tau) = j 
%\mid \tau \geq 1
\right\}%\\
%& = & \sup_{\ell_e \in {\cal A}_i} \ 
%{\mathsf P}_1^{({\bf d}(\ell_{e}))}\left\{\ell(\tau) = j 
%\right\}
\end{eqnarray}
where $L(\tau)$ is the isolation decision taken at $\tau$, the time of
alarm.
\qed
\end{definition}
In the case of Boolean sensing model, the post--change pdfs depend only
on the index $i$ of the detection subregion where the event occurs, and
hence, the $\pfi_{ij}$ is given by 
\begin{eqnarray*}
\pfi_{ij}(\tau) 
& := & 
{\mathsf P}_1^{(i)}\left\{L(\tau) = j 
\right\}
\end{eqnarray*}
In the above definition, the change occurs at time 1, which is
along the lines of the class of detection/isolation rules studied by
Tartakovsky \cite{tartakovsky08multi-decision}. 
\end{comment}

\begin{definition} 
\label{def:overall_pfi}
The {\bf Probability of False Isolation} $\pfi$ of a change
detection/isolation procedure $\tau$ is defined as the
supremum of the probabilities of making an incorrect isolation decision,
i.e.,
%at time $\tau \geq T$, i.e., 
\begin{eqnarray*}
\pfi(\tau) \ := \
\max_{1 \leq i \leq N} \ 
 \sup_{\ell_e \in {\cal A}_i} \ 
\max_{1 \leq j \leq N, {\cal N}_j \not\subseteq{\cal N}(\ell_e)} \
{\mathsf P}_1^{({\bf d}(\ell_{e}))}\left\{L(\tau) = j \right\}
\end{eqnarray*}
where we recall that ${\cal N}(\ell_e) = \{s:\|\ell^{(s)}-\ell_e\|\leq
\InfRange\}$ is the set of sensors that influence covers $\ell_e
\in {\cal A}_i$.
\qed
\end{definition}
In the case of Boolean sensing model, the post--change pdfs depend only
on the index $i$ of the detection subregion where the event occurs, and
hence, the $\pfi$ is given by 
\begin{eqnarray*}
\pfi(\tau) \ := \
\max_{1 \leq i \leq N} \ 
\max_{1 \leq j \leq N, {\cal N}_j \not\subseteq{\cal N}_i} \
{\mathsf P}_1^{(i)}\left\{L(\tau) = j \right\}.
\end{eqnarray*}
In  
\cite{nikiforov03lower-bound-for-det-isolation}, 
Nikiforov defined the probability of false isolation, also, over the set
of all possible change times, as 
%\begin{eqnarray*}
${\sf SPFI}(\tau) 
\ := \ 
\sup_{1 \leq i \leq N} \ \ 
\sup_{1 \leq j\neq i \leq N} \ \
\sup_{T \geq 1} \ {\mathsf P}_T^{(i)}\left\{L(\tau) = j \mid \tau \geq
T\right\}$.
%\end{eqnarray*}
Define the following classes of change detection/isolation procedures,
\begin{eqnarray*}
{\Delta}(\gamma,\alpha) &:=& \left\{(\tau,L): \tfa(\tau) \geq \gamma,
{\sf SPFI}(\tau) \leqslant \alpha \right\}, \\
\widetilde\Delta(\gamma,\alpha) &:=& \left\{(\tau,L): \tfa(\tau) \geq \gamma,
\pfi(\tau) \leqslant \alpha \right\}.
\end{eqnarray*} 

\begin{comment}
In the classical change detection problem (${\sf CDP}$), there is one
pre--change hypothesis and only one post--change hypothesis, and hence,
there is no question of a false isolation.  In
{\cite{stat-sig-proc.page54continuous-inspection-schemes}, Page proposed
the $\CUSUM$ procedure for the ${\sf CDP}$ in a non--Bayesian setting.
The optimality of $\CUSUM$ for any ${\sf ARL2FA} > 0$ was shown for
conditionally iid observations by Moustakides in
\cite{moustakides86optimal-stopping-times}.  For a stationary
pre--change and post--change observations, Lai \cite{lai98} showed that
the $\CUSUM$ procedure is asymptotically (as $\tfa \to \infty$) maximum 
mean delay optimal with respect to the detection delay metric
introduced by Pollak \cite{pollak85} which is  $\mathsf{SADD}(\tau) :=$
$\underset{T \geqslant 1}{\sup} \ \ {\mathsf E}_T\left[\tau-T|\tau \geq
T\right]$.
\end{comment}

\begin{comment}
In this work, we are interested in obtaining a detection/isolation
procedure $\tau$ (for a multi--hypothesis post--change problem) that
defines a stopping time and, upon stopping, isolates the event to a set
of locations in the ${\sf ROI}$.  The error events that can occur are
the false alarm and the false isolation which are measured in terms of
the average run length to false alarm $\tfa$ and the probability of
false isolation $\pfi$ (defined above).
\end{comment}
We define the supremum average
detection delay $\add$ performance for the procedure $\tau$, in the same
sense as Pollak \cite{pollak85} (also see
\cite{nikiforov03lower-bound-for-det-isolation}), as the maximum mean 
number of samples taken under any hypothesis ${\bf H}_{T,i}, \
i=1,2,\cdots,N$, to raise an alarm, i.e.,
\begin{eqnarray*}
\add(\tau) := 
\underset{\ell_e \in \mathcal{A}}{\sup} \ 
\underset{T \geqslant 1}{\sup} \ {\mathsf E}^{({\bf
d}({\ell_e}))}_T\left[\tau-T|\tau \geq T\right].
\end{eqnarray*}
%where ${\bf d}({\ell_e}) := [d_{e,1}, d_{e,2}, \cdots, d_{e,n}]$,
%$d_{e,j} := \|\ell_e - \ell_j\|$. Note that
%${\mathsf E}^{({\bf d})}_T[\cdot]$ is the conditional expectation operator (and
%${\mathsf P}^{({\bf d})}_T\{\cdot\}$ is the corresponding probability measure)
%given $\tau \geq T$, when the change happens at time $T$ and at the
%location $\ell_e \in \mathcal{A}$, such that the distance between sensor
%$j$ and the event is $d_{e,j}$, and ${\bm d} =
%[d_{e,1},d_{e,2},\cdots,d_{e,n}]$. Thus, 
We are interested in obtaining an
optimal procedure $\tau$ that minimises the $\add$ subject to the
average run length to false alarm and the probability of false isolation
constraints,
\begin{eqnarray*}
\inf             
& & 
    \underset{\ell_e \in \mathcal{A}}{\sup} \ 
    \underset{T \geqslant 1}{\sup} \ 
    {\mathsf E}^{({\bf d}(\ell_e))}_T\left[\tau-T|\tau \geq T \right]\\
\text{subject to} & &
\begin{array}{rcl}
\tfa(\tau) & \geqslant & \gamma\\ 
\pfi(\tau) & \leqslant & \alpha. 
\end{array}
\end{eqnarray*}

The change detection/isolation problem that we pose here is motivated
by the framework of 
\cite{nikiforov95change_isolation},
\cite{nikiforov03lower-bound-for-det-isolation},
\cite{tartakovsky08multi-decision},
which we discuss in the next subsection.
%======================================================================|

%======================================================================|
\subsection{Centralised Recursive Solution for the Boolean Sensing Model}
%======================================================================|
In \cite{nikiforov03lower-bound-for-det-isolation}, Nikiforov and in
\cite{tartakovsky08multi-decision}, Tartakovsky studied a change
detection/isolation problem that involves $N > 1$ post--change
hypotheses (and one pre--change hypothesis). Thus, their formulation can
be applied to our problem. But, in their model, 
the pdf of ${\bf X}_k$ for $k \geq T$, 
under hypothesis ${\bf H}_{T,i}$, 
 $g_i$ is completely known. It should be
noted that in our problem, in the case of power law path--loss sensing
model, the pdf of the observations under any post--change hypothesis is
unknown as the location of the event is unknown. The problem posed by
Nikiforov 
\cite{nikiforov03lower-bound-for-det-isolation} is
\begin{eqnarray}
\label{eqn:problem1}
\inf_{(\tau,L) \in {\Delta}(\gamma,\alpha)} 
& & 
    \underset{1 \leqslant i \leqslant N}{\sup} \ 
    \sup{\mathsf E}^{(i)}_T\left[\tau-T|\tau \geq T\right],
\end{eqnarray}
and that by Tartakovsky  \cite{tartakovsky08multi-decision} is
\begin{eqnarray}
\label{eqn:problem2}
\inf_{(\tau,L)\in\widetilde{\Delta}(\gamma,\alpha)} 
& & 
    \underset{1 \leqslant i \leqslant N}{\sup} \ 
    \sup{\mathsf E}^{(i)}_T\left[\tau-T|\tau \geq T\right].
\end{eqnarray}
Nikiforov 
\cite{nikiforov03lower-bound-for-det-isolation} and Tartakovsky  
\cite{tartakovsky08multi-decision} obtained 
asymptotically optimal {\em centralised change detection/isolation}
procedures as $\min\{\gamma,\frac{1}{\alpha}\} \to \infty$, the $\add$
of which is given by the following theorem.

\begin{theorem}[Nikiforov 03]
\label{thm:niki}
For the $N$--hypotheses change detection/isolation problem (for the Boolean
sensing model) defined in 
Eqn.~\eqref{eqn:problem1}, the asymptotically maximum mean delay optimal 
detection/isolation procedure 
$\tau^{\sf *}$ has the property, 
\begin{eqnarray*}
\label{eqn:niki-add}
\add(\tau^{\sf *}) & \stackrel{\leq}{\sim} & \max\left\{\frac{\ln \gamma}{
\underset{1\leqslant i \leqslant N}{\min} \ \ \ \mbox{KL}(g_i,g_0)
	}, \frac{-\ln(\alpha)}{
\underset{
	1 \leqslant i \leqslant N, 
	1 \leqslant j \neq i
\leqslant N}{\min} \ \ \mbox{KL}(g_i,g_j)
		}\right\}, \ \ \text{as} \ \min\left\{\gamma,\frac{1}{\alpha}\right\} \to \infty, 
\end{eqnarray*}
where $\mbox{KL}(\cdot,\cdot)$ is the Kullback--Leibler divergence
function, and  $g_i$ is the pdf of the
observation ${\bf X}_k$ for $k \geq T$ under hypothesis ${\bf H}_{T,i}$. 
\qed
\end{theorem}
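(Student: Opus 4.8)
Theorem~\ref{thm:niki} is Nikiforov's result \cite{nikiforov03lower-bound-for-det-isolation} (with Tartakovsky's \cite{tartakovsky08multi-decision} the companion statement for \eqref{eqn:problem2}), so the plan is only to sketch the two standard ingredients: a recursive procedure whose $\add$ meets the stated bound, and a matching asymptotic lower bound for every $(\tau,L)\in\Delta(\gamma,\alpha)$. For \emph{achievability} I would analyse the recursive matrix-$\CUSUM$ detector: for each $i\in\{1,\dots,N\}$ and each $j\in\{0,1,\dots,N\}\setminus\{i\}$ keep a cumulative sum $S^{(i,j)}_n$ with increment $\ln(g_i({\bf X}_k)/g_j({\bf X}_k))$ reflected at $0$, and stop at the first time $\tau$ at which, for some $i$, $S^{(i,0)}_n\geqslant h_0$ and $S^{(i,j)}_n\geqslant h_1$ for all $j\in\{1,\dots,N\}\setminus\{i\}$, declaring $L(\tau)=i$. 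First I would calibrate $h_0$: under ${\sf P}_\infty$ each $S^{(i,0)}_n$ is a $\CUSUM$ with negative drift $-\mbox{KL}(g_0,g_i)$, so a renewal-theoretic (or submartingale) estimate gives ${\sf E}_\infty[\tau]\gtrsim e^{h_0}$ up to a fixed factor depending on $N$, whence $h_0\approx\ln\gamma$ secures $\tfa\geqslant\gamma$. Next I would calibrate $h_1$: a false isolation to $j$ when the event lies in $\mathcal A_i$ forces $S^{(j,i)}_n$ to reach $h_1$, but under ${\sf P}^{(i)}_T$ this statistic has drift $-\mbox{KL}(g_i,g_j)<0$, so a change-of-measure and maximal-inequality bound makes its probability $\lesssim e^{-h_1}$ \emph{uniformly in} $T$; hence $h_1\approx-\ln\alpha$ secures ${\sf SPFI}\leqslant\alpha$. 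Finally, under ${\bf H}_{T,i}$ the statistic $S^{(i,0)}_n$ drifts up at rate $\mbox{KL}(g_i,g_0)$ and $S^{(i,j)}_n$ at rate $\mbox{KL}(g_i,g_j)$, so by Wald's identity the time until all of them clear their thresholds is, to first order, $\max\{h_0/\mbox{KL}(g_i,g_0),\ h_1/\min_{j\neq i}\mbox{KL}(g_i,g_j)\}$; inserting the calibrated $h_0,h_1$ and taking the supremum over $i$ (equivalently over $\ell_e\in\mathcal A$) yields the asserted upper bound.

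For the matching \emph{lower bound} I would invoke the information-theoretic arguments of Lai \cite{lai00multi-hypothesis-testing} and Nikiforov \cite{nikiforov95change_isolation}. A change of measure from ${\sf P}^{(i)}_T$ to ${\sf P}_\infty$, combined with the SLLN for the post-change log-likelihood process, shows that $\tfa\geqslant\gamma$ forces the delay under ${\bf H}_{T,i}$ to be at least $(1-\varepsilon)\ln\gamma/\mbox{KL}(g_i,g_0)$; a change of measure between ${\sf P}^{(i)}_T$ and ${\sf P}^{(j)}_T$ shows that correctly declaring $L(\tau)=i$ within fewer than $(1-\varepsilon)(-\ln\alpha)/\mbox{KL}(g_i,g_j)$ samples would make ${\sf P}^{(j)}_T\{L(\tau)=i\mid\tau\geqslant T\}$ exceed $\alpha$, violating ${\sf SPFI}\leqslant\alpha$. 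Both constraints must hold at once, so $\add(\tau)\gtrsim\max\{\dots\}$ with the worst-case $i$; letting $\min\{\gamma,1/\alpha\}\to\infty$ and then $\varepsilon\downarrow0$ closes the gap to the upper bound.

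The delicate step is the false-isolation estimate in the achievability part: one must control the $N(N-1)$ coupled reflected random walks and their random overshoots at the thresholds, and --- because of the $\sup_{T\geqslant1}$ inside ${\sf SPFI}$ --- the bound must hold uniformly over the change time (and, in the present $\wsn$ setting, over the admissible sensor sets $\mathcal N_i$). Once $h_0$ and $h_1$ are calibrated, the false-alarm bound and the delay computation via Wald's identity are routine.
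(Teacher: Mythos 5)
The paper offers no internal proof of Theorem~\ref{thm:niki}: it is imported verbatim from Nikiforov \cite{nikiforov03lower-bound-for-det-isolation} (with Tartakovsky \cite{tartakovsky08multi-decision} for the class $\widetilde\Delta(\gamma,\alpha)$), and is used only as a benchmark against which the distributed procedures are compared in Section~\ref{sec:asymp_order_optimality}. Your reconstruction follows essentially the route of those cited works: a recursive matrix-$\CUSUM$ (pairwise statistics $S^{(i,j)}$ including $j=0$, detection threshold $h_0\approx\ln\gamma$, isolation threshold $h_1\approx-\ln\alpha$), first-order delay $\max\{h_0/\mbox{KL}(g_i,g_0),\,h_1/\min_{j\neq i}\mbox{KL}(g_i,g_j)\}$ via the post-change drifts, and a Lai/Nikiforov change-of-measure lower bound showing no procedure in $\Delta(\gamma,\alpha)$ can do better. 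So your sketch is a faithful account of where the theorem comes from, rather than an alternative to anything in this paper; it also proves slightly more than the quoted statement needs (the statement only records the asymptotic upper bound on $\add(\tau^*)$, though the lower bound is of course what justifies calling $\tau^*$ optimal).

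One step in the achievability half is stated too loosely. You claim the false-isolation probability is $\lesssim e^{-h_1}$ because $S^{(j,i)}$ has negative drift under ${\sf P}^{(i)}_T$; but a reflected (CUSUM-type) statistic with negative drift still crosses any fixed threshold eventually with probability one, so the bound cannot hold over an unbounded horizon by drift alone. The correct argument bounds the probability that $S^{(j,i)}$ crosses $h_1$ \emph{before the correct alarm} by something of the form ${\sf E}^{(i)}_T[\tau]\cdot e^{-h_1}$ (each fixed-time crossing probability is at most $e^{-h_1}$ by the likelihood-ratio/Wald argument, summed over the stopping horizon), and then absorbs the resulting factor of order $h_1$ into the $o(1)$ in the exponent; this is precisely the device this paper uses in Appendix~\ref{app:pfi} for its own $\pfi$ bounds, and it is also where the uniformity in $T$ demanded by ${\sf SPFI}$ is actually secured. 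You flag the false-isolation estimate as the delicate step, which is right, but as written the inequality needs this coupling with the detection delay to be valid. With that repair the outline is sound.
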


\noindent
{\bf Remark:}
Since, 
${\Delta}(\gamma,\alpha) \subseteq 
\widetilde{\Delta}(\gamma,\alpha)$, the
asymptotic upper bound on $\add$ for $\tau^*$ is also an upper bound
for the $\add$ over the set of procedures in $\widetilde{\Delta}(\gamma,\alpha)$. 

In the case of Boolean sensing model, for any post--change hypothesis
${\bf H}_{T,i}$,  only the set of sensor nodes that
detection cover (which is the same as influence cover) the subregion
${\cal A}_i$ switch to a post--change pdf $f_1$ (and the distribution of
other sensor nodes continues to be $f_0$). Since the pdf of the sensor
observations are conditionally i.i.d., the pdf of the observation
vector, in the Boolean sensing model,  corresponds to the post--change
pdf $g_i$ of the centralised problem studied by Nikiforov 
\cite{nikiforov03lower-bound-for-det-isolation} and by Tartakovsky  
\cite{tartakovsky08multi-decision}.  
Thus, their problem directly applies to our setting with the Boolean
sensing model. In our work, however, we propose algorithms for the
change detection/isolation problem for the power law sensing model as
well. Also, the procedures proposed by Nikiforov and by Tartakovsky are
(while being recursive) centralised, whereas we propose distributed procedures which are
computationally simple. 

In Section~\ref{sec:distributed_change_detection_isolation_procedures},
we propose distributed detection/isolation procedures $\sf{MAX}$,
$\sf{HALL}$ and $\sf{ALL}$ and analyse their false alarm ($\tfa$), false
isolation ($\pfi$) and the detection delay ($\add$) properties.
%----------------------------------------------------------------------|

%----------------------------------------------------------------------|
\section{Distributed Change Detection/Isolation Procedures}
\label{sec:distributed_change_detection_isolation_procedures}
%----------------------------------------------------------------------|
In this section, we study the procedures $\sf{MAX}$ and $\sf{ALL}$ for
change detection/isolation in a distributed setting. Also, we propose a 
distributed detection procedure ``${\sf HALL}$,'' and 
analyse the $\add$, the $\tfa$, and the $\pfi$ performance. 

%======================================================================|
\subsection{The $\sf{MAX}$ Procedure}
%======================================================================|
Tartakovsky and Veeravalli proposed a decentralised procedure $\sf{MAX}$
for a collocated scenario in
\cite{stat-sig-proc.tartakovsky-veeravalli03quickest-change-detection}.
We extend the $\sf{MAX}$ procedure to a large $\wsn$ under the $\tfa$
and $\pfi$ constraints. Recalling Section~\ref{sec:system_model}, each
sensor node $i$ employs $\CUSUM$ for local change detection between pdfs
$f_0$ and $f_1(\cdot;\DetRange)$. Let $\tau^{(i)}$ be the random time at
which the $\CUSUM$ statistic of sensor node $i$ crosses the threshold
$c$ for the first time. At each time $k$, the local decision of sensor
node $i$, $D_k^{(i)}$ is defined as 
\begin{eqnarray*} 
D_k^{(i)} 
& := & \left\{
       \begin{array}{ll}
	   0, &  \text{for} \ k < \tau^{(i)}\\
	   1, &  \text{for} \ k \geqslant  \tau^{(i)}.
	   \end{array}
	   \right.
\end{eqnarray*} 
The global decision rule $\tau^{\sf{MAX}}$ declares an alarm at the
earliest time slot $k$ at which all sensor nodes $j \in \mathcal{N}_i$
for some $i=1,2,\cdots,N$ have crossed the threshold $c$. Thus,
\begin{eqnarray*} 
\tau^{\mathsf{MAX},(\mathcal{N}_i)} 
& := & \inf\left\{k: D_k^{(j)} = 1, \ \forall j \in \mathcal{N}_i\right\}
\  = \ \min \left\{\tau^{(j)} : j \in \mathcal{N}_i \right\}\\
\tau^{\mathsf{MAX}} & := & \min\left\{\tau^{\mathsf{MAX},(\mathcal{N}_i)}: 1 \leqslant i \leqslant N \right\}.  
      \end{eqnarray*} 
i.e., the {\sf MAX} procedure declares an alarm at the earliest time 
instant when the $\CUSUM$ statistic of all the sensor nodes ${\cal
N}_i$ corresponding to hypothesis ${\bf H}_{T,i}$ of some $i$ have crossed the
threshold at least once. 
%{\em In the $\mymax$ procedure, even if the $\CUSUM$
%statistic of a sensor has crossed the threshold, the sensor continues to
%compute the $\CUSUM$ statistic in the usual manner; though, this is not
%needed in $\mymax$, we will see that this is required for $\myall$ and
%${\sf HALL}$.} Thus, when the alarm is due to $\mathcal{N}_i$, at the time of alarm $\tau^{\sf MAX}$,
%the $\CUSUM$ statistic of some nodes $j \in \mathcal{N}_i$ can be
%less than the threshold, but the $\CUSUM$ statistic of each of these
%nodes $j$ has crossed the threshold at some time $k_j \leq \tau^{\sf MAX}$.
The isolation rule is 
%\begin{eqnarray*} 
$L(\tau)  =  \arg\min_{1\leq i\leq N} \{\tau^{\sf MAX, ({\cal N}_i)}\}$, 
%\end{eqnarray*} 
i.e., to declare that the event has occurred in the
influence region ${\cal B}_{L(\tau)} = \mathcal{B}(\mathcal{N}_{L(\tau)})$ 
corresponding to the set of sensors $\mathcal{N}_{L(\tau)}$ that raised the alarm.
%======================================================================|

%======================================================================|
\subsection{$\sf{ALL}$ Procedure}
%======================================================================|
Mei, \cite{stat-sig-proc.mei05information-bounds}, and 
Tartakovsky and Kim, \cite{stat-sig-proc.tartakovsky-kim06decentralized}, 
proposed a decentralised procedure $\sf{ALL}$, again for a collocated 
network. We extend the $\sf{ALL}$ procedure to a large extent network 
under the $\tfa$ and the $\pfi$ constraints. Here, each sensor node $i$ 
employs $\CUSUM$ for local change detection between pdfs $f_0$ and 
$f_1(\cdot;\DetRange)$. Let $C_k^{(i)}$ be the $\CUSUM$ statistic of sensor 
node $i$ at time $k$. {\em The $\CUSUM$ in the sensor nodes is allowed 
to run freely even after crossing the threshold $c$}. Here, the local 
decision of sensor node $i$ is 
\begin{eqnarray*} 
D_k^{(i)} & := & \left\{
\begin{array}{ll}
0, &  \text{if} \ C_k^{(i)} < c\\
1, &  \text{if} \ C_k^{(i)} \geqslant  c.
\end{array}
\right.
\end{eqnarray*} 
The global decision rule $\tau^{\sf{ALL}}$ declares an alarm at 
the earliest time slot $k$ at which the local decision of all the 
sensor nodes corresponding to a set $\mathcal{N}_i$, for some 
$i=1,2,\cdots,N$, are 1, i.e.,
\begin{eqnarray*} 
\tau^{\mathsf{ALL},(\mathcal{N}_i)} 
& := & \inf\left\{k: D_k^{(j)} = 1, \ \forall j \in \mathcal{N}_i\right\} 
 \ = \ \inf \left\{ k: C_k^{(j)} \geqslant c, \forall j \in \mathcal{N}_i \right\}\\
\tau^{\mathsf{ALL}} & := & \min\left\{\tau^{\mathsf{ALL},(\mathcal{N}_i)}: 1 \leqslant i \leqslant N \right\}.  
      \end{eqnarray*} 
The isolation rule is 
%\begin{eqnarray*} 
$L(\tau) \ = \ \arg\min_{1\leq i\leq N} \{\tau^{\sf ALL, ({\cal
N}_i)}\}$, 
%\end{eqnarray*} 
i.e., to declare that the event has occurred in the
influence region ${\cal B}_{L(\tau)} = \mathcal{B}(\mathcal{N}_{L(\tau)})$ 
corresponding to the set of sensors $\mathcal{N}_{L(\tau)}$ that raised the alarm.
%Note that $\sf{ALL}$
%declares an alarm only when the $\CUSUM$ statistic of all the nodes $j
%\in \mathcal{N}_i$ for some $i$ are above the threshold.
%======================================================================|

%======================================================================|
\subsection{$\sf{HALL}$ Procedure}
%======================================================================|
\begin{figure}[t]
\begin{center}
\includegraphics[width=3.5in]{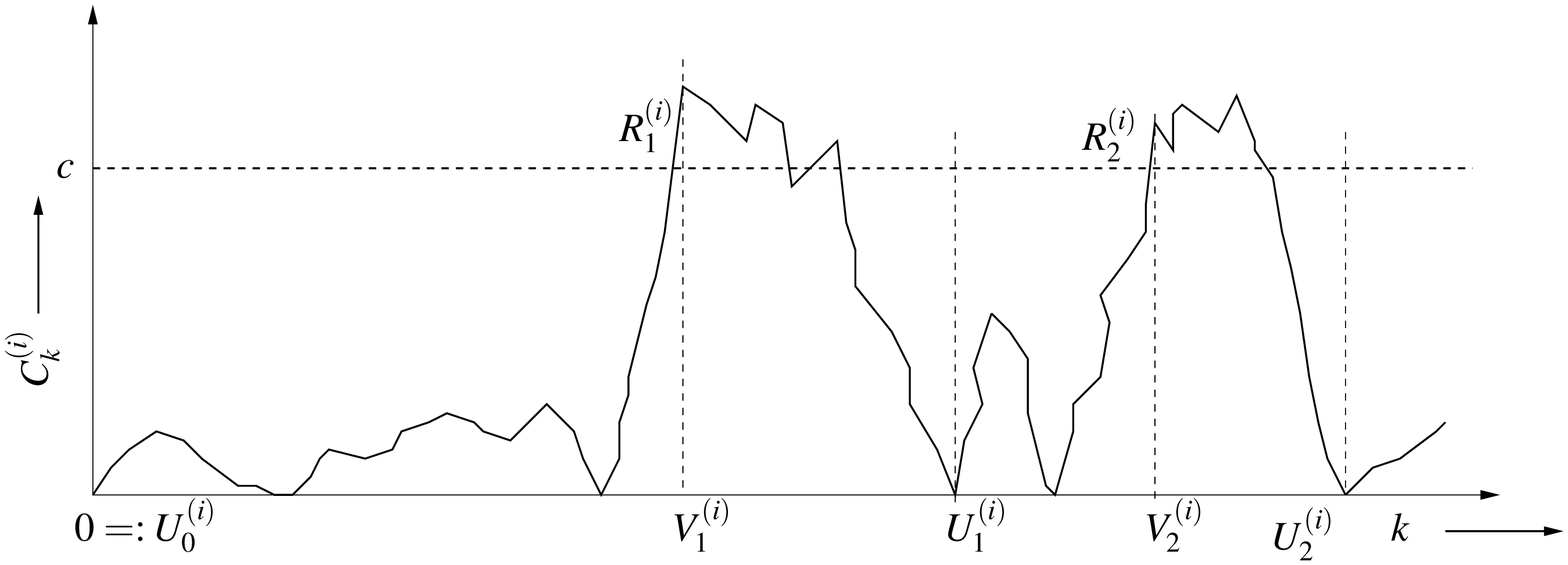}
\caption{{\sf ALL} and {\sf HALL}: Evolution of $\CUSUM$ statistic 
$C_k^{(i)}$ of node $i$ plotted vs.\ $k$. 
Note that at time $k = V_j^{(i)}$, $R_j^{(i)}$ is the excess above the threshold. 
} 
\label{fig:single_cusum}
\end{center}
\end{figure}
Motivated by ${\sf ALL}$, and the fact that sensor noise can make the $\CUSUM$ statistic fluctuate
around the threshold, we propose a local decision rule which is 0 when
the $\CUSUM$
statistic has visited zero and has not crossed the threshold yet and is 1
otherwise. We explain the $\sf{HALL}$ procedure below. 

\noindent
The following discussion is illustrated in Fig.~\ref{fig:single_cusum}.
Each sensor node $i$ computes a $\CUSUM$ statistic $C_k^{(i)}$ based on
the LLR of its own observations between the pdfs $f_1(\cdot;\DetRange)$ and
$f_0$. Define $U_0^{(i)} := 0$. Define $V_1^{(i)}$ as the time at which
$C_k^{(i)}$ crosses the threshold $c$ (for the first time) as: 
\begin{eqnarray*}
V_1^{(i)}  :=  \inf \left\{k: C_k^{(i)} \geqslant c\right\}
\end{eqnarray*}
(see Fig.~\ref{fig:single_cusum} where the ``overshoots'' $R_k^{(i)}$, at
$V_k^{(i)}$, are also shown). Note that $\inf\emptyset := \infty$. 
Next define
\begin{eqnarray*}
U_1^{(i)}  :=  \inf \left\{k > V_1^{(i)}: C_k^{(i)} = 0\right\}. 
\end{eqnarray*}
Now starting with $U_1^{(i)}$, we can recursively define $V_2^{(i)},U_2^{(i)}$ 
etc. in the obvious manner (see Fig.~\ref{fig:single_cusum}). 
%We define the 
%{\em quiet--times} and the {\em active--times} of the $\CUSUM$ process $C_k^{(i)}$ as 
%\begin{eqnarray*}   
%$Q_j^{(i)} :=  V_j^{(i)} - U_{j-1}^{(i)}$ and
%$A_j^{(i)} :=  U_j^{(i)} - V_{j}^{(i)}$.
%\end{eqnarray*}   
Each node $i$ computes the local decision $D_k^{(i)}$ based on the
$\CUSUM$ 
statistic $C_k^{(i)}$ as follows: 
\begin{eqnarray}   
\label{eqn:ld}
D_k^{(i)} & = & \left\{
\begin{array}{lll}
1, &  \text{if} \ V^{(i)}_j \leqslant k < U^{(i)}_j \ \text{ for some } j\\ 
0, &  \text{otherwise.} 
\end{array}
\right.
\end{eqnarray}   
\vspace{0mm}
The global decision rule
 is a stopping time $\tau^{\sf{HALL}}$ defined 
as the earliest time slot $k$ at which all the sensor nodes in a region have a  
local decision $1$, i.e.,
\begin{eqnarray*} 
\tau^{\mathsf{HALL},(\mathcal{N}_i)} & := & \inf\left\{k: D_k^{(j)} = 1, \ \forall j \in \mathcal{N}_i\right\},\\
\tau^{\mathsf{HALL}} & := & \min\left\{
	\tau^{\mathsf{HALL},(\mathcal{N}_i)} :  1 \leqslant i \leqslant N\right\}. 
      \end{eqnarray*} 
The isolation rule is 
%\begin{eqnarray*} 
$L(\tau)  =  \arg\min_{1\leq i\leq N} \{\tau^{\sf HALL, ({\cal N}_i)}\}$, 
%\end{eqnarray*} 
i.e., to declare that the event has occurred in the
influence region ${\cal B}_{L(\tau)} = \mathcal{B}(\mathcal{N}_{L(\tau)})$ 
corresponding to the set of sensors $\mathcal{N}_{L(\tau)}$ that raised the alarm.
%======================================================================|

%======================================================================|
\noindent
{\bf Remark:}
%======================================================================|
The procedures $\sf{HALL}, \sf{MAX}$ and $\sf{ALL}$ differ only in their
local decision rule; the global decision rule as a function of
$\{D_k^{(i)}\}$ is the same for  $\sf{HALL}, \sf{MAX}$ and $\sf{ALL}$.
For the distributed procedures $\sf{MAX}$, $\sf{ALL}$, and $\sf{HALL}$,
we analyse the $\tfa$ in Section \ref{sec:false_alarm_analysis}, the
$\pfi$ in Section \ref{sec:average_time_to_false_isolation}, and the
$\add$ performance in Section \ref{sec:average_detection_delay}. 
%======================================================================|

%======================================================================|
\subsection{Average Run Length to False Alarm ($\tfa$)}
\label{sec:false_alarm_analysis}
%======================================================================|

From the previous sections, we see that the stopping time of any
${\sf procedure}$ ($\mymax$, $\myall$, or ${\sf HALL}$) is the minimum of the stopping times corresponding to each
${\cal N}_i$, i.e., 
\begin{eqnarray*} 
\tau^{\mathsf{procedure}} & := &  
\min\left\{\tau^{\mathsf{procedure}, (\mathcal{N}_i)} :  1 \leqslant i \leqslant N\right\}. 
\end{eqnarray*} 
Under the null hypothesis ${\bf H}_0$, the $\CUSUM$ statistics $C_k^{(s)}$s
of sensors $s \in {\cal N}_i$ are driven by independent noise processes,
and hence, $C_k^{(s)}$s are independent. But, there can be a sensor
that is common to two different ${\cal N}_i$s, and hence,  
$\tau^{\mathsf{procedure}, (\mathcal{N}_i)}$s, in general, are not independent.  
We provide asymptotic lower bounds for the $\tfa$ for  
$\sf{MAX}$,
$\sf{HALL}$, and $\sf{ALL}$, in the following theorem. 
\begin{theorem}
For local $\CUSUM$ threshold $c$, 
\label{thm:tfa}
\begin{eqnarray}
\tfa(\tau^{\sf MAX})  & \geqslant & \exp\left(a_{\sf MAX} c\right) \cdot (1+o(1)) \\ 
\tfa(\tau^{\sf HALL}) & \geqslant & \exp\left(a_{\sf HALL} c\right) \cdot (1+o(1)) \\ 
\tfa(\tau^{\sf ALL})  & \geqslant & \exp\left(a_{\sf ALL} c\right) \cdot (1+o(1)) 
\end{eqnarray}
($o(1) \to 0$ as $c \to \infty$), where 
for any arbitrarily small $\delta > 0$, 
$a_{\sf MAX} = a_{\sf HALL} = 1- \delta$, $a_{\sf ALL} = m -
\delta$, 
\end{theorem}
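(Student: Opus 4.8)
The strategy is to reduce each false-alarm bound to a statement about a single local CUSUM detector, for which the classical result (Eqn.~5.2.79 of \cite{basseville-nikiforov93detection}, already invoked in Lemma~\ref{lem:mean-time-tfi}) gives $\myexpnull{\tau^{(s)}} \geqslant \exp(c)(1+o(1))$, since under ${\bf H}_0$ the driving LLR has $\omega_0 = 1$. Recall that for all three procedures $\tau^{\mathsf{procedure}} = \min_i \tau^{\mathsf{procedure},({\cal N}_i)}$, and that $\tfa(\tau) = \myexpnull{\tau}$. So the core task is: (i) lower bound $\myexpnull{\tau^{\mathsf{procedure},({\cal N}_i)}}$ for a fixed $i$, and (ii) combine over the $N$ regions, absorbing the (constant, $c$-independent) factor $N$ into the $(1+o(1))$ slack by shaving an arbitrarily small $\delta$ off the exponent.

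\textbf{The $\sf{MAX}$ and $\sf{HALL}$ cases.} For $\sf{MAX}$, $\tau^{\mathsf{MAX},({\cal N}_i)} = \min_{j \in {\cal N}_i} \tau^{(j)}$, so $\tau^{\sf MAX} = \min_{1\le s\le n}\tau^{(s)}$ — the first time \emph{any} sensor's CUSUM crosses $c$. Then $\myexpnull{\tau^{\sf MAX}} \geqslant$ ? We cannot simply say $\min$ of things each $\gtrsim e^c$ is $\gtrsim e^c$; instead use a union-type bound on the probability of early crossing: for any threshold time $t$, $\probnull{\tau^{\sf MAX} \le t} \leqslant \sum_{s=1}^n \probnull{\tau^{(s)} \le t} \leqslant n \cdot e^{-c} t$ (using the geometric-type tail / Wald-style bound on a single CUSUM underlying the mean estimate in \cite{basseville-nikiforov93detection}), whence $\myexpnull{\tau^{\sf MAX}} \geqslant \tfrac{1}{n} e^{c}(1+o(1))$. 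Writing $\tfrac1n = e^{-\delta c}$ for all $c$ large enough gives $\tfa(\tau^{\sf MAX}) \geqslant \exp((1-\delta)c)(1+o(1))$, i.e.\ $a_{\sf MAX} = 1-\delta$. For $\sf{HALL}$, the local decision is $1$ only on excursions $[V_j^{(i)},U_j^{(i)})$ above threshold, so $\tau^{\mathsf{HALL},({\cal N}_i)} \geqslant \min_{j\in{\cal N}_i} V_1^{(j)} = \min_{j\in{\cal N}_i}\tau^{(j)}$ (the first time \emph{some} node in ${\cal N}_i$ is above threshold is no earlier than the first time some node first crosses); hence $\tau^{\sf HALL} \geqslant \tau^{\sf MAX}$ pathwise, so the same bound applies verbatim and $a_{\sf HALL} = 1-\delta$.

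\textbf{The $\sf{ALL}$ case and the main obstacle.} For $\sf{ALL}$, $\tau^{\mathsf{ALL},({\cal N}_i)} = \inf\{k : C_k^{(j)} \geqslant c \ \forall j \in {\cal N}_i\}$ requires \emph{all} $m := \min_i |{\cal N}_i|$ (or at least $|{\cal N}_i|$) sensors in some set to be simultaneously above $c$. This is where the exponent $m-\delta$ comes from, and it is the delicate step: one wants $\probnull{\tau^{\sf ALL} \le t}$ to decay like $e^{-m c}$ rather than $e^{-c}$, reflecting that $m$ independent CUSUMs (for a fixed region, the $C^{(j)}$ are independent under ${\bf H}_0$) must all be high at once. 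The plan is to bound $\myexpnull{\tau^{\mathsf{ALL},({\cal N}_i)}}$ below by analyzing the return process of a single CUSUM to $0$: the fraction of time a single CUSUM spends above $c$ is $\asymp e^{-c}$ (ratio of expected excursion length to expected regeneration-cycle length), and by independence the chance all $|{\cal N}_i|$ are simultaneously above $c$ at a given renewal-aligned instant is $\asymp e^{-|{\cal N}_i| c}$; a renewal/regeneration argument then yields $\myexpnull{\tau^{\mathsf{ALL},({\cal N}_i)}} \gtrsim e^{|{\cal N}_i| c} \geqslant e^{m c}$, and combining over $i$ with the factor $N$ absorbed into $e^{-\delta c}$ gives $a_{\sf ALL} = m-\delta$. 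The hard part is making this renewal argument rigorous while controlling overshoot terms and the dependence between the ``above threshold'' events across time for a single detector — this is presumably where the $(1+o(1))$ and the arbitrarily small $\delta$ are spent, and likely where the authors cite or adapt a result from \cite{stat-sig-proc.mei05information-bounds} or \cite{stat-sig-proc.tartakovsky-kim06decentralized} on the $\sf{ALL}$ rule for collocated networks.
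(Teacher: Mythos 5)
Your treatment of $\mymax$ and $\myhall$ is sound and is essentially the paper's route: a per--time Wald/Ville bound $\probnull{C_k^{(s)}\geq c}\leq e^{-c}$, a union bound over sensors and over times $k\leq T$, Markov's inequality with $T=e^{(1-\delta)c}$, and then $\tau^{\myhall}\geq\tau^{\mymax}$ pathwise to transfer the bound to $\myhall$. (One remark: your identity $\tau^{\mymax}=\min_s\tau^{(s)}$ takes at face value a typo in the paper — the definition through the local decisions $D_k^{(j)}$ makes $\tau^{\mathsf{MAX},({\cal N}_i)}$ the \emph{maximum} of $\tau^{(j)}$, $j\in{\cal N}_i$, not the minimum — but since $\min_s\tau^{(s)}\leq\tau^{\mymax}$ your lower bound still applies to the correctly defined rule.)

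The gap is in the $\myall$ case, and it is twofold. First, the per--region estimate $\myexpnull{\tau^{\myall,({\cal N}_i)}}\gtrsim e^{|{\cal N}_i|c}$ is only sketched via a renewal/regeneration heuristic that you yourself defer (``the hard part''), so the key bound is never established. Second, and more seriously, even granting it, the step ``combining over $i$'' is invalid: $\tfa(\tau^{\myall})=\myexpnull{\min_i\tau^{\myall,({\cal N}_i)}}$, and a lower bound on each $\myexpnull{\tau^{\myall,({\cal N}_i)}}$ gives no lower bound on the expectation of the minimum — nonnegative random variables (here dependent through shared sensors) can each have mean $e^{|{\cal N}_i|c}$ while their minimum has mean $O(1)$. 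The combination must be done at the level of tails, exactly as you did for $\mymax$; the fix is short: under ${\bf H}_0$ the statistics $C_k^{(j)}$, $j\in{\cal N}_i$, are independent at each fixed $k$, so $\probnull{C_k^{(j)}\geq c,\ \forall j\in{\cal N}_i}\leq e^{-|{\cal N}_i|c}$, and a union bound over $k\leq T$ and over the (constant number) $N$ of regions followed by Markov's inequality with $T=e^{(\min_i|{\cal N}_i|-\delta)c}$ closes the argument. The paper handles the minimum differently: it passes to the disjoint ``unique--sensor'' sets ${\cal M}_i={\cal N}_i\setminus\bigcup_{j\neq i,\,j\in{\cal I}}{\cal N}_j$, for which the region--wise stopping times are independent under ${\bf H}_0$, factorizes $\probnull{\min_i\tau^{({\cal M}_i)}>e^{mc}}$ as a product, bounds each factor with the same per--time Wald estimate, and then applies Markov; this is why the theorem's constant is $m=\min_{i\in{\cal I}}|{\cal M}_i|$ and not your $\min_i|{\cal N}_i|$ — a harmless mismatch for the direction of the inequality (your constant would be stronger), but it reflects that the decisive per--region step in your sketch was never pinned down.
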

where $m = \min\{{\cal N}_i \setminus \bigcup_{\substack{j \neq
i, \ }{j \in {\cal I}}} {\cal N}_j : i \in {\cal I}\}$, 
${\cal I}$ is the set of indices of the detection sets that are 
minimal in the partially order of set inclusion among the detection
sets.
\begin{proof}
See Appendix~\ref{app:tfa}.
\end{proof}
Thus, for ${\sf procedure}$, for a given $\tfa$ requirement of $\gamma$,
it is sufficient to choose the threshold $c$ as
\begin{eqnarray}
\label{eqn:c_for_gamma}
 c& = \frac{\ln\gamma}{a_{\sf procedure}} (1+o(1)), \ \ \text{as $\gamma
 \to \infty$}.
%\text{For $\sf MAX$}, \ c& \geq \frac{\ln\gamma}{} (1+o(1)),\\
%\text{for $\sf HALL$}, \ c& \geq \frac{\ln\gamma}{1-\delta} (1+o(1)),\\
%\text{and for $\sf ALL$}, \ c& \geq \frac{\ln\gamma}{m-\delta} (1+o(1)).
\end{eqnarray}
%where $o(1) \to 0$ as $\gamma \to \infty$.

%------------------------------------------------------------------------- 

\vspace{-3mm}

%------------------------------------------------------------------------- 
\subsection{Probability of False Isolation ($\pfi$)}
\label{sec:average_time_to_false_isolation}
%------------------------------------------------------------------------- 
A false isolation occurs when the hypothesis ${\bf H}_{T,i}$ is true for
some $i$ and the hypothesis ${\bf H}_{T,j} \neq {\bf H}_{T,i}$ is declared to be
true at the time of alarm, {\em and} the event does not lie in the region
$\mathcal{B}(\mathcal{N}_j)$. The following theorem provide asymptotic
upper bounds for the $\pfi$ for each of the procedures $\mymax$,
$\myall$, and $\myhall$.

\begin{theorem}
\label{thm:pfi}
For local $\CUSUM$ threshold $c$,

\begin{eqnarray}
{\sf PFI}(\tau^{\mymax}) &\leq& 
\frac{\exp\left(-b_{\sf MAX} c\right)}{B_{\sf MAX}} \cdot (1+o(1))\\
{\sf PFI}(\tau^{\myhall}) &\leq& 
\frac{\exp\left(-b_{\sf HALL} c\right)}{B_{\sf HALL}} \cdot (1+o(1))\\
{\sf PFI}(\tau^{\myall}) &\leq& 
\frac{\exp\left(-b_{\sf ALL} c\right)}{B_{\sf ALL}} \cdot (1+o(1)).
\end{eqnarray}
where $o(1) \to 0$ as $c \to \infty$, and $
b_{\sf MAX} = b_{\sf HALL} =
\frac{m\underline{\xi \omega}_0}{2}-\frac{1+\bar{m}}{n}$, 
$b_{\sf ALL}= \frac{m\underline{\xi \omega}_0}{2}-\frac{1}{n}$, 
$\underline{\omega}_0 = 1$ for Boolean sensing model,
$\xi$ is 2 for Boolean sensing model and is 1 for path--loss sensing
model,
$m = \min\left\{
|{\cal N}_j \setminus {\cal N}(\ell_e)| :
{1 \leq i \leq N, \ell_e \in {\cal A}_i, 1 \leq j \leq N, {\cal N}_j
\not\subseteq {\cal N}(\ell_e)} \right\}$
and 

\noindent
$\bar{m} = \max\left\{
|{\cal N}_j \setminus {\cal N}(\ell_e)| :
{1 \leq i \leq N, \ell_e \in {\cal A}_i, 1 \leq j \leq N, {\cal N}_j
\not\subseteq {\cal N}(\ell_e)} \right\}$,
and $B_{\sf MAX}$,
$B_{\sf HALL}$, and $B_{\sf ALL}$ are  positive constants. 
%(defined in Eqn.~\eqref{eqn:defn_m})
\end{theorem}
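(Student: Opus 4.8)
The plan is to reduce $\pfi$, via a finite union bound over the ``bad'' configurations, to a product of per--sensor exponential--in--$c$ estimates. Fix a triple $(i,\ell_e,j)$ with $\ell_e\in\mathcal{A}_i$ and $\mathcal{N}_j\not\subseteq\mathcal{N}(\ell_e)$, and let $S_{ij}:=\mathcal{N}_j\setminus\mathcal{N}(\ell_e)$ be the set of \emph{superfluous} sensors of $\mathcal{N}_j$ (those that do not influence--cover $\ell_e$). For each of the three procedures the local/global decision structure is identical, so the event $\{L(\tau)=j\}$ forces $D^{(s)}_\tau=1$ for every $s\in\mathcal{N}_j$, hence for every $s\in S_{ij}$; moreover, since every sensor whose detection disk contains $\ell_e$ also influence--covers $\ell_e$, we have $\mathcal{N}_i\subseteq\mathcal{N}(\ell_e)$, so $\mathcal{N}_i$ is a legitimate (correct) isolation set and $\{L(\tau)=j\}$ also forces $\tau\le\tau^{\mathsf{procedure},(\mathcal{N}_i)}=:T^{*}$. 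Because $T^{*}$ is measurable with respect to the observations of $\mathcal{N}_i$ only, and the streams of the sensors in $S_{ij}$ are independent of those of $\mathcal{N}_i$ (and of one another), I would bound $\mathsf{P}^{({\bf d}(\ell_e))}_1\{L(\tau)=j\}$ by $\mathsf{E}^{({\bf d}(\ell_e))}_1[\prod_{s\in S_{ij}}\mathsf{P}^{({\bf d}(\ell_e))}_1\{D^{(s)}_k=1\ \text{for some}\ k\le T^{*}\mid T^{*}\}]$.

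The crux is the per--superfluous--sensor estimate. For $s\in S_{ij}$ we have $\|\ell^{(s)}-\ell_e\|>\InfRange$, hence $2\rho(d_{e,s})\le(1-\underline{\omega}_0)\rho(\DetRange)$, so by \eqref{eqn:proof_of_lemma_1} the \CUSUM\ increment $Z^{(s)}_k(\DetRange)$ has strictly negative mean under $\mathsf{P}^{({\bf d}(\ell_e))}_1$, with exponential root $\omega_0^{(s)}\ge\underline{\omega}_0>0$ solving $\mathsf{E}^{({\bf d}(\ell_e))}_1[e^{\omega_0^{(s)}Z^{(s)}_k(\DetRange)}]=1$; in the Boolean model such a sensor sees pure noise ($\rho(d_{e,s})=0$), the increment is the pre--change LLR, and $\omega_0^{(s)}=1=\underline{\omega}_0$, which is the case recorded by $\xi=2$. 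Running the argument behind Lemma~\ref{lem:mean-time-tfi} (the exponential martingale $\{e^{\omega_0^{(s)}S^{(s)}_k}\}$ and the Basseville--Nikiforov overshoot estimate), I expect, for the $\mymax$ and $\myhall$ local rules, a bound of the form $\mathsf{P}\{\text{the }\CUSUM\text{ of }s\text{ crosses }c\text{ by time }t\}\le K_s\,t\,e^{-\omega_0^{(s)}c}(1+o(1))$, whereas for $\myall$, where the decision reads the current freely--running statistic, one needs a single \emph{simultaneous} up--instant of all of $S_{ij}$. The point that $\mymax$ and $\myhall$ allow the $|S_{ij}|$ superfluous up--crossings to occur at \emph{different} instants (each then costing one factor of the horizon), while $\myall$ needs one common instant, is what ultimately separates $b_{\mymax}=b_{\myhall}$ from $b_{\myall}$. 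For the path--loss model the increment is not the pre--change LLR, so one must settle for the cruder exponent $\xi\underline{\omega}_0/2=\underline{\omega}_0/2$, obtained by a Chernoff step that keeps the moment generating function of the reflected walk finite.

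To conclude, I would substitute the per--sensor estimates into the conditional product, take the expectation over $T^{*}$, and use that $T^{*}$ — the detection time of the positive--drift set $\mathcal{N}_i$, each of whose sensors has increment mean proportional to $2\rho(d_{e,s})/\rho(\DetRange)-1\ge1$ — is light--tailed, with the contribution of its moments and of the \CUSUM\ overshoots above $c$ controlled in terms of $|S_{ij}|\le\bar m$ and $n$. Minimising $\sum_{s\in S_{ij}}\omega_0^{(s)}\ge|S_{ij}|\,\underline{\omega}_0\ge m\,\underline{\omega}_0$ over the superfluous sensors and then over all bad triples $(i,\ell_e,j)$ yields $\pfi(\tau^{\mathsf{procedure}})\le B_{\mathsf{procedure}}^{-1}\exp\!\big(-(\tfrac{m\xi\underline{\omega}_0}{2}-\varepsilon_{\mathsf{procedure}})\,c\big)(1+o(1))$, with $\varepsilon_{\mymax}=\varepsilon_{\myhall}=\tfrac{1+\bar m}{n}$ (the $1+\bar m$ collecting the horizon/overshoot factors of the $\bar m$ superfluous sensors plus one more) and $\varepsilon_{\myall}=\tfrac1n$, which are exactly $b_{\mymax}=b_{\myhall}$ and $b_{\myall}$.

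The step I expect to be the main obstacle is the dependence between the global stopping time $\tau$ and precisely the sensors in $S_{ij}$ — $\tau$ may itself be triggered by $\mathcal{N}_j$. The device that breaks it is the domination $\tau\le T^{*}=\tau^{\mathsf{procedure},(\mathcal{N}_i)}$, after which $T^{*}$ is $\sigma$(covering sensors)--measurable and conditioning makes the superfluous sensors independent; the remaining delicate bookkeeping — the overshoot distribution of each local \CUSUM\ at its first up--crossing and the moments of $T^{*}$ — is what turns the clean leading exponent $m\xi\underline{\omega}_0/2$ into the slightly smaller $b_{\mathsf{procedure}}$ and, together with the Boolean/path--loss dichotomy in $\xi$, is where the technical care must go.
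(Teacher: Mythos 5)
Your proposal follows essentially the same route as the paper's own proof: you bound the false--isolation event by $\{\tau^{\mathsf{rule},({\cal N}_j)} \le \tau^{\mathsf{rule},({\cal N}_i)}\}$, condition on the correct set's stopping time and use independence of the superfluous sensors ${\cal N}_j\setminus{\cal N}(\ell_e)$, apply Wald's inequality (Boolean) and a Chernoff bound with exponent $\underline{\omega}_0 c/2$ per sensor (path loss, the paper's Lemma~\ref{lem:path-bound}), separate $\myall$ (one simultaneous up--instant, first moment of $\tau^{({\cal N}_i)}$, slack $1/n$) from $\mymax$/$\myhall$ (crossings at distinct instants, moment of order $1+\bar m$, slack $(1+\bar m)/n$), and absorb the polynomial factors via $c < e^{c/n}$ before taking the min/max over bad triples to get $m$ and $\bar m$. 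The details you leave implicit — the bounds ${\sf E}_1[\tau^{\mathsf{rule},({\cal N}_i)}] \lesssim c/\mathsf{KL}$ and its $(1+\bar m)$-th moment — are precisely the bookkeeping the paper supplies, so the plan matches the published argument.
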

\begin{proof}
See Appendix~\ref{app:pfi}.
\end{proof}
Thus, for a given $\pfi$ requirement of $\alpha$, the threshold $c$ for
should satisfy 
\begin{eqnarray}
\label{eqn:c_for_alpha}
c \ = \ \frac{-\ln B_{\sf procedure} -\ln\alpha}{b_{\sf procedure}} (1+o(1)) 
   \ = \ \frac{-\ln\alpha}{b_{\sf procedure}} (1+o(1)), \ \ \  \text{as $\alpha \to 0$}.
%\text{For $\sf MAX$}, \ c& \geq \frac{-\ln\alpha}{1-\delta} (1+o(1)),\\
%\text{for $\sf HALL$}, \ c& \geq \frac{-\ln\alpha}{1-\delta} (1+o(1)),\\
%\text{and for $\sf ALL$}, \ c& \geq \frac{-\ln\alpha}{m-\delta} (1+o(1)).
\end{eqnarray}
%where $o(1) \to 0$ as $\alpha \to 0$.

%======================================================================|
\subsection{Supremum Average Detection Delay (\textsf{SADD})}
\label{sec:average_detection_delay}
%======================================================================|
In this section, we analyse the $\add$ performance of the distributed
detection/isolation procedures. We observe that for any sample path of
the observation process, for the same threshold $c$, the ${\mymax}$
rule raises an alarm first, followed by the ${\myhall}$ rule, and then
by the ${\myall}$ rule. This ordering is due to the following reason.
For each sensor node $s$, let $\tau^{(s)}$ 
be the {\em first time instant} at
which the $\CUSUM$ statistic $C_k^{(s)}$ crosses the threshold $c$
(denoted by $V_1^{(i)}$ in Figure~\ref{fig:single_cusum}).
Before time $\tau^{(s)}$, the local decision is 0 for all 
the procedures, ${\sf MAX}$, ${\sf ALL}$, and ${\sf HALL}$. For
${\mymax}$, for all $k \geq \tau^{(s)}$, the local decision
$D_{k}^{(s)} = 1$. Thus, the stopping time of  $\mymax$ is at least as
early as that of ${\myhall}$ and $\myall$. The local decision of
$\myall$ is 1 ($D_k^{(s)} = 1$) only at those times $k$ for which $C_k^{(s)}
\geq c$. However, even when $C_k^{(s)} < c$, the local decision of
$\myhall$ is 1 if $V_j^{(s)} \leq k < U_j^{(s)}$ 
(see Figure~\ref{fig:single_cusum}) for some $j$. Thus, the local decisions of ${\sf
MAX}$, ${\sf HALL}$, and ${\sf ALL}$ are ordered as, for all $k \geq 1$,
%\begin{eqnarray}
%\begin{array}{llllll}
$ D_k^{(s)}(\mathsf{MAX})   \geqslant 
  D_k^{(s)}(\mathsf{HALL})  \geqslant  
  D_k^{(s)}(\mathsf{ALL})$, 
and hence, 
%\text{and hence,} \ & 
$\tau^{\mathsf{MAX},({\cal N}_i)}  \leqslant 
\tau^{\mathsf{HALL},{(\cal N}_i)}  \leqslant 
\tau^{\mathsf{ALL},{(\cal N}_i)}$. 
%\end{array}
%\end{eqnarray}
Each of the stopping times
$\sf{MAX}$, $\sf{HALL}$, or $\sf{ALL}$ is the minimum of stopping times
corresponding to the sets of sensors $\{\mathcal{N}_i : i =
1,2,\cdots,N\}$, i.e., 
\begin{eqnarray*}
\tau^{\mathsf{procedure}} & =  &
\min\{\tau^{\mathsf{procedure},(\mathcal{N}_i)}:i=1,2,\cdots,N\}
%                  & \leqslant &
%					 \tau^{\mathsf{procedure},(\mathcal{N}_i)}, \
%					 \text{for any $i = 1,2,\cdots,N$}
\end{eqnarray*}
where ``$\mathsf{procedure}$'' can be $\mathsf{MAX}$ or $\mathsf{HALL}$ or $\mathsf{ALL}$. 
Hence, we have
\begin{eqnarray}
\label{o6_eqn:ordered_stopping} 
\tau^{\mathsf{MAX}} \  \leqslant \ \tau^{\mathsf{HALL}} \ \leqslant \ \tau^{\mathsf{ALL}}.
\end{eqnarray}
%From \cite{agt-vvv08} (see page. 462, Eqn. (4.2) of \cite{agt-vvv08}), we see that 
From \cite{stat-sig-proc.mei05information-bounds}, we see that  
\begin{eqnarray}
\sup_{T \geqslant 1} \
{\mathsf E}_T^{(i)}\left[\tau^{\mathsf{ALL},({\cal N}_i)}-T\mid
\tau^{\sf ALL, ({\cal N}_i)} \geq T \right]
&=& \frac{c}{I} \left(1+o(1)\right)
\end{eqnarray}
where $I$ is the
Kullback--Leibler divergence between the post--change and the
pre--change pdfs. 
For $\ell_e \in \mathcal{A}_i$, we have
$\forall s \in \mathcal{N}_i, \ d_{e,s} \leqslant \DetRange$. Also, since 
$\tau^{\mathsf{ALL}} \leqslant \tau^{\mathsf{ALL},({\cal N}_i)}$, we have
\begin{eqnarray}
\label{eqn:sadd_b}
\sup_{\ell_e \in {\cal A}_i} \ 
\sup_{T \geqslant 1} \
{\mathsf E}_T^{({\bf d}({\ell_e}))}\left[\tau^{\mathsf{ALL}}-T\mid \tau^{\sf ALL} \geq T \right] 
& \leqslant  & 
\sup_{\ell_e \in {\cal A}_i} \ 
\sup_{T \geqslant 1} \
{\mathsf E}_T^{({\bf d}({\ell_e}))}\left[\tau^{\mathsf{ALL},({\cal N}_i)}-T
\mid \tau^{\sf ALL} \geq T \right] 
\end{eqnarray}
From Appendix~\ref{app:sadd}, Eqn.~\eqref{eqn:sadd_b} becomes,
\begin{eqnarray}
\sup_{\ell_e \in {\cal A}_i} \ 
\sup_{T \geqslant 1} \
{\mathsf E}_T^{({\bf d}({\ell_e}))}\left[\tau^{\mathsf{ALL}}-T\mid \tau^{\sf ALL} \geq T \right] 
& \leqslant  & 
\sup_{\ell_e \in {\cal A}_i} \ 
\sup_{T \geqslant 1} \
{\mathsf E}_T^{({\bf d}({\ell_e}))}\left[\tau^{\mathsf{ALL},({\cal N}_i)}-T
\mid \tau^{\sf ALL, ({\cal N}_i)} \geq T \right]\nn
&=&  \frac{c}{\mathsf{KL}(f_1(\cdot;\DetRange),f_0)}(1+o(1)) 
\end{eqnarray}

From the above equation, and from Eqn.~\eqref{o6_eqn:ordered_stopping}, we have 
\begin{eqnarray}
\label{eqn:sadd_max_all_hall}
\add(\tau^{\mathsf{MAX}}) \  \leqslant \ \add(\tau^{\mathsf{HALL}}) \ \leqslant \ \add(\tau^{\mathsf{ALL}}) \
\leqslant \ \frac{c}{\mathsf{KL}(f_1(\cdot;\DetRange),f_0)}(1+o(1)), \ \text{as} \
c\to\infty, 
\end{eqnarray}

\noindent
{\bf Remark:}
Recall from Section~\ref{sec:detection-range} that $\mu_1 = 
h_e \rho(\DetRange)$. We now see that $\mu_1$ governs the detection 
delay performance, and $\mu_1$ can be chosen such that a requirement on $\add$ 
is met. Thus, to achieve a requirement on $\add$, we need to choose 
$\DetRange$ appropriately. A small value of $\DetRange$ (gives a large
$\mu_1$ and hence,) gives less detection delay 
compared to a large value of $\DetRange$. But, a small $\DetRange$ requires more 
sensors to detection--cover the $\ROI$.

In the next subsection, we discuss the asymptotic minimax delay 
optimality of the distributed procedures in relation to
Theorem~\ref{thm:niki}.

%======================================================================|
\subsection{Asymptotic Upper Bound on $\add$}
\label{sec:asymp_order_optimality}
%======================================================================|
%We define the term asymptotic order optimality as follows. A change
%detection/isolation procedure ${\sf P}$ under performance requirements
%$\tfa \geq \gamma$ and $\pfi \leq \alpha$, is said to be asymptotically order optimal, 
%if ${\sf SADD}({\sf P}) \leqslant K_0 \cdot {\sf SADD}({\sf P^*})$ 
%	as $\min\{\gamma,1/\alpha\} \to \infty$, where    
%${\sf P^*}$ is an asymptotically optimal procedure under the same
%performance requirements,
% and $K_0$ is some positive constant. 

For any change detection/isolation procedure to achieve a 
$\tfa$ requirement of $\gamma$ and $\pfi$ requirement of
$\alpha$, a threshold $c$ is chosen such that it satisfies 
Eqns.~\ref{eqn:c_for_gamma} and \ref{eqn:c_for_alpha}, i.e.,
\begin{eqnarray}
c &=& \max\left\{\frac{\ln\gamma}{a_{\sf procedure}}, 
                    \frac{-\ln\alpha}{b_{\sf procedure}} 
			\right\} (1+o(1)).
\end{eqnarray}
Therefore, from Eqn.\eqref{eqn:sadd_max_all_hall}, the $\add$ is given
by
\begin{eqnarray}
\label{eqn:order-optimal}
\add(\tau^{\mathsf{procedure}}) & \leqslant & 
 \frac{1}{\mathsf{KL}(f_1(\cdot;\DetRange),f_0)}
\cdot \max\left\{\frac{\ln\gamma}{a_{\sf procedure}}, 
                    \frac{-\ln\alpha}{b_{\sf procedure}} 
			\right\} (1+o(1)).
\end{eqnarray}
where $o(1) \to 0$ as $\min\{\gamma, \frac{1}{\alpha}\} \to \infty$.
Note that as $\DetRange$ decreases,
$\mathsf{KL}(f_1(\cdot;\DetRange),f_0) =
\frac{h_e^2\rho(\DetRange)^2}{2\sigma^2}$ increases.  
Thus, to achieve a smaller detection delay, the detection range
$\DetRange$ can be decreased, and the number of sensors $n$ can be
increased to cover the $\ROI$. 

We can compare the asymptotic $\add$ performance of the distributed
procedures $\sf{HALL}$, $\sf{MAX}$ and $\sf{ALL}$ against
Theorem~\ref{thm:niki} for the Boolean sensing model. 
For Gaussian pdfs $f_0$ and $f_1$, the KL divergence between the
hypotheses ${\bf H}_{T,i}$ and ${\bf H}_{T,j}$ is given by
\begin{eqnarray*}
\mathsf{KL}(g_i,g_j)
& = & \int \ln\left(\frac{
	\prod_{s \in {\cal N}_i} f_1(x^{(s)}) 
	\prod_{s' \notin {\cal N}_i} f_0(x^{(s')}) }{
{	\prod_{s \in {\cal N}_j} f_1(x^{(s)}) 
	\prod_{s' \notin {\cal N}_j} f_0(x^{(s')}) }
		}\right)  \  	\prod_{s \in {\cal N}_i} f_1(x^{(s)}) 
	\prod_{s' \notin {\cal N}_i} f_0(x^{(s')})  \ d{\bf x}\nn
& = & \int \left(\ln\left(
	\prod_{s \in {\cal N}_i}
	\frac{f_1(x^{(s)})}{f_0(x^{(s)})}\right)
	 - 
 \ln\left(
	{	\prod_{s \in {\cal N}_j } \frac{f_1(x^{(s)})}{f_0(x^{(s)})} }
	\right) \right) \  	\prod_{s \in {\cal N}_i} f_1(x^{(s)}) \prod_{s' \notin {\cal N}_i} f_0(x^{(s')})  \ d{\bf x} \nn 
%$ & = & \sum_{s \in {\cal N}_i} \mathsf{KL}(f_1,f_0)
%$	 - 
%$\int \ln\left(
%$	{	\prod_{s \in {\cal N}_j \cap {\cal N}_i}
%$	\frac{f_1(x^{(s)})}{f_0(x^{(s)})} } \right)  \  	\prod_{s \in
%$	{\cal N}_i} f_1(x^{(s)}) \prod_{s' \notin {\cal N}_i} f_0(x^{(s')})
%$	\ d{\bf x}\nn
%$	& &
%$	 - 
%$\int \ln\left(
%$	{	\prod_{s \in {\cal N}_j \setminus {\cal N}_i}
%$	\frac{f_1(x^{(s)})}{f_0(x^{(s)})} } \right)  \  	\prod_{s \in
%$	{\cal N}_i} f_1(x^{(s)}) \prod_{s' \notin {\cal N}_i} f_0(x^{(s')})  \ d{\bf x} \nn 
 & = & \sum_{s \in {\cal N}_i} \mathsf{KL}(f_1,f_0)
 - \sum_{s \in {{\cal N}_j \cap \cal N}_i} \mathsf{KL}(f_1,f_0)
  + \sum_{s \in {\cal N}_j\setminus {\cal N}_i} \mathsf{KL}(f_1,f_0)\nn
  &=& |{\cal N}_i~\Delta~{\cal N}_j| \ \mathsf{KL}(f_1,f_0)
  \end{eqnarray*}
where the operator $\Delta$ represents the symmetric difference between
the sets. Thus, from Theorem~\ref{thm:niki} for Gaussian $f_0$ and
$f_1$, we have 
\begin{eqnarray*}
\add(\tau^{*}) &\leq& 
 \frac{1}{\mathsf{KL}(f_1,f_0)}
\cdot \max\left\{\frac{\ln\gamma}{a^*}, 
                    \frac{-\ln\alpha}{b^*} 
			\right\} (1+o(1)), \nn
\text{where} \ a^* &=& \min_{1 \leqslant i \leqslant N} |{\cal N}_i|, \nn 
\text{and}  \ b^* &=& \min_{\substack{1 \leqslant i \leqslant N\\
1 \leqslant j \leqslant N, \ {\cal N}_j \not\subseteq {\cal N}_i}} |{\cal
N}_i\Delta{\cal N}_j|. 
\end{eqnarray*}
The $\add$ performance of the distributed ${\sf
procedure}$ with the Boolean sensing model is 
\begin{eqnarray}
\label{eqn:order-optimal-1}
\add(\tau^{\mathsf{procedure}}) & \leqslant & 
 \frac{1}{\mathsf{KL}(f_1,f_0)}
\cdot \max\left\{\frac{\ln\gamma}{a_{\sf procedure}}, 
                    \frac{-\ln\alpha}{b_{\sf procedure}} 
			\right\} (1+o(1)).
\end{eqnarray}
where $o(1) \to 0$ as $\min\{\gamma, \frac{1}{\alpha}\} \to \infty$.
Thus, the asymptotically optimal upper bound on $\add$
(which corresponds to the optimum centralised procedure $\tau^*$) and that of the distributed
procedures $\mathsf{ALL}$, ${\sf{HALL}}$, and ${\sf{MAX}}$ scale in the
same way as
$\ln\gamma/\mathsf{KL}(f_1,f_0)$ and $-\ln\alpha/\mathsf{KL}(f_1,f_0)$. 

%----------------------------------------------------------------------|
\section{Numerical Results}
\label{sec:numerical_results}
%----------------------------------------------------------------------|
We consider a deployment of 7 nodes with the detection range $\DetRange
= 1$, in a hexagonal $\ROI$ (see Fig.~\ref{fig:sensor_placement_7nodes})
such that we get $N = 12$ detection subregions, and ${\cal N}_1 =
\{1,3,4,6\}$, ${\cal N}_2 = \{1,3,4\}$, ${\cal N}_3 = \{1,2,3,4\}$,
${\cal N}_4 = \{1,2,4\}$, ${\cal N}_5 = \{1,2,4,5\}$, ${\cal N}_6 =
\{2,4,5\}$, ${\cal N}_7 = \{2,4,5,7\}$, ${\cal N}_8 = \{4,5,7\}$, ${\cal
N}_9 = \{4,5,6,7\}$, ${\cal N}_{10} = \{4,6,7\}$, ${\cal N}_{11} =
\{3,4,6,7\}$, and ${\cal N}_{12} = \{3,4,6\}$. The pre--change pdf
considered is $f_0 \sim \mathcal{N}(0,1)$, and the detection range and the
influence range considered are $\DetRange=1.0$ and $\InfRange=1.5$ respectively.

We compute the $\add$, the $\tfa$ and the $\pfi$ performance of
$\sf{MAX}$, $\sf{HALL}$, $\sf{ALL}$, and Nikiforov's procedure
(\cite{nikiforov03lower-bound-for-det-isolation}) for the Boolean
sensing model with $f_1 \sim \mathcal{N}(1,1)$, and plot the $\add$ vs
$\log(\tfa)$ performance in Fig.~\ref{fig:bool}, of the change
detection/isolation procedures for $\pfi \leq 5 \times 10^{-2}$.  The
local $\CUSUM$ threshold $c$ that yields the target $\tfa$ and other
simulation parameters and results  are tabulated in
Table~\ref{tab:bool}.  To obtain the $\add$ the event is assumed to
occur at time 1, which corresponds to the 
maximum mean delay (see \cite{pollak85},
\cite{stat-sig-proc.lorden71procedures-change-distribution}). 
We observe from Fig.~\ref{fig:bool} that the $\add$ performance of
$\mymax$ is the worst and that of ${\sf Nikiforov}'s$ is the best.
Also, we note that the performance of the distributed procedures, 
${\sf ALL}$ and ${\sf HALL}$, are very close to that of the optimal 
centralised procedure. For eg., for a
requirement of $\tfa = 10^5$ (and $\pfi \leq 5 \times 10^{-2}$), we 
observe from Fig.~\ref{fig:bool} that $\add(\tau^{\mymax}) = 26.43$, 
$\add(\tau^{\myhall}) = 13.78$, 
$\add(\tau^{\myall}) = 12.20$, and 
$\add(\tau^{*}) = 11.28$. Since $\mymax$ does not make use of the
the dynamics of $C_k^{(s)}$ beyond $\tau^{s}$, it's $\add$ vs $\tfa$
performance is poor. On the other hand, $\myall$ and $\myhall$     
make use of $C_k^{(s)}$ for all $k$ and hence, give a better
performance.

%We observe from Fig.~\ref{fig:bool} that $\add$
%increases linearly with $\log(\tfa)$ and that the slope is $\approx
%1/\mathsf{KL}(f_1,f_0)$ for $\sf{MAX}$, and $\approx 1/(3\cdot
%\mathsf{KL}(f_1,f_0))$ for $\sf{HALL}$, $\sf{ALL}$, and Nikiforov's
%procedure. Also, we observe that for a given $\tfa$, Nikiforov's
%procedure has the smallest $\add$ and $\sf{MAX}$ has the largest $\add$.
%For example, for a $\tfa$ requirement of 2000 slots, Nikiforov's
%procedure gives an $\add$ of 6.2 slots, $\sf{ALL}$ gives an $\add$ of 7
%slots, $\sf{HALL}$ gives an $\add$ of 8 slots, and $\sf{MAX}$ gives an
%$\add$ of 15 slots.

%======================================================================|
\begin{figure}
\centerline{\includegraphics[width = 55mm, height =50mm]{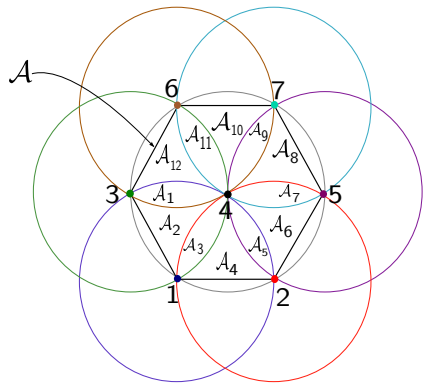}}
\caption{{\bf Sensor nodes placement}: 7 sensor nodes (which are
numbered 1,2,$\cdots$,7) represented by small filled circles are placed
in the hexagonal $\ROI$ ${\cal A}$. 
%The $\ROI$ is detection partitioned
%by the sets of sensor nodes, 
%$
%{\cal N}_1     = \{1,3,4,6\},
%{\cal N}_2     = \{1,3,4\},
%{\cal N}_3     = \{1,2,3,4\},
%{\cal N}_4     = \{1,2,4\},
%{\cal N}_5     = \{1,2,4,5\},
%{\cal N}_6     = \{2,4,5\},
%{\cal N}_7     = \{2,4,5,7\},
%{\cal N}_8     = \{4,5,7\},
%{\cal N}_9     = \{4,5,6,7\},
%{\cal N}_{10} = \{4,6,7\},
%%{\cal N}_{11} = \{3,4,6,7\},$ and ${\cal N}_{12} = \{3,4,6\}$ with the
The sensor nodes partition the $\ROI$ into the detection subregions 
${\cal A}_1, {\cal A}_2, \cdots, {\cal A}_{12}$ (for both the Boolean
and the power law path loss sensing models). }
\label{fig:sensor_placement_7nodes}
\end{figure}
%======================================================================|

\begin{figure}[t]
\centering
\subfigure[$\add$ vs $\tfa$ for the Boolean model]
{ 
\includegraphics[width=73mm,height=50mm]{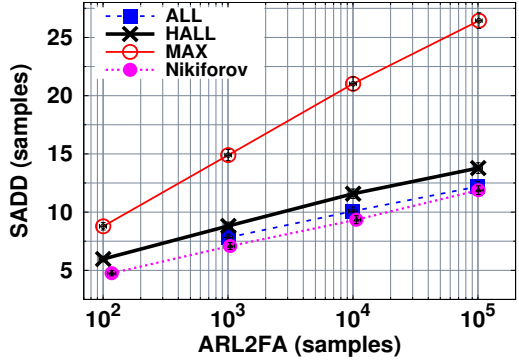}
\label{fig:bool}
}
	\hspace{10mm}
\subfigure[$\add$ vs $\tfa$ for the square law path loss model]
{ 
\includegraphics[width=73mm,height=50mm]{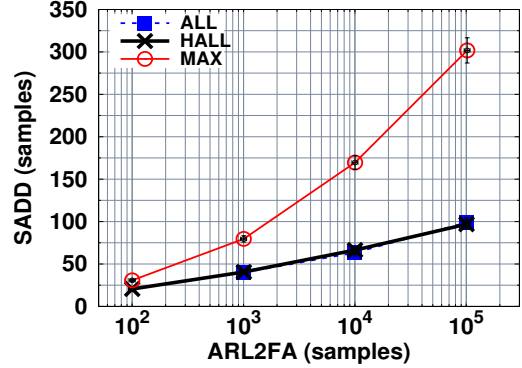}
\label{fig:path}
}
\caption{
          $\add$ versus $\tfa$ (for $\pfi \leq 5 \times 10^{-2}$) 
		  for $\sf{MAX}$, $\sf{HALL}$, $\sf{ALL}$ and
          Nikiforov's procedure for the Boolean and the square law path
		  loss sensing models. In the Boolean sensing model, the  
          system parameters are $f_0 \sim {N}(0,1)$, $f_1 \sim
		  {N}(0,1)$, and in the case of path loss sensing model, the
		  parameters are 
          $f_0 \sim {N}(0,1)$, $h_e = 1$, $\DetRange = 1.0$, $\InfRange=1.5$.}
\label{fig:sadd_vs_tfa}
\end{figure}

%======================================================================|
\begin{table} \footnotesize
\begin{center}
\caption{Simulation parameters and results %for ${\sf MAX}$, ${\sf HALL}$, and ${\sf ALL}$ 
%, and ${\sf Nikiforov}$ 
for the Boolean sensing model for $\pfi \leq 5 \times 10^{-2}$}
\begin{tabular} {|c|l|l|c|r|r|r|r|r|}
\hline
Detection/ &  No. of & Threshold &  &   \multicolumn{2}{|c|}{99\% Confidence interval}&  &   \multicolumn{2}{|c|}{99\% Confidence interval}   \\\cline{5-6}\cline{8-9} 
 Isolation &  MC     &  $c$      &  $\tfa$        &  $\tfa_{\text{lower}}$ & $\tfa_{\text{upper}}$&   $\add$       &  $\add_{\text{lower}}$ & $\add_{\text{upper}}$ \\ 
 procedure &  runs   &           &                   &                          &                        &   &  &   \\ \hline
\multirow{5}{*}{{\sf MAX}}
& $10^4$ & 2.71 &  $10^2$ &    93.69 &   106.61 &     8.77	& 8.45	&  9.09\\ \cline{2-9}
& $10^4$ & 4.93 &  $10^3$ &   942.10 &  1065.81 &    14.89	& 14.41	& 15.37\\ \cline{2-9}
& $10^4$ & 7.24 &  $10^4$ &  9398.61 & 10640.99 &    21.01	& 20.42	& 21.61\\ \cline{2-9}
& $10^4$ & 9.52 &  $10^5$ & 95696.90 & 108008.89&    26.43	& 25.76	& 27.11\\ \hline
\multirow{4}{*}{{\sf HALL}}
& $10^4$ & 1.67 &  $10^2$ & 92.67       & 107.58   &  5.96 &	 5.72 &	 6.20\\ \cline{2-9}  
& $10^4$ & 2.69 &  $10^3$ & 927.17      & 1085.48  &  8.81 &	 8.48 &	 9.14\\ \cline{2-9}
& $10^4$ & 3.66 &  $10^4$ & 9239.97     & 10826.71 & 11.58 &	11.17 &	11.99\\ \cline{2-9}
& $10^4$ & 4.52 &  $10^5$ & 92492.85    & 108389.15& 13.78 &	13.32 &	14.23\\ \hline   
\multirow{4}{*}{{\sf ALL}}
& $10^4$ & 2.16 &  $10^3$ &   915.94  &   1089.33 &   7.82 &	 7.53 &	  8.11\\ \cline{2-9}  
& $10^4$ & 2.96 &  $10^4$ &  9197.23  &  10811.90 &  10.07 &	 9.70 &	 10.44\\ \cline{2-9} 
& $10^4$ & 3.71 &  $10^5$ &  92205.45 & 107952.43 &  12.20 &	11.76 &	 12.63\\ \hline 
\multirow{4}{*}{{\sf Nikiforov}}
& $10^4$ & 2.75   & $10^2$ &   98.30  &    116.32 &   4.75	   &   4.52   &	 4.98\\ \cline{2-9}
& $10^4$ & 4.50   & $10^3$ &  986.48  &   1048.23 &   7.08	   &   6.79	  &  7.38\\ \cline{2-9}
& $10^4$ & 6.32   & $10^4$ & 9727.19  &  10261.94 &    9.14   &   9.00   &   9.68\\ \cline{2-9}
& $10^4$ & 8.32   & $10^5$ & 98961.41 & 110415.50 &  11.28	  &  11.00  & 12.25\\
\hline
\end{tabular}
\label{tab:bool}
\end{center}
\end{table}

\begin{table}
\begin{center}
\caption{Simulation parameters and results %for ${\sf MAX}$, ${\sf HALL}$, and ${\sf ALL}$ 
for the square law path loss sensing model for $\pfi \leq 5 \times
	10^{-2}$}
\begin{tabular} {|c|l|l|c|r|r|r|r|r|}
\hline
Detection/ &  No. of & Threshold &  &   \multicolumn{2}{|c|}{99\% Confidence interval}&  &   \multicolumn{2}{|c|}{99\% Confidence interval}   \\\cline{5-6}\cline{8-9} 
 Isolation &  MC     &  $c$      &  $\tfa$        &  $\tfa_{\text{lower}}$ & $\tfa_{\text{upper}}$&   $\add$       &  $\add_{\text{lower}}$ & $\add_{\text{upper}}$ \\ 
 procedure &  runs   &           &                   &                          &                        &   &  &   \\ \hline
\multirow{5}{*}{{\sf MAX}}
& $10^4$ & 2.71 &  $10^2$ &    93.69 &    106.61 &   30.74  &	 29.31 &  32.17\\ \cline{2-9}
& $10^4$ & 4.93 &  $10^3$ &   942.10 &   1065.81 &   79.60  &	 75.86 &  83.34\\ \cline{2-9}
& $10^4$ & 7.23 &  $10^4$ &  9398.61 &  10640.99 &  169.63  &	161.61 & 177.65\\ \cline{2-9}
& $10^4$ & 9.52 &  $10^5$ & 95696.90 & 108008.89 &  301.77  &	286.88 & 316.66\\ \hline
\multirow{4}{*}{{\sf HALL}}
& $10^4$ & 1.67 &   $10^2$ &    92.67 &    107.58 & 20.58  &	19.43 &	21.74\\ \cline{2-9}
& $10^4$ & 2.69 &   $10^3$ &   927.17 &   1085.48 & 40.56  &	38.24 &	42.88\\ \cline{2-9}
& $10^4$ & 3.66 &   $10^4$ &  9239.97 &  10826.71 & 66.45  &	62.57 &	70.33\\ \cline{2-9}
& $10^4$ & 4.52 &   $10^5$ & 92492.85 & 108389.15 & 96.93  &	91.03 & 102.82\\ \hline 
\multirow{4}{*}{{\sf ALL}}
& $10^4$ & 1.33 &  $10^2$ &    92.24  &    107.79 &   20.19 &	19.06&21.32\\ \cline{2-9}  
& $10^4$ & 2.16 &  $10^3$ &   915.94  &   1089.33 &   39.90 &   37.59&42.21\\ \cline{2-9}  
& $10^4$ & 2.96 &  $10^4$ &  9197.23  &  10811.90 &   63.34 &   59.43&67.24\\ \cline{2-9} 
& $10^4$ & 3.71 &  $10^5$ &  92205.45 & 107952.43 &   98.96 &   93.01&104.92\\ \hline 
\end{tabular}
\label{tab:path}
\end{center}
\end{table}

\normalsize

For the same sensor deployment in
Fig.~\ref{fig:sensor_placement_7nodes}, we compute the $\add$ and the
$\tfa$ for the square law path loss ($\eta=2$) sensing model given in
Section~\ref{sec:system_model}. Also, 
the signal strength $h_e$ is taken to be unity. Thus, the sensor sets
(${\cal N}_i$s) and the detection subregions (${\cal A}_i$s) are the same as
in the Boolean model, we described above.
Since $\DetRange$
is taken as 1, $f_1(\cdot;\DetRange) \sim {\cal N}(1,1)$.
Thus, the LLR of observation $X_k^{(s)}$ is given by 
$\ln\left(\frac{f_1(X_k^{(s)};\DetRange)}{f_0(X_k^{(s)})}\right) =
X_k^{(s)}-\frac{1}{2}$, which is the same
as that in the Boolean sensing model. Hence, under {\em the event not occurred
hypothesis}, the $\tfa$ under the path
loss sensing model is the same as that of the Boolean sensing model. 
%We
%consider an event occurring in a detection subregion ${\cal A}_i$. Let
%the event be detected by ${\cal N}_j$ which influence covers the event,
%i.e., $\ell_e \in {\cal B}({\cal N}_j)$. We assume a worst case distance between the event and sensors 
%$s \in {\cal N}_j$, i.e., $d_{es} = \InfRange$, the influence range. 
%The influence range $\InfRange$ is taken to be 1.5 (i.e., the influence
%region of a sensor overlaps with almost half of the detection region of
%each adjacent sensor). Due to the symmetry in the sensor locations (also
%in detection and influence subregions), we obtain the $\add$ and the $\tfa_i$ due to any of the 3 sensor
%sets by simulation.
%%%%%%%%%%%=%For a
%%%%%%%%%%%%%%=%distributed detection/isolation procedure, we choose a 
%%%%%%%%%%%=%threshold $c$ for each of the $\tfa_i$ values in $\{10^2, 10^3, 10^4,
%%%%%%%%%%%=%10^5\}$. 
%For these thresholds, we obtained the 
%average run length to false alarm and the detection delay for (at least) 1000 runs of the simulation, and
%took the average of these to be the $\tfa_i$ and the $\add$
%respectively. 
The $\CUSUM$ threshold $c$ that yields the target $\tfa$s and
other parameters and results  
are tabulated in Table~\ref{tab:path}.
To obtain the $\add$ the event is assumed to
occur at time 1, and at a distance of
$\InfRange$ from all the nodes of ${\cal N}_i$ that influence covers the
event (which corresponds to the maximum detection delay).
We plot the $\add$ vs $\log(\tfa)$ in
Fig.~\ref{fig:path}. The ordering on $\add$ for any $\tfa$ 
across the procedures is the same as that
in the Boolean model, and can be explained in the same manner. The
ambiguity in $\ell_e$ affects $f_1(\cdot;d_{e,s})$ and shows up as 
large $\add$ values.

%We 
%observe from Fig.~\ref{fig:path} that for large $\tfa_i$, $\add$ increases linearly with 
%$\log(\tfa)$.
%%%%%%%and that the slope is $\approx 1/\mathsf{KL}(f_1,f_0)$ for 
%%%%%%%$\sf{MAX}$, and $\approx 1/(3\cdot \mathsf{KL}(f_1,f_0))$ for $\sf{HALL}$, 
%%%%%%%$\sf{ALL}$, and Nikiforov's procedure. 
%Also, we observe that for a given $\tfa_i$, the smallest $\add$ is 
%achieved by both $\sf{ALL}$ and ${\sf HALL}$, and $\sf{MAX}$  
%has the largest $\add$. For example, for a $\tfa_i$ requirement of 2000 
%slots, $\sf{ALL}$ and ${\sf HALL}$ gives an $\add$ of 30 slots, 
%and $\sf{MAX}$ gives an $\add$ of 90 slots. Thus, for a $\tfa$
%requirement of 2000 slots, the $\add$ performance is poorer in path loss
%model by  23 slots compared to the Boolean model in the case of ${\sf
%ALL}$ and {\sf HALL}, and the $\add$ performance is worse by 75 slots
%in the case of ${\sf MAX}$ procedure. This is the price we pay for the
%uncertainty about the location of the event.

\vspace{-4mm}

%----------------------------------------------------------------------|
\section{Conclusion}
\label{sec:conclusions}
%----------------------------------------------------------------------|
We consider the quickest distributed event detection/isolation problem
in a large extent $\wsn$ with a practical sensing model which
incorporates the reduction in signal strength with distance. We formulate the change
detection/isolation problem in the optimality framework of  
\cite{nikiforov03lower-bound-for-det-isolation}
and \cite{tartakovsky08multi-decision}.
We propose distributed
detection/isolation procedures, $\sf{MAX}$, $\sf{ALL}$ and $\sf{HALL}$
and show that as $\min\{\tfa,1/\pfi\} \to \infty$, the $\add$ performance
of the distributed procedures grows in the same scale as that of the  
optimal centralised procedure of Tartakovsky  
\cite{tartakovsky08multi-decision} and Nikiforov 
\cite{nikiforov03lower-bound-for-det-isolation}.
%----------------------------------------------------------------------|

\vspace{-4mm}

%----------------------------------------------------------------------|
\appendices
\label{sec:appendix}
%----------------------------------------------------------------------|

\begin{comment}
%----------------------------------------------------------------------|
\section{}
%----------------------------------------------------------------------|

\begin{lemma}
\label{lem:wald}
For any $k \geq 1$, 
\begin{align}
\probnull{C^{(s)}_k \geq c }
&\leq   e^{-c} 
\end{align}
\end{lemma}
\begin{proof}
Let $S_k^{(s)}$ be the partial sum of the LLR of the observations up to time $k$.
Note that the ${\sf CUSUM}$ statistic $C^{(s)}_k = \max_{0 \leq t < k}
S^{(s)}_t$. Since the LLR $Z^{(s)}_k$s are i.i.d., the distribution of 
$(S^{(s)}_k, S^{(s)}_k-S^{(s)}_1, \cdots, S^{(s)}_k-S^{(s)}_{k-1})$ is the same as that of 
$(S^{(s)}_k, S^{(s)}_{k-1}, \cdots, S^{(s)}_1)$. Hence, 
\begin{align}
\probnull{C^{(s)}_k \geq c }
&=   \probnull{\max_{1 \leq t \leq k} S^{(s)}_t \geq c}
\ = \   \probnull{\tau^{{\sf SPRT},(s)} \leq k }\nonumber \\
&\leq  \probnull{\tau^{{\sf SPRT},(s)} < \infty }
\ \leq \  e^{-c} \ \ \text{(Wald's inequality)}. 
\end{align}
where $\tau^{{\sf SPRT},(s)}$ is the stopping time of the one--sided sequential
probability ratio test (${\sf SPRT}$) with the threshold being $c$ 
(\cite{basseville-nikiforov93detection}). 
\end{proof}
\end{comment}

%----------------------------------------------------------------------|
\section{Proof of Theorem~\ref{thm:tfa}}
\label{app:tfa}
%----------------------------------------------------------------------|
From detection sensor sets ${\cal N}_i, i=1,2,\cdots,N$, we choose the
collection of indices ${\cal I} \subseteq \{1,2,\cdots,N\}$ such that
any two sensor sets ${\cal N}_i$, ${\cal N}_j$, $i,j
\in {\cal I}$, are not partially ordered by set inclusion. For each $i 
\in {\cal I}$, define the set of sensors that are unique to the
sensor set ${\cal N}_i$, 
%\begin{align*}
${\cal M}_i  \ := \ {\cal N}_i\setminus \underset{j \neq i, j \in {\cal
I}}{\bigcup}{\cal N}_j \  
\subseteq {\cal N}_i$.
%\end{align*}
The sets ${\cal M}_1, {\cal M}_2, \cdots,
{\cal M}_{|{\cal I}|}$ are disjoint. Under the null hypothesis, ${\bf H}_0$, the
observations of sensors in the sensor sets ${\cal M}_1, {\cal M}_2,
\cdots, {\cal M}_{|{\cal I}|}$ are iid, with
the pdf 
$f_0 \sim {\cal N}(0,\sigma^2)$. 
For every ${\cal N}_i$, there exists ${\cal M}_j$ such that  
${\cal M}_j \subseteq {\cal N}_i$, so that 
$\tau^{{\sf rule},({\cal N}_i)} \geq \tau^{{\sf rule},({\cal M}_j)}$.
Hence,
$\tau^{\sf procedure}  =\min\{\tau^{{\sf procedure},({\cal N}_i)}:i=1,2,\cdots,N\} 
\geq \min\{\tau^{{\sf procedure},({\cal M}_i)}:i\in {\cal I}\} 
=: \widehat{\tau}^{\ {\sf rule}}$.
Hence,
\begin{align}
\myexpnull{\tau^{{\sf rule}}}
&\geq    \myexpnull{\widehat{\tau}^{\ {\sf rule}}} 
\ \geq e^{mc} \cdot  \prob{\widehat{\tau}^{\ {\sf rule}} > e^{mc}}
\  \text{(by the Markov inequality)}\nn
\text{or,} \
\frac{\myexpnull{\tau^{{\sf rule}}}}{e^{mc}}
&\geq   \prob{\widehat{\tau}^{\ {\sf rule}} > e^{mc}} 
\label{eqn:markov}
\ =  \ \prod_{i \in {\cal I}}\probnull{\tau^{\ {\sf rule}, ({\cal M}_i)} > e^{mc}}.
\end{align}
%Since the observations of sensors in 
%${\cal M}_i$s are i.i.d., the stopping times $\tau^{{\sf rule},({\cal
%M}_i)}$s are independent. 
We analyse 
$\probnull{\tau^{\ {\sf rule}, ({\cal M}_i)} > e^{mc}}$ as $c \to
\infty$, for %the
%procedures, 
$\myall$, $\mymax$, and $\myhall$. For ${\sf ALL}$,
\begin{align}
\probnull{\tau^{\ {\sf ALL}, ({\cal M}_i)} = k}
& \leq \ \probnull{C_k^{(s)} \geq c, \forall s \in {\cal M}_i }\
\ = \ \prod_{s \in {\cal M}_i } \probnull{C_k^{(s)} \geq c }\nn
& {\leq} \ e^{-cm_i} \ \ \ \ \text{(using Wald's inequality)}\nn 
%&\leq  e^{-cm_i}  \ \ \ \ \text{(from Lemma~\ref{lem:wald})} \nn
\text{Therefore}, \hspace{10mm} 
\probnull{\tau^{\ {\sf ALL}, ({\cal M}_i)} \leq k}
&\leq k \cdot e^{-cm_i}\nn
%\probnull{\tau^{\ {\sf ALL}, ({\cal M}_i)} \leq e^{mc}} &\leq e^{-c(m_i-m)}   \\
\probnull{\tau^{\ {\sf ALL}, ({\cal M}_i)} > e^{mc}} &\geq 1 -
e^{-c(m_i-m)}. \nonumber
\end{align}
Hence, for any $m < m_i$, we have
$\liminf_{c \to \infty} \ 
\probnull{\tau^{\ {\sf ALL}, ({\cal M}_i)} > e^{mc}} = 1$.
A large $m$ (which is smaller than all $m_i$s) is desirable. Thus, a
good choice for $m$ is
$a_{\sf ALL} = \min\{m_i : i \in {\cal I}\} - \delta$.
for some arbitrarily small $\delta > 0$.
Hence, from Eqn.~\eqref{eqn:markov},
\begin{align}
\myexpnull{\tau^{\sf ALL}}& \geqslant \exp\left( a_{\sf ALL}c \right)(1+o(1)) 
\end{align}
%\hfill$\blacksquare$

\noindent
For ${\sf MAX}$,
at the stopping time of $\mymax$, at least one of the ${\sf CUSUM}$
statistics is above the threshold $c$,
\begin{align}
\probnull{\tau^{\ {\sf MAX}, ({\cal M}_i)} = k}
&\leq  \probnull{C_k^{(s)} \geq c, \ \text{for some} \ s \in {\cal M}_i }\nonumber \\
&\leq  \sum_{s \in {\cal M}_i} \probnull{C_k^{(s)} \geq c }\nonumber \\
&{\leq}  m_i e^{-c} \ \ \ \ \text{(using Wald's inequality)}. 
%&{\leq}  m_i e^{-c} \ \ \ \ \text{(from Lemma~\ref{lem:wald})}. 
\end{align}
\begin{align}
\text{Therefore, for any arbitrarily small $\delta > 0$}, \
%\probnull{\tau^{\ {\sf MAX}, ({\cal M}_i)} \leq k} &\leq km_ie^{-c} \nonumber\\
%%%\probnull{\tau^{\ {\sf MAX}, ({\cal M}_i)} \leq e^{(1-\delta)c}} &\leq m_ie^{-\delta c} \nonumber\\
\probnull{\tau^{\ {\sf MAX}, ({\cal M}_i)} > e^{(1-\delta)c}} &\geq 1- m_ie^{-\delta c} \nonumber \\ 
\liminf_{c \to \infty}
\probnull{\tau^{\ {\sf MAX}, ({\cal M}_i)} > e^{(1-\delta)c}} &=1. 
\end{align}
Let $a_{\sf MAX} = 1-\delta$. 
For any arbitrarily small $\delta >0$, we see from Eqn.~\eqref{eqn:markov},
\begin{align}
\label{eqn:max_arlfa}
\myexpnull{\tau^{{\sf MAX}}}
\geqslant \exp\left((1-\delta) c\right) (1+o(1)) \
\ =: \ \exp\left(a_{\sf MAX} c\right) (1+o(1)),
\end{align}

\noindent
For {\sf HALL}, for the same threshold $c$, 
the stopping time of $\myhall$ is after
that of $\mymax$. Hence, 
$\tau^{\ {\sf HALL}} \ \geq \ \tau^{\ {\sf MAX}}$. Hence,
$\myexpnull{\tau^{\ {\sf HALL}}} \ \geq \ 
\myexpnull{\tau^{\ {\sf MAX}}} \ \geq \ 
\exp\left((1-\delta)c\right)(1+o(1))$ (from Eqn.~\eqref{eqn:max_arlfa}). 
Thus, for $a_{\sf ALL} := 1-\delta$, for any arbitrarily small $\delta >
0$,
\begin{align}
\myexpnull{\tau^{\ {\sf HALL}}} & \geq \exp\left(a_{\sf ALL} c\right) (1+o(1))
\end{align}
\section{Proof of Theorem~\ref{thm:pfi}}
\label{app:pfi}
%----------------------------------------------------------------------|
Consider $\ell_e \in {\cal A}_i$. The probability of false isolation
when the detection is due to ${\cal N}_j \not\subseteq{\cal N}(\ell_e)$
is
\begin{align*}
\pmeasure{	
	\tau^{{\sf rule}} = \tau^{{\sf rule},({\cal N}_j)} 
	     }
&= \pmeasure{	
	\tau^{{\sf rule},({\cal N}_j)} \leq 
		\tau^{{\sf rule},({\cal N}_h)}, \forall h=1,2,\cdots,N 
	}\\
&\leq \pmeasure{	
	\tau^{{\sf rule},({\cal N}_j)} \leq \tau^{{\sf rule},({\cal N}_i)}
	} \\
&= \sum_{k=1}^\infty 
    \pmeasure 
	{ \tau^{{\sf rule},({\cal N}_i)} = k
	} 
    \pmeasure{ \tau^{{\sf rule},({\cal N}_j)} \leq k  
	\mid  \tau^{{\sf rule},({\cal N}_i)} = k
	}\nn
&= \sum_{k=1}^\infty 
    \pmeasure 
	{ \tau^{{\sf rule},({\cal N}_i)} = k
	} 
    \left[    
 \sum_{t=1}^k
	\pmeasure{ \tau^{{\sf rule},({\cal N}_j)} = t  
	\mid  \tau^{{\sf rule},({\cal N}_i)} = k
	}
	\right]
\end{align*}
\subsection{${\sf PFI}(\tau^{\sf ALL})$ -- Boolean Sensing Model}

\vspace{-9mm}

\begin{align}
\pbmeasure{
	\tau^{\myall,({\cal N}_j)}=t 
	\mid \tau^{\myall,({\cal N}_i)}=k
	}
&\leq  \pbmeasure{ C_{t}^{(s)} \geq c, \ \forall s \in {\cal N}_j 
	 \mid \tau^{\myall,({\cal N}_i)}=k
	 }\nn
%&\leq \sum_{t=1}^k {\mathsf P}_1^{(i)}\left\{ C_{t}^{(s)} \geq c, \
%\forall s \in {\cal N}_j \setminus {\cal N}_i 
%	 \mid \tau^{\myall,({\cal N}_i)}=k
%	 \right\}\\
&\leq  {\mathsf P}_\infty\left\{ C_{t}^{(s)} \geq c, \
\forall s \in {\cal N}_j \setminus {\cal N}_i 
	 \right\}\nn
%&= \sum_{t=1}^k
%\left[
%\prod_{s \in {\cal N}_j\setminus{\cal N}_i} 
%{\mathsf P}_\infty\left\{ C_{t}^{(s)} \geq c
%	 \right\}\right]\\
&\leq
 \exp\left(-|{\cal N}_j\setminus{\cal N}_i|c\right) \ \ \
%\text{(from Lemma~\ref{lem:wald})}. 
\text{(using Wald's inequality)}.\nn 
\text{Therefore,} \
\pbmeasure{
	\tau^{{\myall},({\cal N}_j)} \leq 
	 \tau^{\myall,({\cal N}_i)} 
	 }
&\leq   \exp\left(-|{\cal N}_j\setminus{\cal N}_i|c\right) \cdot 
{\mathsf E}_1^{(i)}\left[\tau^{\myall, ({\cal N}_i)} \right] \nonumber \\
%&\leq   \exp\left(-|{\cal N}_j\setminus{\cal N}_i|c\right) \cdot
%\frac{c}{\alpha|{\cal N}_i|}(1+o(1)) \nonumber \\
&\leq   \exp\left(-(|{\cal N}_j\setminus{\cal N}_i|c-\ln(c))\right) \cdot
\frac{1}{\alpha|{\cal N}_i|}(1+o(1)). \nn
\text{Hence}, \ {\sf PFI}(\tau^{\myall}) 
&\leq \ \max_{1 \leq i \leq N} \ 
\max_{1 \leq j \leq N, {\cal N}_j \not\subseteq {\cal N}_i } 
\ 
{\mathsf P}_1^{(i)}\left\{
	\tau^{{\myall},({\cal N}_j)} \leq \tau^{\myall,({\cal N}_i)}
	\right\}\nn
&\leq    \frac{\exp\left(-(m c-\ln(c))\right)}{\underline{n}\alpha
}(1+o(1))
\end{align}
where 
$\underline{n} = \min\{|{\cal N}_i:i=1,2,\cdots,N|\}$, 
$m = \underset{1 \leq i \leq N, 1 \leq j \leq N, {\cal N}_j
\not\subseteq {\cal N}_i}{\min} \{ |{\cal
N}_j\setminus{\cal N}_i| \}$.
For any $n$, there exists $c_0(n)$ such that for all $c
> c_0(n), c < e^{c/n}$. Using this inequality, for sufficiently large $c$
\begin{align*}
{\sf PFI}(\tau^{\myall}) 
%&\leq \ \max_{1 \leq i \leq N} \ 
%\max_{1 \leq j \leq N, {\cal N}_j \not\subseteq {\cal N}_i } 
%\ 
%{\mathsf P}_1^{(i)}\left\{
%	\tau^{{\myall},({\cal N}_j)} \leq \tau^{\myall,({\cal N}_i)}
%	\right\}\\
&\leq    \frac{\exp\left(-\left(\left(m-\frac{1}{n}\right) c\right)\right)}{\underline{n}\alpha
}(1+o(1)) \ = 
\frac{\exp(- b_{\myall}\cdot c)}
 {B_{\sf \myall}}
 (1+o(1)),
\end{align*}
where
$b_{\sf \myall} = m - 1/n$ and $B_{\myall} = \underline{n}\alpha$. 
\subsection{${\sf PFI}(\tau^{\sf MAX})$ -- Boolean Sensing Model}

\vspace{-9mm}

\begin{align*}
{\mathsf P}_1^{(i)}\left\{ \tau^{{\sf MAX},({\cal N}_j)} =t
\mid \tau^{\mymax, ({\cal N}_i)}=k 
\right\}
&\leq
{\mathsf P}_1^{(i)}\left\{ \tau^{(s)} \leq t, \forall s \in {\cal
N}_j
\mid \tau^{\mymax, ({\cal N}_i)}=k 
\right\}\\
%&\leq
%{\mathsf P}_1^{(i)}\left\{ \tau^{(s)} \leq t, \forall s \in {\cal
%N}_j\setminus{\cal N}_i
%\mid \tau^{\mymax, ({\cal N}_i)}=k 
%\right\}\\
&\leq
{\mathsf P}_\infty\left\{ \tau^{(s)} \leq t, \forall s \in {\cal
N}_j\setminus{\cal N}_i
\mid \tau^{\mymax, ({\cal N}_i)}=k 
\right\}\\
%&=
%\prod_{s \in {\cal N}_j\setminus {\cal N}_i}
%{\mathsf P}_\infty\left\{ \tau^{(s)} \leq t \mid \tau^{\mymax, ({\cal
%N}_i)}=k 
%\right\}\\
&=
\prod_{s \in {\cal N}_j\setminus {\cal N}_i} \
\sum_{n=1}^t \ 
{\mathsf P}_\infty\left\{ \tau^{(s)} =n \mid \tau^{\mymax, ({\cal
N}_i)}=k 
\right\}\\
%&=
%\prod_{s \in {\cal N}_j\setminus {\cal N}_i} \
%\sum_{n=1}^t \ 
%{\mathsf P}_\infty\left\{ C_n^{(s)} \geq c \mid \tau^{\mymax, ({\cal
%N}_i)}=k 
%\right\}\\
&=
\prod_{s \in {\cal N}_j\setminus {\cal N}_i} \
\sum_{n=1}^t \ 
{\mathsf P}_\infty\left\{ C_n^{(s)} \geq c 
\right\}\\
%&=
%\sum_{t=1}^k \ 
%\prod_{s \in {\cal N}_j\setminus {\cal N}_i} \
%\sum_{n=1}^t \ 
%{\mathsf P}_\infty\left\{ C_n^{(s)} \geq c \right\}\\
&\leq \exp\left(-m_{ji}c\right)t^{m_{ji}}\\ 
\text{Hence,} \ {\mathsf P}_1^{(i)}\left\{
\tau^{\sf MAX,({\cal N}_j)} \leq \tau^{\mymax, ({\cal N}_i)} 
\right\}
&\leq 
\exp\left(- m_{ji} c\right)
{\mathsf E}_1^{(i)}\left[(\tau^{\mymax, {\cal N}_i})^{1+m_{ji}}
\right] \nonumber\\ 
&\leq  
\exp\left(- m_{ji} c\right)  
\frac{c^{1+m_{ji}}}{\alpha^{1+m_{ji}}}(1+o(1))\\
& =  
\frac{\exp\left(-\left(m_{ji}c-(1+m_{ji})\ln(c)\right)\right)}{\alpha^{1+m_{ji}}}
(1+o(1))
\end{align*}
\begin{comment}
where in step $(a)$, we used 
%the following relation 
%${\sf E}{\tau^{m}} \sim \frac{c^m}{KL(\cdot,\cdot)^m}$ from
Theorem 6 in 
%page 482 
\cite{tartakovsky-veeravalli05general-asymptotic-quickest-change}. 
\end{comment}
Let $m = \underset{1 \leq i \leq N, 1 \leq j \leq N, {\cal N}_j
\not\subseteq {\cal N}_i}{\min} m_{ji}$,
  $\bar{m} = \underset{1 \leq i \leq N, 1 \leq j \leq N, {\cal N}_j
\not\subseteq {\cal N}_i}{\max} m_{ji}$, 
and  $\alpha^* = \underset{1 \leq i \leq N, 1 \leq j \leq N, {\cal N}_j
\not\subseteq {\cal N}_i}{\min} \alpha^{1+m_{ji}}$.
\begin{comment}
and  $\bar{m} = \underset{1 \leq i \leq N, 1 \leq j \leq N, {\cal N}_j
\not\subseteq {\cal N}_i}{\max} m_{ji}$. 
Define $K$ as follows.
\begin{align*}
K &:=\left\{
\begin{array}{ll}
\frac{\exp\left(-\frac{\alpha}{4} \right)}
{\alpha\left(1-\exp\left(-\frac{\alpha}{4}\right)\right)},
 & \text{if} \ \frac{\exp\left(-\frac{\alpha}{4} \right)}
{\alpha\left(1-\exp\left(-\frac{\alpha}{4}\right)\right)} \leq 1 \\ 
\left(\frac{\exp\left(-\frac{\alpha}{4} \right)}
{\alpha\left(1-\exp\left(-\frac{\alpha}{4}\right)\right)}
\right)^{\bar{m}}, & \text{otherwise}
\end{array}
\right.
\end{align*}
\end{comment}
\begin{align*}
\text{Therefore,} \ 
{\sf PFI}(\tau^{\mymax}) 
&\leq  \max_{1 \leq i \leq N} \ 
\max_{1 \leq j \leq N, {\cal N}_j \not\subseteq {\cal N}_i } 
\ 
{\mathsf P}_1^{(i)}\left\{
\tau^{\sf MAX,({\cal N}_j)} \leq \tau^{\mymax, ({\cal N}_i)} 
\right\}\\
&\leq 
\frac{\exp\left(-\left(m c-(1+\bar{m})\ln(c)\right)\right)}{\alpha^*}
(1+o(1)).
\end{align*}
For any $n$, there exists $c_0(n)$ such that for all $c
> c_0(n), c < e^{c/n}$. Hence, for sufficiently large $c$
\begin{align*}
{\sf PFI}(\tau^{\mymax}) 
&\leq  \max_{1 \leq i \leq N} \ 
\max_{1 \leq j \leq N, {\cal N}_j \not\subseteq {\cal N}_i } 
\ 
{\mathsf P}_T^{(i)}\left\{
\tau^{\sf MAX,({\cal N}_j)} \leq \tau^{\mymax, ({\cal N}_i)} 
\right\}\\
&\leq 
\frac{\exp\left(-\left(m -\frac{1+\bar{m}}{n}\right)c\right)}{\alpha^*}
(1+o(1)) \ = \ 
 \frac{\exp(- b_{\mymax} \cdot c)}
 {B_{\sf \mymax}}
 (1+o(1)), 
\end{align*}
where $b_{\sf \mymax} = m - ((1+\bar{m})/n)$ and $B_{\mymax} = \alpha^*$. 
\subsection{${\sf PFI}(\tau^{\sf HALL})$ -- Boolean Sensing Model}

\vspace{-9mm}

\begin{align*}
{\mathsf P}_1^{(i)}\left\{ \tau^{{\sf HALL},({\cal N}_j)} =t
\mid \tau^{\myhall, ({\cal N}_i)}=k 
\right\}
&\leq
{\mathsf P}_1^{(i)}\left\{ \tau^{(s)} \leq t, \forall s \in {\cal
N}_j
\mid \tau^{\myhall, ({\cal N}_i)}=k 
\right\}
\end{align*}
which has the same form as that of $\mymax$. Hence, from the analysis of
$\mymax$, it follows that
\begin{align*}
{\mathsf P}_1^{(i)}\left\{
\tau^{\sf HALL,({\cal N}_j)} \leq \tau^{\myhall, ({\cal N}_i)} 
\right\}
&\leq  
\exp\left(-m_{ji}c\right)
 {\mathsf E}_1^{(i)}\left[(\tau^{\myhall, ({\cal N}_i)})^{1+m_{ji}}
 \right]\\
&\leq  
\exp\left(-m_{ji}c\right)
\frac{c^{1+m_{ji}}}{|{\cal N}_i|^{1+m_{ji}} \alpha^{1+m_{ji}}}(1+o(1))\\
& =  
\exp\left(-\left(m_{ji}c-(1+m_{ji})\ln(c)\right)\right)
\left[\frac{1}{\alpha|{\cal N}_i| }\right]^{1+m_{ji}}
(1+o(1))
\end{align*}

\begin{align*}
{\sf PFI}(\tau^{\myhall})
& \leq \ \max_{1 \leq i \leq N} \ 
\max_{1 \leq j \leq N, {\cal N}_j \not\subseteq {\cal N}_i } 
\ 
{\mathsf P}_1^{(i)}\left\{
\tau^{\sf HALL,({\cal N}_j)} \leq \tau^{\myhall, ({\cal N}_i)}
\right\} \\
&\leq    
\frac{ \exp\left(-
\left(mc-(1+\bar{m})\ln(c)\right)\right)}{\alpha^*}(1+o(1)).
\end{align*}
where
$\alpha^* = \underset{1 \leq i \leq N, 1 \leq j \leq N, {\cal N}_j
\not\subseteq {\cal N}_i}{\min} \left(\alpha \cdot|{\cal
N}_i|\right)^{1+m_{ji}}$.
For any $n$ there exists $c_0(n)$ such that for all $c
> c_0(n), c < e^{c/n}$. Hence, for sufficiently large $c$
\begin{align*}
{\sf PFI}(\tau^{\myhall})
& \leq \ \max_{1 \leq i \leq N} \ 
\max_{1 \leq j \leq N, {\cal N}_j \not\subseteq {\cal N}_i } 
\ 
{\mathsf P}_1^{(i)}\left\{
\tau^{\sf HALL,({\cal N}_j)} \leq \tau^{\myhall, ({\cal N}_i)}
\right\} \\
&\leq    
\frac{ \exp\left(-
\left(m-\frac{1+\bar{m}}{n}\right)c\right)}{\alpha^*}(1+o(1))
\ = \
 \frac{\exp(- b_{\myhall} \cdot c)}
 {B_{\sf \myhall}} 
 (1+o(1)), 
\end{align*}
where $b_{\sf \myhall} = m - ((1+\bar{m})/n)$ and $B_{\myhall} = \alpha^*$.

\begin{comment}
\begin{align*}
K &:=\left\{
\begin{array}{ll}
\frac{\exp\left(-\frac{\alpha}{4} \right)}
{\underline{n}\alpha\left(1-\exp\left(-\frac{\alpha}{4}\right)\right)},
 & \text{if} \ \frac{\exp\left(-\frac{\alpha}{4} \right)}
{\underline{n}\alpha\left(1-\exp\left(-\frac{\alpha}{4}\right)\right)} \leq 1 \\ 
\left(\frac{\exp\left(-\frac{\alpha}{4} \right)}
{\underline{n}\alpha\left(1-\exp\left(-\frac{\alpha}{4}\right)\right)}
\right)^{\bar{m}}, & \text{otherwise}
\end{array}
\right.
\end{align*}
and $\underline{n} = \min\{|{\cal N}_i| : i = 1,2,\cdots,N\}$.
\end{comment}

%Thus
%\begin{align*}
%{\sf PFI}(i,j;{\sf ALL})&\leq \exp\left(- |{\cal N}_j\setminus {\cal N}_i|c \right)\\
%{\sf PFI}(i,j;{\sf HALL})&\leq \exp\left(- |{\cal N}_j\setminus {\cal N}_i|c \right)\\
%{\sf PFI}(i,j;{\sf MAX})&\leq |{\cal N}_j\setminus {\cal N}_i|\exp\left(-c \right)\\
%\end{align*}

\section*{$\pfi$ -- Path--Loss Sensing Model}

\begin{lemma}
\label{lem:path-bound}
For $s \in {\cal N}_j\setminus {\cal M}_e$ and for $t \geq T$, (with the
pre--change pdf
$f_0 \sim {\cal N}(0,\sigma^2)$ and the post--change pdf $f_1 \sim {\cal
N}(h_e\rho(r_s),\sigma^2)$)
\begin{align*}
 \pmeasure{ C_{t}^{(s)} \geq c }
 &\leq 
	 \exp\left(  -\frac{\underline{\omega}_0}{2}c\right)
	 \cdot
 \frac{\exp\left( -
 \frac{\alpha\underline{\omega}_0^2}{4}\right)}{1-\exp\left(-\frac{\alpha
 \underline{\omega}_0^2}{4}\right)},
\end{align*}
where we recall that the parameter $\underline{\omega}_0$ defines the
influence range, and $\alpha = $ KL$(f_1,f_0)$.
\end{lemma}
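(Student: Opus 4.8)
\emph{Proof proposal.} The plan is an exponential--Markov (Chernoff) bound applied to the ladder--height representation of the local $\CUSUM$. Writing $S_m^{(s)}:=\sum_{j=1}^{m}Z_j^{(s)}(\DetRange)$ for the partial sums of the log--likelihood ratios that drive the $\CUSUM$ at node $s$ (with $S_0^{(s)}:=0$), the recursion $C_k^{(s)}=\max\{0,C_{k-1}^{(s)}+Z_k^{(s)}(\DetRange)\}$ unrolls to $C_t^{(s)}=\max_{0\leq m\leq t}\bigl(S_t^{(s)}-S_m^{(s)}\bigr)$. For $c>0$ the index $m=t$ gives $S_t^{(s)}-S_t^{(s)}=0<c$, so only $m\leq t-1$ can contribute, and a union bound followed by Markov's inequality applied to $e^{\omega(\cdot)}$, for a parameter $\omega>0$ to be chosen, gives
\begin{align*}
\pmeasure{C_t^{(s)}\geq c}
\ \leq\ \sum_{\ell=1}^{t}\pmeasure{S_t^{(s)}-S_{t-\ell}^{(s)}\geq c}
\ \leq\ e^{-\omega c}\sum_{\ell=1}^{t}\phi(\omega)^{\ell},
\end{align*}
where $\phi(\omega):={\sf E}_1^{({\bf d}(\ell_e))}\!\bigl[e^{\omega Z_1^{(s)}(\DetRange)}\bigr]$ and we used that the increments $Z_j^{(s)}(\DetRange)$ are i.i.d.\ under the (post--change) measure.

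Next I would compute $\phi(\omega)$ and choose $\omega$. Since $Z_k^{(s)}(\DetRange)=\tfrac{h_e\rho(\DetRange)}{\sigma^2}X_k^{(s)}-\tfrac{(h_e\rho(\DetRange))^2}{2\sigma^2}$ with $X_k^{(s)}\sim{\cal N}(h_e\rho(r_s),\sigma^2)$ post--change, the standard Gaussian moment generating function identity gives
\begin{align*}
\phi(\omega)=\exp\!\left(\alpha\,\omega\left(\omega-1+\frac{2\rho(r_s)}{\rho(\DetRange)}\right)\right),
\qquad \alpha=\mathsf{KL}(f_1,f_0)=\frac{(h_e\rho(\DetRange))^2}{2\sigma^2}.
\end{align*}
The essential structural input is that $s\in{\cal N}_j\setminus{\cal M}_e$ means $s$ does \emph{not} influence--cover $\ell_e$, i.e.\ $r_s>\InfRange$; by the definition of the influence range $\InfRange=\min\{d':2\rho(d')\leq(1-\underline{\omega}_0)\rho(\DetRange)\}$ and monotonicity of $\rho$, this forces $2\rho(r_s)\leq(1-\underline{\omega}_0)\rho(\DetRange)$, i.e.\ $\rho(r_s)/\rho(\DetRange)\leq(1-\underline{\omega}_0)/2$. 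Taking $\omega=\underline{\omega}_0/2$ then makes the bracket at most $\underline{\omega}_0/2-1+(1-\underline{\omega}_0)=-\underline{\omega}_0/2$, so $\phi(\underline{\omega}_0/2)\leq\exp(-\alpha\underline{\omega}_0^2/4)=:q<1$.

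Assembling the pieces: with $\omega=\underline{\omega}_0/2$ the prefactor $e^{-\omega c}$ becomes $\exp(-\tfrac{\underline{\omega}_0}{2}c)$, and $\sum_{\ell=1}^{t}q^{\ell}\leq\sum_{\ell\geq1}q^{\ell}=q/(1-q)=\exp(-\alpha\underline{\omega}_0^2/4)/(1-\exp(-\alpha\underline{\omega}_0^2/4))$, which is exactly the claimed bound. (If the change occurs at some $T>1$ instead of at time $1$, the summands with index $j<T$ in $S_t^{(s)}-S_{t-\ell}^{(s)}$ are pre--change, with increment mgf $\exp(\alpha\omega(\omega-1))\leq q$, so those terms only improve the bound and nothing changes.) The one place that needs care is the choice of the Chernoff exponent: $\omega=\underline{\omega}_0/2$ is precisely the value for which both the $e^{-\omega c}$ factor and the geometric ratio come out as stated, and one must also notice that discarding the $\ell=0$ term is what produces the factor $\exp(-\alpha\underline{\omega}_0^2/4)$ --- rather than $1$ --- in the numerator; the rest is the routine Gaussian mgf computation and a geometric sum.
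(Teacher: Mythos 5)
Your proof is correct and lands on exactly the stated constant, but it executes the bound differently from the paper, in a way worth recording. Both arguments reduce $\{C_t^{(s)}\geq c\}$ to a union of partial--sum crossing events and then apply an exponential Markov bound, with the structural input in both cases being that $s\notin{\cal M}_e$ forces $2\rho(d_{e,s})\leq(1-\underline{\omega}_0)\rho(\DetRange)$ and hence a strictly negative drift. You use the pathwise identity $C_t^{(s)}=\max_{0\leq m\leq t}\bigl(S_t^{(s)}-S_m^{(s)}\bigr)$ and a \emph{single fixed} tilt $\omega=\underline{\omega}_0/2$ applied to the LLR increments; since $\phi(\underline{\omega}_0/2)\leq\exp(-\alpha\underline{\omega}_0^2/4)=:q$, the union bound is immediately the geometric series $e^{-\underline{\omega}_0 c/2}\sum_{\ell\geq1}q^{\ell}$, which sums to the claimed expression. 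The paper instead works with forward prefix sums of the raw Gaussian observations, optimizes the Chernoff parameter $\theta$ separately for each term (obtaining $\exp\bigl(-(c+n\alpha\underline{\omega}_0)^2/(4\alpha n)\bigr)$), and then needs an additional resummation step, comparing consecutive exponents, to collapse that non-geometric series back into the very same bound; the per-term optimal bounds are sharper individually but are loosened to the identical final form. Your route is shorter, and it also sidesteps the paper's implicit use of the identity $C_t^{(s)}\stackrel{d}{=}\max_{1\leq n\leq t}S_n^{(s)}$, which rests on exchangeability of the increments and is delicate when a change point sits mid-stream, whereas your backward-max representation is exact pathwise; your parenthetical treatment of a general change time $T>1$ (pre--change increment mgf $\exp(\alpha\omega(\omega-1))\leq q$ at $\omega=\underline{\omega}_0/2$ because $\underline{\omega}_0\leq1$) correctly covers the paper's split into pre-- and post--change terms. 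The only caveat is notational, not mathematical: you read $r_s$ as the sensor--event distance and take $\alpha=\mathsf{KL}(f_1(\cdot;\DetRange),f_0)=(h_e\rho(\DetRange))^2/(2\sigma^2)$, which is the paper's intended meaning of $\alpha$ even though the lemma's wording is ambiguous on this point.
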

\noindent
{\em Proof:}
%\begin{proof}
For $s \in {\cal N}_j\setminus {\cal M}_e$ and for $t \geq T$,
\footnotesize
\begin{align*}
&\pmeasure{ C_{t}^{(s)} \geq c }\\
&= \pmeasure{ \max_{1\leq n \leq t}
\sum_{k=1}^{n}
\ln\left(\frac{f_1(X^{(s)}_k;r_s)}{f_0(X^{(s)}_k)}\right) \geq c  }\\
%&\leq \sum_{n=1}^{t} {\mathsf P}_{T}^{(\ell_{e,i})}\left\{ \sum_{k=1}^n  \ln\left(\frac{f_1(X^{(s)}_k;r_s)}{f_0(X^{(s)}_k)}\right) \geq c  \right\}\\
&\leq \sum_{n=1}^{\infty} \pmeasure{ \sum_{k=1}^n  \ln\left(\frac{f_1(X^{(s)}_k;r_s)}{f_0(X^{(s)}_k)}\right) \geq c  }\\
%&= \sum_{n=1}^{T-1} {\mathsf P}_{T}^{(\ell_{e,i})}\left\{ \sum_{k=1}^n  \ln\left(\frac{f_1(X^{(s)}_k;r_s)}{f_0(X^{(s)}_k)}\right) \geq c  \right\}
%+ \sum_{n=T}^{\infty} {\mathsf P}_{T}^{(\ell_{e,i})}\left\{ \sum_{k=1}^n  \ln\left(\frac{f_1(X^{(s)}_k;r_s)}{f_0(X^{(s)}_k)}\right) \geq c  \right\}\\
&= \sum_{n=1}^{T-1} {\mathsf P}_{\infty}\left\{ \sum_{k=1}^n  \ln\left(\frac{f_1(X^{(s)}_k;r_s)}{f_0(X^{(s)}_k)}\right) \geq c  \right\}
+ \sum_{n=T}^{\infty} \pmeasure{ \sum_{k=1}^n  \ln\left(\frac{f_1(X^{(s)}_k;r_s)}{f_0(X^{(s)}_k)}\right) \geq c  }\\
&= \sum_{n=1}^{T-1} {\mathsf P}_{\infty}\left\{ \sum_{k=1}^n  \ln\left(\frac{f_1(X^{(s)}_k;r_s)}{f_0(X^{(s)}_k)}\right) \geq c  \right\}
+
 \sum_{n=T}^{\infty} \pmeasure{
	\sum_{k=1}^{T-1}  \ln\left(\frac{f_1(X^{(s)}_k;r_s)}{f_0(X^{(s)}_k)}\right) 
	+\sum_{k=T}^{n}  \ln\left(\frac{f_1(X^{(s)}_k;r_s)}{f_0(X^{(s)}_k)}\right) \geq c  }\\
%&= \sum_{n=1}^{T-1} {\mathsf P}_{\infty}\left\{ \sum_{k=1}^n  \ln\left(\frac{f_1(X^{(s)}_k;r_s)}{f_0(X^{(s)}_k)}\right) \geq c  \right\}
%+
% \sum_{n=T}^{\infty} {\mathsf P}_{\infty}\left\{
%	\sum_{k=1}^{T-1}  \ln\left(\frac{f_1(X^{(s)}_k;r_s)}{f_0(X^{(s)}_k)}\right) 
%	+\sum_{k=T}^{n}
%	\ln\left(\frac{f_1(X^{(s)}_k+h_e\rho(d_{e,s});r_s)}{f_0(X^{(s)}_k+h_e\rho(d_{e,s}))}\right) \geq c  \right\}\\
%%%%%&= \sum_{n=1}^{T-1} {\mathsf P}_{\infty}\left\{ 
%%%%%	\left(
	%%%%%\frac{h_e\rho(r_s)}{\sigma^2}
	%%%%%\sum_{k=1}^{n}
	%%%%%\left(X_k^{(s)}-\frac{h_e\rho(r_s)}{2} \right)
	%%%%%\right) 
%%%%% \geq c
%%%%%\right\} \nonumber \\
%%%%%&+ \sum_{n=T}^{\infty} {\mathsf P}_{\infty}\left\{
%%%%%	\left(
	%%%%%\frac{h_e\rho(r_s)}{\sigma^2}
	%%%%%\sum_{k=1}^{T-1}
	%%%%%\left(X_k^{(s)}-\frac{h_e\rho(r_s)}{2} \right)
	%%%%%\right) +  
	%%%%%\left(
	%%%%%\frac{h_e\rho(r_s)}{\sigma^2}
	%%%%%\sum_{k=T}^{n}
	%%%%%\left(X_k^{(s)}+h_e\rho(d_{e,s})-\frac{h_e\rho(r_s)}{2} \right)
	%%%%%\right)
	%%%%%\geq c  \right\}\\
&= \sum_{n=1}^{T-1} {\mathsf P}_{\infty}\left\{
	\sum_{k=1}^{n}
	X_k^{(s)}
 \geq  \frac{\sigma^2}{h_e\rho(r_s)}(c + n \alpha)
\right\} %\ \hspace{38mm} \left(\text{where $\alpha =
%\frac{h_e^2\rho(r_s)^2}{2\sigma^2}$}\right) \nonumber \\
+ \sum_{n=T}^{\infty} {\mathsf P}_{\infty}\left\{
	\sum_{k=1}^{n}
	X_k^{(s)}
	\geq \frac{\sigma^2}{h_e\rho(r_s)}c+nh_e\left(\frac{\rho(r_s)}{2}
 -    \rho(d_{e,s}) \right)
 +  \left( T-1\right) h_e\rho(d_{e,s})
	\right\}\\
&\leq \sum_{n=1}^{T-1} {\mathsf P}_{\infty}\left\{
	\sum_{k=1}^{n}
	X_k^{(s)}
 \geq  \frac{\sigma^2}{h_e\rho(r_s)}(c + n \alpha)
\right\} + \sum_{n=T}^{\infty} {\mathsf P}_{\infty}\left\{
	\sum_{k=1}^{n} X_k^{(s)} \geq 
	n \cdot {h_e \frac{\rho(r_s)}{2} \underline{\omega}_0}
    +  {c  \cdot  \frac{\sigma^2}{h_e\rho(r_s)}}
	  \right\}\\
%&= \sum_{n=1}^{T-1} {\mathsf P}_{\infty}\left\{
%	\sum_{k=1}^{n}
%	X_k^{(s)}
% \geq  \frac{\sigma^2}{h_e\rho(r_s)}(c + n \alpha)
%\right\} + \sum_{n=T}^{\infty} {\mathsf P}_{\infty}\left\{
%	\sum_{k=1}^{n} X_k^{(s)} \geq
%\frac{\sigma^2}{h_e\rho(r_s)}
%	\left(c+n\alpha\underline{\omega}_0 \right)
%	  \right\}\\
%&\leq \sum_{n=1}^{T-1} {\mathsf P}_{\infty}\left\{
%	\sum_{k=1}^{n}
%	X_k^{(s)}
% \geq  \frac{\sigma^2}{h_e\rho(r_s)}(c + n \alpha\underline{\omega}_0)
%\right\} + \sum_{n=T}^{\infty} {\mathsf P}_{\infty}\left\{
%	\sum_{k=1}^{n} X_k^{(s)} \geq
%\frac{\sigma^2}{h_e\rho(r_s)}
%	\left(c+n\alpha\underline{\omega}_0 \right)
%	  \right\}\\
&\leq \sum_{n=1}^{\infty} {\mathsf P}_{\infty}\left\{
	 \exp\left(\theta\sum_{k=1}^{n} X_k^{(s)}\right) \geq
	 \exp\left(
	 \frac{\theta\sigma^2}{h_e\rho(r_s)}(c+ n\alpha\underline{\omega}_0)
	 \right) 
	 \right\} \ \ \text{for any} \ \theta > 0.\\
\text{Hence}, \ \pmeasure{ C_{t}^{(s)} \geq c }
	 &\leq 
	 \sum_{n=1}^{\infty} 
	 \exp\left(-\frac{\theta\sigma^2}{h_e\rho(r_s)}
	 (c+n\alpha\underline{\omega}_0) \right) \left({\mathsf
	 E}_\infty\left[e^{\theta X_1^{(s)}}\right]\right)^n \\
%	 &= 
%	 \sum_{n=1}^{\infty} 
%	 \exp\left(-\frac{\theta\sigma^2}{h_e\rho(r_s)}
%	 (c+n\alpha\underline{\omega}_0) \right) 
%	 \exp\left(\frac{n\sigma^2\theta^2}{2}\right)\\ 
	 &=
	 \sum_{n=1}^{\infty} 
	 \exp\left(-\frac{\theta\sigma^2}{h_e\rho(r_s)}
	 (c+n\alpha\underline{\omega}_0) + 
	 \frac{n\sigma^2\theta^2}{2}\right)
\end{align*}
\normalsize
Since the above inequality holds for any $\theta > 0$, we have
\begin{align*}
 \pmeasure{ C_{t}^{(s)} \geq c }
	 &\leq 
	 \sum_{n=1}^{\infty} 
	 \min_{\theta>0}
	 \exp\left(-\frac{\theta\sigma^2}{h_e\rho(r_s)}
	 (c+n\alpha\underline{\omega}_0) + 
	 \frac{n\sigma^2\theta^2}{2}\right)
\end{align*}
The minimising  $\theta$ is
$\frac{c+n\alpha\underline{\omega}_0}{nh_e\rho(r_s)}$.
Therefore, for 
$\theta = \frac{c+n\alpha\underline{\omega}_0}{nh_e\rho(r_s)}$,
\begin{align*}
 \pmeasure{ C_{t}^{(s)} \geq c }
	 &\leq 
%	 \sum_{n=1}^{\infty} 
%	 \exp\left(-\frac{\sigma^2}{2nh^2_e\rho(r_s)^2}
%	 (c+n\alpha\underline{\omega}_0)^2 \right) \\
%	 &=
	 \sum_{n=1}^{\infty} 
	 \exp\left( \frac{-(c+n\alpha\underline{\omega}_0)^2}{4\alpha n}
	 \right). 
\end{align*}
\begin{align*}
\text{Note that} \ 
	 -\frac{(c+ \alpha \underline{\omega}_0 n)^2}{4\alpha n} 
	 +\frac{(c+ \alpha \underline{\omega}_0 (n-1))^2}{4\alpha (n-1)} 
	 &=
	 -\frac{\alpha\underline{\omega}_0^2}{4} + \frac{c^2}{4\alpha(n-1)n}
\end{align*}
Therefore, by iteratively computing the exponent, we have
\begin{align*}
	 \exp\left(-\frac{(c+\alpha\underline{\omega}_0n)^2}{4\alpha n}\right) 
	 &= 
	 %\exp\left(-\frac{(c+\alpha\underline{\omega}_0(n-1))^2}{4\alpha
	 %(n-1)}\right)\cdot 
	 %\exp\left(-\frac{\alpha \underline{\omega}_0^2}{4}\right) \exp\left(
	 %\frac{c^2}{4\alpha(n-1)n}\right) \\
	 %\text{or} \ 
	 %\exp\left(-\frac{(c+\alpha\underline{\omega}_0n)^2}{4\alpha n}\right) &= 
	 %\exp\left(-\frac{(c+\alpha\underline{\omega}_0)^2}{4\alpha}\right)\cdot 
	 %\exp\left(-\frac{\alpha \underline{\omega}_0^2}{4}(n-1)\right) \exp\left(
	 %\sum_{i=2}^n\frac{c^2}{4\alpha(i-1)i}\right) \\
	 %&= 
	 \exp\left(-\frac{(c+\alpha\underline{\omega}_0)^2}{4\alpha}\right)\cdot 
	 \exp\left(-\frac{\alpha \underline{\omega}_0^2}{4}(n-1)\right) \exp\left(
	 \frac{c^2}{4\alpha}\left(1 - \frac{1}{n}\right)\right) \\
	 &\leq 
	 \exp\left(-\frac{(c+\alpha\underline{\omega}_0)^2}{4\alpha}\right)\cdot 
	 \exp\left(-\frac{\alpha \underline{\omega}_0^2}{4}(n-1)\right) \exp\left(
	 \frac{c^2}{4\alpha}\right) \\
	 \text{or} \ 
	 \sum_{n=1}^{\infty}
	 \exp\left(-\frac{(c+\alpha\underline{\omega}_0n)^2}{4\alpha
	 n}\right) &\leq 
	 %\exp\left(-\frac{(c+\alpha\underline{\omega}_0)^2}{4\alpha}\right)\cdot 
	 %\frac{1}{1-\exp\left(-\frac{\alpha
	 %\underline{\omega}_0^2}{4}\right)} \exp\left(
	 %\frac{c^2}{4\alpha}\right) \\
	 %& = 
	 \exp\left(  -\frac{\underline{\omega}_0}{2}c\right)
	 \cdot 
	 \frac{\exp\left( - \frac{\alpha\underline{\omega}_0^2}{4}\right)}{1-\exp\left(-\frac{\alpha	 \underline{\omega}_0^2}{4}\right)}\\ 
	 &=: \beta
\end{align*}
%\end{proof}
\normalsize
%When $\underline{\omega}_0$ is increased, the upper bound on
%${\pfi}$, $\beta$ decreases. This is because, as $\underline{\omega}_0$
%increases, the influence range $\bar{R}$ increases. Thus the 
%false isolation is due to a much weaker signal $h_e\rho(d_{e,s})$, and
%hence, decreases the probability of false isolation. Also, it is to
%be noted that as $r_s$ decreases ($\alpha$ increases, and hence), the
%upper bound on ${\pfi}$, $\beta$ decreases. Smaller detection
%subregion translates to a smaller detection delay and hence, the
%probability of correct isolation increases when $r_s$ decreases.
%
%In all the analysis that follow, we consider
%the pre--change pdf $f_0 \sim {\cal N}(0,\sigma^2)$ and the post--change
%pdf $f_1 \sim {\cal
%N}(h_e\rho(r_s),\sigma^2)$.
%Also, the change point considered,
%as usual, is $T=1$. 

\subsection{${\sf PFI}(\tau^{\sf ALL})$ -- Path Loss Sensing Model}

\vspace{-9mm}

\begin{align*}
\pmeasure{
\tau^{\sf ALL,({\cal N}_j)} = t 
\mid \tau^{\myall, ({\cal N}_i)}=k
} 
&\leq  
\pmeasure{C_t^{(s)} \geq c, \forall s
\in {\cal N}_j 
\mid \tau^{\myall, ({\cal N}_i)}=k
} \nonumber\nn
&\leq  
\pmeasure{C_t^{(s)} \geq c, \forall s
\in {\cal N}_j\setminus {\cal N}(\ell_e) 
\mid \tau^{\myall, ({\cal N}_i)}=k
} \nonumber \nn
&= \prod_{s \in {\cal N}_j\setminus{\cal N}(\ell_e)}
\pmeasure{C_t^{(s)} \geq c
} \nonumber \nn
&\leq \beta^{|{\cal N}_j\setminus{\cal N}(\ell_e)|}
\ \ \ (\text{from Lemma~\ref{lem:path-bound}})\nn 
\text{Therefore}, \
\pmeasure{
\tau^{\sf ALL,({\cal N}_j)} \leq \tau^{\myall, ({\cal N}_i)}
}
	&\leq \beta^{{|{\cal N}_j\setminus{\cal N}(\ell_e)|}}
{\mathsf E}_1^{({\bf d}(\ell_{e}))}\left[\tau^{\myall, ({\cal N}_i)}
\right]
	\\ 
	&\leq \beta^{{|{\cal N}_j\setminus{\cal N}(\ell_e)|}}\frac{c}{\alpha|{\cal
	N}_i|}(1+o(1)) 
\end{align*}
Let $m = \underset{1 \leq i \leq N, \ell_e \in {\cal A}_i, 1 \leq j \leq N, {\cal N}_j
\not\subseteq {\cal N}(\ell_e)}{\min} |{\cal N}_j \setminus {\cal
N}(\ell_e)|$
%$\bar{m} = \underset{1 \leq i \leq N, \ell_e \in {\cal A}_i, 1 \leq j \leq N, {\cal N}_j
%\not\subseteq {\cal M}_e}{\max} |{\cal N}_j \setminus {\cal M}_e|$, 
and
$\underline{n}=\min\{|{\cal N}_i|:i=1,2,\cdots,N\}$.
Define $K = 
 \underset{1 \leq i \leq N, \ell_e \in {\cal A}_i, 1 \leq j \leq N, {\cal N}_j
\not\subseteq {\cal N}(\ell_e)}{\max} 
\left[\frac{\exp\left(-\frac{\alpha\underline{\omega}_0^2}{4} \right)}
{1-\exp\left(-\frac{\alpha\underline{\omega}_0^2}{4}\right)}\right]^
{ |{\cal N}_j \setminus {\cal N}(\ell_e)|} $. 
%\begin{align*}
%K &:=\left\{
%\begin{array}{ll}
%\frac{\exp\left(-\frac{\alpha\underline{\omega}_0^2}{4} \right)}
%{1-\exp\left(-\frac{\alpha\underline{\omega}_0^2}{4}\right)},
% & \text{if} \ \frac{\exp\left(-\frac{\alpha\underline{\omega}_0^2}{4} \right)}
%{1-\exp\left(-\frac{\alpha\underline{\omega}_0^2}{4}\right)} \leq 1 \\ 
%\left(\frac{\exp\left(-\frac{\alpha\underline{\omega}_0^2}{4} \right)}
%{1-\exp\left(-\frac{\alpha\underline{\omega}_0^2}{4}\right)}
%\right)^{\bar{m}}, & \text{otherwise}
%\end{array}
%\right.
%\end{align*}
Therefore,
\begin{align*}
{\pfi}\left(\tau^{\sf ALL}\right)
&\leq \ \max_{1 \leq i \leq N} \ 
\sup_{\ell_e \in {\cal A}_i} \ 
\max_{1 \leq j \leq N, {\cal N}_j \not\subseteq{\cal N}(\ell_i)} \ 
\pmeasure{
\tau^{\sf ALL,({\cal N}_j)} \leq \tau^{\myall, ({\cal N}_i)} 
} \\ 
&\leq   \frac{K \exp\left(-
\left(\frac{m\underline{\omega}_0}{2}c-\ln(c)\right) \right)}{\alpha
\underline{n}} (1+o(1)).
\end{align*}
For any $n$ there exists $c_0(n)$ such that for all $c
> c_0(n), c < e^{c/n}$. Hence, for sufficiently large $c$
\begin{align*}
{\pfi}\left(\tau^{\sf ALL}\right)
&\leq   \frac{K \exp\left(-
\left(\frac{m\underline{\omega}_0}{2}-\frac{1}{n}\right)c \right)}{\alpha
\underline{n}} (1+o(1)) \ = \ 
 \frac{\exp(-b_{\myall,d} \cdot c)}{B_{\sf \myall,d}} (1+o(1)) 
\end{align*}
where $b_{\myall,d}=(m\underline{\omega}_0/2) -(1/n)$ and 
$B_{\myall,d}= \alpha\underline{n}/K$.

\subsection{${\sf PFI}(\tau^{\sf MAX})$ -- Path Loss Sensing Model}

\vspace{-9mm}

\begin{align*}
\pmeasure{\tau^{\sf MAX,({\cal N}_j)} = t 
\mid \tau^{\mymax, ({\cal N}_i)}=k
} 
&\leq 
\pmeasure{\tau^{(s)} \leq t,  \forall \ s \in {\cal N}_j \setminus{\cal N}(\ell_e) 
\mid \tau^{\mymax, ({\cal N}_i)}=k
} \nonumber \\
%&\leq \sum_{t=1}^k \
%{\mathsf P}_1^{(\ell_{e,i})}
%\left\{\tau^{(s)} \leq t,  \forall \ s \in {\cal N}_j\setminus{\cal M}_e 
%\mid \tau^{\mymax, ({\cal N}_i)}=k
%\right\} \nonumber \\
&=
\prod_{s \in {\cal N}_j\setminus{\cal N}(\ell_e)} \
\pmeasure{\tau^{(s)} \leq t
\mid \tau^{\mymax, ({\cal N}_i)}=k
} \nonumber \\
%&= \sum_{t=1}^k \
%\prod_{s \in {\cal N}_j\setminus{\cal M}_e} \
%\sum_{n=1}^t \
%{\mathsf P}_1^{(\ell_{e,i})}
%\left\{\tau^{(s)} =n
%\mid \tau^{\mymax, ({\cal N}_i)}=k
%\right\} \nonumber \\
%&\leq \sum_{t=1}^k \
%\prod_{s \in {\cal N}_j\setminus{\cal M}_e} \
%\sum_{n=1}^t \
%{\mathsf P}_1^{(\ell_{e,i})}
%\left\{C_n^{(s)} \geq c
%\mid \tau^{\mymax, ({\cal N}_i)}=k
%\right\} \nonumber \\
&\leq
\prod_{s \in {\cal N}_j\setminus{\cal N}(\ell_e)} \
\sum_{n=1}^t \
\pmeasure{C_n^{(s)} \geq c} \nonumber \\
&\leq 
{\beta}^{|{\cal N}_j\setminus{\cal N}(\ell_e)|}\cdot 
t^{|{\cal N}_j\setminus{\cal N}(\ell_e)|}  
\ \ \ (\text{from Lemma~\ref{lem:path-bound}})\nn
\pmeasure{\tau^{\sf MAX,({\cal N}_j)} \leq \tau^{\mymax, ({\cal N}_i)}}
&\leq 
{\beta}^{|{\cal N}_j\setminus{\cal N}(\ell_e)|}\cdot 
{\mathsf E}_1^{({\bf d}(\ell_{e}))}\left[(\tau^{\mymax,({\cal N}_i)})^
{1+|{\cal N}_j\setminus{\cal N}(\ell_e)|}  
	\right] \nonumber \\
&\leq 
{\beta}^{|{\cal N}_j\setminus{\cal N}(\ell_e)|}\cdot 
\frac{c^{1+
|{\cal N}_j\setminus{\cal N}(\ell_e)|
}
}{\alpha^{1+
|{\cal N}_j\setminus{\cal N}(\ell_e)|
}}(1+o(1))
\end{align*}
\begin{comment}
where in step $(a)$, we used 
%the following relation 
%${\sf E}{\tau^{m}} \sim \frac{c^m}{KL(\cdot,\cdot)^m}$ from
Theorem 6 in 
%page 482 
\cite{tartakovsky-veeravalli05general-asymptotic-quickest-change}. 
\end{comment}
Let $m = \underset{1 \leq i \leq N, \ell_e \in {\cal A}_i, 1 \leq j \leq N, {\cal N}_j
\not\subseteq {\cal N}(\ell_e)}{\min} 
|{\cal N}_j\setminus{\cal N}(\ell_e)|
$,
$\bar{m} = \underset{1 \leq i \leq N, \ell_e \in {\cal A}_i, 1 \leq j \leq N, {\cal N}_j
\not\subseteq {\cal M}_e}{\max} 
|{\cal N}_j \setminus {\cal N}(\ell_e)| 
$, and
define $K = 
 \underset{1 \leq i \leq N, \ell_e \in {\cal A}_i, 1 \leq j \leq N, {\cal N}_j
\not\subseteq {\cal N}(\ell_e)}{\max} 
\left[\frac{\exp\left(-\frac{\alpha\underline{\omega}_0^2}{4} \right)}
{1-\exp\left(-\frac{\alpha\underline{\omega}_0^2}{4}\right)}\right]^
{ |{\cal N}_j \setminus {\cal N}(\ell_e)|} $. 
Therefore,
\begin{align*}
{\sf PFI}(\tau^{\mymax}) 
&\leq \ \max_{1 \leq i \leq N} \ 
\sup_{\ell_e \in {\cal A}_i} \ 
\max_{1 \leq j \leq N, {\cal N}_j \not\subseteq{\cal N}(\ell_e)} \ 
\pmeasure{
\tau^{\sf MAX,({\cal N}_j)} \leq \tau^{\mymax, ({\cal N}_i)}
} \nonumber\\
&\leq   \frac{K}{\alpha^*} \exp\left(-
\left(\frac{m\underline{\omega}_0}{2}c-(1+\bar{m})\ln(c)\right)
\right)(1+o(1)).
\end{align*}
where $\alpha^* = \underset{1 \leq i \leq N, \ell_e \in {\cal A}_i, 1  \leq j \leq N, {\cal N}_j
\not\subseteq {\cal N}(\ell_e)}{\min} \alpha^{1+
|{\cal N}_j \setminus {\cal N}(\ell_e)|} 
$.
For any $n$ there exists $c_0(n)$ such that for all $c
> c_0(n), c < e^{c/n}$. Hence, for sufficiently large $c$
\begin{align*}
{\sf PFI}(\tau^{\mymax}) 
&\leq  \frac{K}{\alpha^*} \exp\left(-
\left(\frac{m\underline{\omega}_0}{2}-\frac{1+\bar{m}}{n}\right)c
\right)(1+o(1)) \ = \
 \frac{\exp(-b_{\mymax,d} \cdot c)}{
 B_{\sf \mymax,d} 
	 } (1+o(1)) ,
\end{align*}
where 
$b_{\sf \mymax,d}=
(\frac{m\underline{\omega}_0}{2})-(\frac{1+\bar{m}}{n})$ and
$B_{\mymax,d} = \frac{\alpha^*}{K}$.

\subsection{${\sf PFI}(\tau^{\sf HALL})$ -- Path Loss Sensing Model}

\vspace{-9mm}

\begin{align*}
\pmeasure{\tau^{\sf HALL,({\cal N}_j)} = t 
\mid \tau^{\myhall, ({\cal N}_i)}=k 
}
&\leq
\pmeasure{\tau^{(s)} \leq t,  \forall \ s \in {\cal N}_j \setminus {\cal
N}(\ell_e) 
\mid \tau^{\myhall, ({\cal N}_i)}=k
} 
\end{align*}
which has the same form as that of ${\sf MAX}$. Hence, from the analysis
of ${\sf MAX}$, it follows that
\begin{align*}
\pmeasure{
	\tau^{\sf HALL,({\cal N}_j)} \leq \tau^{\myhall, ({\cal N}_i)} 
	}
&\leq 
\beta^{
|{\cal N}_j\setminus{\cal N}(\ell_e)|}
{\mathsf E}_1^{({\bf d}(\ell_e))}\left[(\tau^{\myhall,({\cal N}_i)})^{
 1+ |{\cal N}_j\setminus{\cal N}(\ell_e)|
	} 
	\right] \nonumber \\
&\leq 
\beta^{
|{\cal N}_j\setminus{\cal N}(\ell_e)|
}
\frac{c^{1+
|{\cal N}_j\setminus{\cal N}(\ell_e)|
}}{
	(\alpha|{\cal N}_i|)^{1+
|{\cal N}_j\setminus{\cal N}(\ell_e)|
	}}(1+o(1)) \nn
\text{Therefore,} \
{\sf PFI}(\tau^{\myhall})
&\leq \ \max_{1 \leq i \leq N} \ 
\sup_{\ell_e \in {\cal A}_i} \ 
\max_{1 \leq j \leq N, {\cal N}_j \not\subseteq{\cal N}(\ell_e)} \ 
\pmeasure{
	\tau^{\sf HALL,({\cal N}_j)} \leq \tau^{\myhall, ({\cal N}_i)}
} \nonumber\\
&\leq   \frac{K}{\alpha^*} \exp\left(-
\left(\frac{m\underline{\omega}_0}{2}c-(1+\bar{m})\ln(c)\right)
\right)(1+o(1)).
\end{align*}
\begin{align*}
\text{Therefore for large $c$}, \pfi
&\leq   \frac{K}{\alpha^*} \exp\left(-
\left(\frac{m\underline{\omega}_0}{2}-\frac{1+\bar{m}}{n}\right)c
\right)(1+o(1)) \ = \ 
 \frac{ \exp(-b_{\myhall,d} \cdot c)}
 {B_{\sf \myhall,d}}
 (1+o(1)), 
\end{align*}
where $\alpha^* = \underset{1 \leq i \leq N, \ell_e \in {\cal A}_i, 1 \leq j \leq N, {\cal N}_j
\not\subseteq {\cal N}(\ell_e)}{\min} \left(\alpha \cdot|{\cal
N}_i|\right)^{1+
|{\cal N}_j\setminus{\cal N}(\ell_e)|
}$,
$b_{\sf \myhall,d} =
(m\underline{\omega}_0/2)-(1+\bar{m})/n$, and
$B_{\myhall,d} = \alpha^*/K$. 
%----------------------------------------------------------------------|

\vspace{-5mm}

\section{$\add$ for the Boolean and the Path loss Models}
\label{app:sadd}
Fix $i, 1 \leq i \leq N$.  For each change time $T \geq 1$, define
$\mathcal{F}_T = \sigma(X^{(s)}_k, s \in \mathcal{N}, 1 \leq k \leq
T),$ and for $\ell_e \in {\cal A}_i$, $\mathcal{F}^{(i)}_T = \sigma(X^{(s)}_k, s \in
\mathcal{N}_i, 1 \leq k \leq T)$. From
\cite{stat-sig-proc.mei05information-bounds} (Theorem 3, Eqn.\ (24)), 
\begin{eqnarray} \label{eqn:mei05-thm3-24}
  \mathrm{ess} \sup \mathsf{E}^{({\bf d}(\ell_e))}_T 
               \left( (\tau^{{\sf rule},({\cal N}_i)} - T)^+ | \mathcal{F}^{(i)}_{(T-1)} \right) \leq
               \frac{c}{I} (1 + o(1)), \ 
			    \text{as $c \to \infty$}, 
\end{eqnarray}
Define
$\mathcal{F}_{\{\tau^{{\sf rule},({\cal N}_i)} \geq T\}}$ as the $\sigma$-field generated
by the event $\{\tau^{{\sf rule},({\cal N}_i)} \geq T\}$, and similarly define the
$\sigma$-field $\mathcal{F}_{\{\tau^{\sf rule}  \geq T\}}.$ Evidently
%\begin{eqnarray*}
 $ \mathcal{F}_{\{\tau^{{\sf rule},(i)}  \geq T\}} \subset \mathcal{F}^{(i)}_{(T-1)} 
    \ \ \mathrm{and} \ \ 
  \mathcal{F}_{\{\tau^{\sf rule}  \geq T\}} \subset \mathcal{F}_{(T-1)}$.
%\end{eqnarray*}
By iterated conditional expectation, 
\begin{eqnarray} 
\label{eqn:bound_by_esssup}
  \mathsf{E}^{({\bf d}(\ell_e))}_T \left( (\tau^{{\sf rule},({\cal N}_i)} - T)^+ |
  \mathcal{F}_{\{\tau^{\sf rule}  \geq T\}} \right) 
  &\leq&
  \mathrm{ess} \sup \mathsf{E}^{({\bf d}(\ell_e))}_T 
  \left( (\tau^{{\sf rule},({\cal N}_i)} - T)^+ | \mathcal{F}_{(T-1)} \right) 
\end{eqnarray}
We can further assert that
\begin{eqnarray*}
  \mathsf{E}^{({\bf d}(\ell_e))}_T 
  \left( (\tau^{{\sf rule},({\cal N}_i)} - T)^+ | \mathcal{F}_{(T-1)} \right) 
  \stackrel{\mathrm{a.s.}}{=}
  \mathsf{E}^{({\bf d}(\ell_e))}_T 
  \left( (\tau^{{\sf rule},({\cal N}_i)} - T)^+ | \mathcal{F}^{(i)}_{(T-1)} \right)
\end{eqnarray*}
Using this observation with Eqn.~\ref{eqn:bound_by_esssup} and
Eqn.~\ref{eqn:mei05-thm3-24}, we can write, as $c \to \infty$, 
\begin{eqnarray} \label{eqn:bound_derived_from_mei05}
   \mathsf{E}^{({\bf d}(\ell_e))}_T \left( (\tau^{{\sf rule},({\cal N}_i)} - T)^+ |
   \mathcal{F}_{\{\tau^{\sf rule}  \geq T\}} \right)
   \leq \frac{c}{I} (1 + o(1))
\end{eqnarray}
 Finally, 
%\begin{eqnarray*}
$  \mathsf{E}^{({\bf d}(\ell_e))}_T \left( (\tau^{{\sf rule},({\cal N}_i)} - T)^+ | \tau^{\sf
rule}  \geq T \right) I_{\{ \tau^{\sf rule}  \geq T \}}
  \stackrel{\mathrm{a.s.}}{=}
  \mathsf{E}^{({\bf d}(\ell_e))}_T \left( (\tau^{{\sf rule},({\cal N}_i)} - T)^+ |
  \mathcal{F}_{\{\tau^{\sf rule}  \geq T\}} \right)
                                                                       I_{\{
																		   \tau^{\sf
																		   rule}
																		   \geq
																		   T
																		   \}}$.
%\end{eqnarray*}
We conclude, from \ref{eqn:bound_derived_from_mei05}, that, as $c \to
\infty$,
%\begin{eqnarray*}
 $ \mathsf{E}^{({\bf d}(\ell_e))}_T \left( (\tau^{{\sf rule},({\cal N}_i)} - T)^+ | \tau^{\sf
 rule}  \geq T \right) 
  \leq  \frac{c}{I} (1 + o(1))$.  
%\end{eqnarray*}

% if have a single appendix:
%\appendix[Proof of the Zonklar Equations]
% or
%\appendix  % for no appendix heading
% do not use \section anymore after \appendix, only \section*
% is possibly needed

% use appendices with more than one appendix
% then use \section to start each appendix
% you must declare a \section before using any
% \subsection or using \label (\appendices by itself
% starts a section numbered zero.)
%

% you can choose not to have a title for an appendix
% if you want by leaving the argument blank

% use section* for acknowledgement
%\section*{Acknowledgment}

% Can use something like this to put references on a page
% by themselves when using endfloat and the captionsoff option.
\ifCLASSOPTIONcaptionsoff
  \newpage
\fi

% trigger a \newpage just before the given reference
% number - used to balance the columns on the last page
% adjust value as needed - may need to be readjusted if
% the document is modified later
%\IEEEtriggeratref{8}
% The "triggered" command can be changed if desired:
%\IEEEtriggercmd{\enlargethispage{-5in}}

% references section

% can use a bibliography generated by BibTeX as a .bbl file
% BibTeX documentation can be easily obtained at:
% http://www.ctan.org/tex-archive/biblio/bibtex/contrib/doc/
% The IEEEtran BibTeX style support page is at:
% http://www.michaelshell.org/tex/ieeetran/bibtex/
\bibliographystyle{IEEEtran}
% argument is your BibTeX string definitions and bibliography database(s)
\bibliography{IEEEabrv,premkumar-etal11detection-isolation-large-wsn}  

% Generated by IEEEtran.bst, version: 1.12 (2007/01/11)
\begin{thebibliography}{10}
\providecommand{\url}[1]{#1}
\csname url@samestyle\endcsname
\providecommand{\newblock}{\relax}
\providecommand{\bibinfo}[2]{#2}
\providecommand{\BIBentrySTDinterwordspacing}{\spaceskip=0pt\relax}
\providecommand{\BIBentryALTinterwordstretchfactor}{4}
\providecommand{\BIBentryALTinterwordspacing}{\spaceskip=\fontdimen2\font plus
\BIBentryALTinterwordstretchfactor\fontdimen3\font minus
  \fontdimen4\font\relax}
\providecommand{\BIBforeignlanguage}[2]{{%
\expandafter\ifx\csname l@#1\endcsname\relax
\typeout{** WARNING: IEEEtran.bst: No hyphenation pattern has been}%
\typeout{** loaded for the language `#1'. Using the pattern for}%
\typeout{** the default language instead.}%
\else
\language=\csname l@#1\endcsname
\fi
#2}}
\providecommand{\BIBdecl}{\relax}
\BIBdecl

\bibitem{nikiforov95change_isolation}
I.~V. Nikiforov, ``A generalized change detection problem,'' \emph{IEEE
  Transactions on Information theory}, vol.~41, no.~1, pp. 171--187, Jan 1995.

\bibitem{nikiforov03lower-bound-for-det-isolation}
I.~Nikiforov, ``A lower bound for the detection/isolation delay in a class of
  sequential tests,'' \emph{{IEEE} Trans. Inf. Theory}, vol.~49, no.~11, pp.
  3037 -- 3047, Nov. 2003.

\bibitem{tartakovsky08multi-decision}
A.~G. Tartakovsky, ``Multidecision quickest changepoint detection: Previous
  achievements and open problems,'' \emph{Sequential Analysis}, vol.~27, no.~2,
  pp. 201--231, 2008.

\bibitem{shiryayev63}
A.~N. Shiryaev, ``On optimum methods in quickest detection problems,''
  \emph{Theory of Probability and its Applications}, vol.~8, no.~1, pp. 22--46,
  1963.

\bibitem{stat-sig-proc.page54continuous-inspection-schemes}
E.~S. Page, ``Continuous inspection schemes,'' \emph{Biometrika}, vol.~41, no.
  1/2, pp. 100--115, June 1954.

\bibitem{stat-sig-proc.tartakovsky-veeravalli03quickest-change-detection}
A.~G. Tartakovsky and V.~V. Veeravalli, ``Quickest change detection in
  distributed sensor systems,'' in \emph{Sixth International Conference of
  Information Fusion}, vol.~2, 2003, pp. 756--763.

\bibitem{premkumar_etal10det_over_mac}
K.~Premkumar, V.~K. Prasanthi, and {Anurag Kumar}, ``Delay optimal event
  detection on ad hoc wireless sensor networks,'' \emph{ACM Transactions on
  Sensor Networks}, to appear.

\bibitem{premkumar_kumar08infocom}
K.~Premkumar and {Anurag Kumar}, ``Optimal sleep-wake scheduling for quickest
  intrusion detection using sensor networks,'' in \emph{Proc. IEEE Infocom},
  Arizona, USA, Apr. 2008.

\bibitem{stat-sig-proc.mei05information-bounds}
Y.~Mei, ``Information bounds and quickest change detection in decentralized
  decision systems,'' \emph{{IEEE} Trans. Inf. Theory}, vol.~51, no.~7, pp.
  2669--2681, Jul. 2005.

\bibitem{lai00multi-hypothesis-testing}
T.~L. Lai, ``Sequential multiple hypothesis testing and efficient fault
  detection-isolation in stochastic systems,'' \emph{{IEEE} Trans. Inf.
  Theory}, vol.~46, no.~2, pp. 595--608, Mar. 2000.

\bibitem{malladi_speyer99shiryayev_isolation}
D.~P. Malladi and J.~L. Speyer, ``A generalized {S}hiryayev sequential
  probability ratio test for change detection and isolation,'' \emph{IEEE
  Transactions on Automatic Control}, vol.~44, no.~8, pp. 1522--1534, Aug 1999.

\bibitem{agt-vvv08}
A.~G. Tartakovsky and V.~V. Veeravalli, ``Asymptotically optimal quickest
  change detection in distributed sensor systems,'' \emph{Sequential Analysis},
  vol.~27, no.~4, pp. 441--475, Oct. 2008.

\bibitem{solis-etal06time-synch}
R.~Solis, V.~S. Borkar, and P.~R. Kumar, ``A new distributed time
  synchronization protocol for multihop wireless networks,'' in \emph{45th IEEE
  Conference on Decision and Control (CDC'06)}, December 2006.

\bibitem{tartakovsky-report}
A.~S. Polunchenko and A.~G. Tartakovsky, ``Sequential detection of transient
  changes in statistical distributions: A case study,'' CAMS, USC, Tech. Rep.,
  2009.

\bibitem{premkumar_etal10iwap}
K.~Premkumar, {Anurag Kumar}, and V.~V. Veeravalli, ``Bayesian quickest
  transient change detection,'' in \emph{Proc. International Workshop on
  Applied Probability}, Colmenarejo, Spain, Jul. 2010.

\bibitem{stat-sig-proc.tartakovsky-polunchenko08change-det-with-unknown-param}
A.~G. Tartakovsky and A.~S. Polunchenko, ``Quickest changepoint detection in
  distributed multisensor systems under unknown parameters,'' in \emph{11th
  International Conference of Information Fusion}, Germany, July 2008.

\bibitem{sensing-models}
B.~Liu and D.~Towsley, ``A study of the coverage of large-scale sensor
  networks,'' in \emph{IEEE International Conference on Mobile Ad-hoc and
  Sensor Systems}, 2004.

\bibitem{moustakides86optimal-stopping-times}
G.~V. Moustakides, ``Optimal stopping times for detecting changes in
  distributions,'' \emph{Annals of Statistics}, vol.~14, no.~4, pp. 1379--1387,
  1986.

\bibitem{pollak85}
M.~Pollak, ``Optimal detection of a change in distribution,'' \emph{Ann.
  Statist.}, pp. 206--227, 1985.

\bibitem{mandar-etalfusum}
M.~M. Nadgir, K.~Premkumar, {Anurag Kumar}, and J.~Kuri, ``Cusum based
  distributed detection in wsns,'' in \emph{Proc. Managing Complexity in a
  Distributed World (MCDES), an IISc Centenary conference}, Bangalore, India,
  May 2008.

\bibitem{thuli09thesis}
M.~Tuli, ``Design and analysis of distributed local alarm fusion in a wireless
  sensor network,'' Master's thesis, Indian Institute of Science, 2009.

\bibitem{naveen-kumar10geo-forwarding}
K.~P. Naveen and {Anurag Kumar}, ``Tunable locally-optimal geographical
  forwarding in wireless sensor networks with sleep-wake cycling nodes.'' in
  \emph{Proc. IEEE INFOCOM}, San Diego, CA, USA, Mar. 2010, pp. 920--928.

\bibitem{wsn10smart-detect}
{Anurag Kumar} and et~al., ``Wireless sensor networks for human intruder
  detection,'' \emph{Journal of the Indian Institute of Science, Special issue
  on Advances in Electrical Sciences}, vol.~90, no.~3, Jul.-Sep. 2010.

\bibitem{basseville-nikiforov93detection}
\BIBentryALTinterwordspacing
M.~Basseville and I.~V. Nikiforov, \emph{Detection of Abrupt Changes: Theory
  and Application}.\hskip 1em plus 0.5em minus 0.4em\relax Englewood Cliffs,
  NJ: Prentice Hall, 1993. [Online]. Available:
  \url{citeseer.ist.psu.edu/article/basseville93detection.html}
\BIBentrySTDinterwordspacing

\bibitem{stat-sig-proc.tartakovsky-kim06decentralized}
A.~G. Tartakovsky and H.~Kim, ``Performance of certain decentralized
  distributed change detection procedures,'' in \emph{9th International
  Conference of Information Fusion}, July 2006.

\bibitem{stat-sig-proc.lorden71procedures-change-distribution}
G.~Lorden, ``Procedures for reacting to a change in distribution,'' \emph{The
  Annals of Mathematical Statistics}, vol.~42, no.~6, pp. 1897--1908, December
  1971.

\end{thebibliography}
%
% <OR> manually copy in the resultant .bbl file
% set second argument of \begin to the number of references
% (used to reserve space for the reference number labels box)
% biography section
% 
% If you have an EPS/PDF photo (graphicx package needed) extra braces are
% needed around the contents of the optional argument to biography to prevent
% the LaTeX parser from getting confused when it sees the complicated
% \includegraphics command within an optional argument. (You could create
% your own custom macro containing the \includegraphics command to make things
% simpler here.)
%\begin{biography}[{\includegraphics[width=1in,height=1.25in,clip,keepaspectratio]{mshell}}]{Michael Shell}
% or if you just want to reserve a space for a photo:

%\begin{IEEEbiography}{Michael Shell}
%Biography text here.
%\end{IEEEbiography}

% if you will not have a photo at all:
%\begin{IEEEbiographynophoto}{John Doe}
%Biography text here.
%\end{IEEEbiographynophoto}

% insert where needed to balance the two columns on the last page with
% biographies
%\newpage

%\begin{IEEEbiographynophoto}{Jane Doe}
%Biography text here.
%\end{IEEEbiographynophoto}

% You can push biographies down or up by placing
% a \vfill before or after them. The appropriate
% use of \vfill depends on what kind of text is
% on the last page and whether or not the columns
% are being equalized.

%\vfill

% Can be used to pull up biographies so that the bottom of the last one
% is flush with the other column.
%\enlargethispage{-5in}

% that's all folks
\end{document}